\documentclass[11pt,a4paper]{article}
\usepackage[utf8]{inputenc}
\usepackage[T1]{fontenc}
\usepackage{amsmath,amssymb,amsthm}
\usepackage{hyperref}
\usepackage{cleveref}
\usepackage{mathrsfs}
\usepackage{enumitem}
\usepackage{geometry}

\usepackage{tikz} \usepackage{booktabs} \usepackage{subcaption}

\numberwithin{equation}{section}

\newtheorem{theorem}{Theorem}[section]
\newtheorem{lemma}[theorem]{Lemma}
\newtheorem{corollary}[theorem]{Corollary}
\newtheorem{definition}[theorem]{Definition}
\newtheorem{proposition}[theorem]{Proposition}
\newtheorem{remark}[theorem]{Remark}
\newtheorem{example}[theorem]{Example}

\title{Self-extensional logics of formal inconsistency: Decidability and limits for paraconsistency}
\author{Marcelo E. Coniglio$^{1,2}$ and H\'ector Federico Mallea$^{1}$}
\date{
    $^\textup{\scriptsize 1}$\textit{\small Centre for Logic, Epistemology and the History of Science (CLE),}
    \\ 
    \textit{\small  University of Campinas (UNICAMP), Campinas, Brazil}
    \\
    $^\textup{\scriptsize 2}$\textit{\small Institute of Philosophy and the Humanities (IFCH),}
    \\ 
    \textit{\small  University of Campinas (UNICAMP), Campinas, Brazil}
   \\
   {\small e-mails: \texttt{coniglio@unicamp.br}; \ \  \texttt{hfmallea@gmail.com}}
   }

\begin{document}
\maketitle

\begin{abstract}
$\mathrm{RmbC}$ is a self-extensional paraconsistent logic in the family of Logics of Formal Inconsistency (LFIs). This system is obtained from $\mathrm{mbC}$ (the basic LFI) by adding the replacement property via two global inference rules. $\mathrm{RmbC}$ is characterized by a non-explosive negation $\neg$ and a consistency operator $\circ$, which recovers the principle of explosion in a controlled way. Together with its principal axiomatic extensions, $\mathrm{RmbC}$ admits a standard Lindenbaum--Tarski algebraization, with Boolean algebras with LFI operators (BALFIs) as its algebraic semantics.

In this paper, we study how far this self-extensional paraconsistent behavior can be extended axiomatically, starting from $\mathrm{RmbC}$. We classify pairs of very natural consistency axioms according to whether they preserve paraconsistency or force classical collapse; identify six algebraically equivalent explosive cores; and isolate a separate structural obstruction for the combination of excluded middle for $\neg$ with an involutive negation.

We also investigate, for the first time, the decidability of this family of self-extensional LFIs. As a first result, we prove the finite model property for $\mathrm{RmbC}$ with respect to BALFI semantics via an algebraic filtration, which yields decidability, and transfer this result to several paraconsistent axiomatic extensions of $\mathrm{RmbC}$.  Finally, we establish a 2-\textsf{EXPTIME} upper bound for the validity problem of $\mathrm{RmbC}$ and a \textsf{coNP}-hardness lower bound.
\end{abstract}

\section{Introduction}
\label{sec:introduction}

Reasoning in the presence of inconsistent information is a 
fundamental problem in logic. Paraconsistent systems were developed 
to avoid triviality in the presence of contradictions while still 
allowing controlled inference. Among them, the Logics of Formal 
Inconsistency (LFIs)~\cite{CM,CarnielliConiglioMarcos2007,CarnielliConiglio2016a} occupy a distinguished place because they internalize the notions of 
consistency and inconsistency at the object-language level, making it 
possible to distinguish harmless contradictions from explosive ones.

A major advance in the study of LFIs was the introduction of the 
self-extensional logic $\mathrm{RmbC}$~\cite{CarnielliConiglioFuenmayor2022} 
and its algebraic counterpart, the variety of Boolean algebras with 
LFI operators (BALFIs). Unlike most systems in the LFI hierarchy, 
$\mathrm{RmbC}$ satisfies the replacement property: equivalence is 
preserved by all connectives, which allows for a standard 
Lindenbaum--Tarski algebraization, something known to be impossible 
for da Costa's $C_1$ and every logic in his hierarchy weaker than it. 
The same methodology extends to a family of axiomatic enrichments of 
$\mathrm{RmbC}$.

This paper investigates how far this self-extensional, paraconsistent 
behavior can be pushed, along three complementary axes.

First, we ask which pairs of the main consistency axioms preserve paraconsistency and which force a collapse to classical logic. We identify six minimal explosive pairs algebraically, and use a computer-assisted by Mace4 and Python finite-model search to obtain finite paraconsistent witnesses for the remaining pairs, with the exception of $(\mathrm{ce},\mathrm{cf})$, for which finite paraconsistent models are impossible (although an infinite paraconsistent model exists). Thus, the pairwise behavior of the fourteen principal axioms is completely determined.

Within this classification, we isolate a weakening of the axioms defining $C_1$ that remains paraconsistent under replacement --- in contrast to the self-extensional version of $C_1$ itself, which collapses to classical logic, as shown in~\cite{CarnielliConiglioFuenmayor2022}.

Second, this algebraic development raises a natural computational 
question, which remained unexplored up to this point, namely: whether $\mathrm{RmbC}$ and its principal extensions are decidable. This is a central issue, since decidability provides an effective procedure for theoremhood, clarifies proof search, and 
indicates whether these logics can be handled algorithmically in 
applications. We answer this question by proving the finite model 
property for $\mathrm{RmbC}$ with respect to BALFI semantics via an 
algebraic filtration, and by transferring this result to thirteen of 
the fourteen principal consistency extensions via a single general 
construction.

Third, we briefly study the computational complexity of the validity problem 
for $\mathrm{RmbC}$, improving the exponential bound obtained directly 
from the filtration and establishing a matching lower bound.

The paper is organized as follows. Section~\ref{sec:balfi} 
(Preliminaries) recalls the background on LFIs, $\mathrm{RmbC}$, and 
BALFIs, and introduces the fourteen principal consistency axioms 
considered throughout. Section~\ref{sec:classification} carries out 
the classification described above, including the case of $C_1$. 
Section~\ref{sec:fmp} proves the finite model property and 
decidability for $\mathrm{RmbC}$. Section~\ref{sec:transfer} transfers 
this result to the principal extensions and discusses the boundary 
case of $(\mathrm{ce})$. Section~\ref{sec:complexity} establishes the 
complexity bounds. Section~\ref{sec:conclusion} summarizes the results 
and discusses directions for further work.
\vspace{2mm}
\section{Preliminaries}
\subsection{Notation and basic setting}

Throughout the paper, $\Theta$ denotes a propositional signature, and
$\mathrm{For}(\Theta)$ the absolutely free formula algebra over $\Theta$
generated by a fixed countably infinite set of propositional variables
$V=\{p_n : n\in \mathbb{N}\}$.
As usual, a logic will be identified with its Hilbert calculus presentation,
unless explicitly stated otherwise.

We shall consider the following signatures (for classical positive logic, Boolean algebras and LFIs, respectively):
\[
\Sigma_{+}=\{\wedge,\vee,\to\},
\qquad
\Sigma_{\mathrm{BA}}=\{\wedge,\vee,\to,\bar 0,\bar 1\},
\qquad
\Sigma=\{\wedge,\vee,\to,\neg,\circ\}.
\]

If $\varphi$ is a formula, we write $\mathrm{Var}(\varphi)$ for the set of
propositional variables occurring in $\varphi$, and $\mathrm{Sub}(\varphi)$
for the set of subformulas of $\varphi$. For a set $\Gamma$ of formulas,
$\mathrm{Var}(\Gamma)$ and $\mathrm{Sub}(\Gamma)$ are defined in the usual way.

Given a consequence relation $\vdash$, we write $\Gamma\vdash\varphi$ to
indicate that $\varphi$ is derivable from $\Gamma$, and simply $\vdash\varphi$
when $\varphi$ is a theorem. When the underlying logic is clear from the
context, subscripts will be omitted.

The algebraic semantics considered in this paper are based on Boolean algebras
expanded with additional unary operators interpreting the non-classical
connectives. By a harmless abuse of notation, an algebra and its underlying
set will occasionally be denoted by the same symbol.

\subsection{Logics of Formal Inconsistency and the logic RmbC}

Paraconsistency studies logical systems that can accommodate contradictory information without collapsing into triviality. Among such systems, the \emph{Logics of Formal Inconsistency} (LFIs), introduced in~\cite{CM} and further developed in~\cite{CarnielliConiglioMarcos2007, CarnielliConiglio2016a}, occupy a distinguished place, since they internalize at the object-language level the notions of consistency and inconsistency. This is achieved by means of a non-explosive negation together with a consistency operator that allows one to recover explosion in a controlled way.

\begin{definition}
Let $L=\langle \Theta,\vdash\rangle$ be a Tarskian, finitary, and structural logic over a propositional signature $\Theta$ containing a negation $\neg$, and let $\circ$ be a primitive or derived unary connective. Then $L$ is said to be a \emph{Logic of Formal Inconsistency} (LFI) with respect to $\neg$ and $\circ$ if the following conditions hold:
\begin{enumerate}
    \item $\varphi, \neg\varphi \not\vdash \psi$ for some $\varphi$ and $\psi$;
    \item $\circ\varphi, \varphi, \neg\varphi \vdash \psi$ for every $\varphi$ and $\psi$;
    \item there are two formulas $\alpha$ and $\beta$ such that
    \begin{enumerate}
        \item $\circ\alpha, \alpha 
        \not\vdash \beta$;
        \item $\circ\alpha, \neg\alpha \not\vdash \beta$.
    \end{enumerate}
\end{enumerate}
\end{definition}

Condition (1) above signals the non-validity of the classical principle of explosion. Condition (2), also called principle of gentle explosion, characterizes LFIs in particular. Condition (3) is required in order to satisfy condition (2) in a non-trivial way. 

The hierarchy of LFIs studied in \cite{CarnielliConiglioMarcos2007} begins with a logic called mbC, which extends positive classical logic CPL+ by adding a negation $\neg$ and a primitive consistency operator $\circ$ that meets the minimal requirements to define an LFI.

\begin{definition} \label{def:mbC}
    The classical positive logic CPL+ is defined by the following Hilbert calculus:
\begin{itemize}
    \item \textbf{Axiom schemas:}
    \begin{align*}
        &\alpha \rightarrow (\beta\rightarrow\alpha) \quad &&(Ax1) \\
        &(\alpha \rightarrow (\beta \rightarrow \gamma)) \rightarrow ((\alpha \rightarrow \beta) \rightarrow (\alpha \rightarrow \gamma)) \quad &&(Ax2) \\
        &\alpha \rightarrow (\beta \rightarrow (\alpha \land \beta)) \quad &&(Ax3) \\
        &(\alpha \land \beta) \rightarrow \alpha \quad &&(Ax4) \\
        &(\alpha \land \beta) \rightarrow \beta \quad &&(Ax5) \\
        &\alpha \rightarrow (\alpha \lor \beta) \quad &&(Ax6) \\
        &\beta \rightarrow (\alpha \lor \beta) \quad &&(Ax7) \\
        &(\alpha \rightarrow \gamma) \rightarrow ((\beta \rightarrow \gamma) \rightarrow ((\alpha \lor \beta) \rightarrow \gamma))  \quad &&(Ax8) \\
        &(\alpha \rightarrow \beta) \lor \alpha \quad &&(Ax9)
    \end{align*}
    \item \textbf{Inference rule:}
    \begin{align*}
        &\frac{\alpha \quad \alpha \rightarrow \beta}{\beta} \quad &&(\text{MP})
    \end{align*}
\end{itemize}
\end{definition}
\begin{definition}
    The logic mbC is obtained from CPL+ by adding the following axiom schemas:
\begin{align*}
    &\alpha \lor \neg \alpha &&\quad (Ax10) \\
    &{\circ} \alpha \rightarrow (\alpha \rightarrow (\neg \alpha \rightarrow \beta)) &&\quad (\text{bc1})
\end{align*}
\end{definition}
The logic $\mathrm{mbC}$ is the minimal LFI extending $\mathrm{CPL}^+$~\cite{CarnielliConiglioMarcos2007}. It is well known, however, that $\mathrm{mbC}$ does not satisfy the replacement property in full generality. Replacement is available for every formula over $\Sigma_+$, and this observation motivates the passage to the self-extensional extension $\mathrm{RmbC}$, obtained by requiring replacement also for the non-classical connectives $\neg$ and $\circ$.

\begin{definition}[\cite{CarnielliConiglioFuenmayor2022}]
    The logic RmbC is obtained from mbC by adding the following inference rules:
$$\displaystyle \frac{\alpha \leftrightarrow \psi}{\neg\alpha \leftrightarrow \neg\psi} \quad (R_{\neg}) \hspace{2cm} \displaystyle \frac{\alpha \leftrightarrow \psi}{{\circ}\alpha \leftrightarrow {\circ}\psi} \quad (R_{\circ})$$
   
\end{definition}

Derivability from premises is understood in the usual local sense: for a logic $L$, we write $\Gamma\vdash_L\varphi$ if either $\vdash_L\varphi$, or there exists a finite non-empty subset ${\gamma_1,\ldots,\gamma_n}\subseteq\Gamma$ such that
\[
\vdash_L(\gamma_1\wedge\cdots\wedge\gamma_n)\to\varphi.
\]

The rules $(R_{\neg})$ and $(R_{\circ})$ are global in the sense that their premises must be theorems, not merely assumptions. This is the key feature that determines the degree-preserving character of the algebraic semantics introduced in the next subsection. As shown in~\cite{CarnielliConiglioFuenmayor2022}, $\mathrm{RmbC}$ satisfies the replacement property and remains paraconsistent. The self-extensional behavior of $\mathrm{RmbC}$ is what makes it, together with its principal axiomatic extensions, amenable to a standard Lindenbaum--Tarski algebraization. The corresponding algebraic semantics is provided by Boolean algebras with LFI operators, introduced in the next subsection.
\subsection{The algebraic semantics for \texorpdfstring{$\mathrm{RmbC}$}{RmbC}}
\label{sec:balfi}

The algebraic semantics for $\mathrm{RmbC}$ is given by a class of Boolean 
algebras expanded with two additional unary operators interpreting the 
non-classical connectives $\neg$ and $\circ$. These operators are called 
\emph{LFI operators}, and the resulting structures are termed \emph{Boolean 
algebras with LFI operators}, or \emph{BALFIs} for short.

\begin{definition}
A \emph{BALFI} is an algebra
\[
\mathbf{B}=\langle B,\wedge,\vee,\to,\neg,\circ,0,1\rangle
\]
such that its reduct
\[
\langle B,\wedge,\vee,\to,0,1\rangle
\]
is a Boolean algebra and, for every $a\in B$, the following equations hold:
\[
a\vee \neg a = 1,
\qquad
a\wedge \neg a \wedge \circ a = 0.
\]
The class of all BALFIs will be denoted by $\mathsf{BI}$.
\end{definition}

The first equation is a form of excluded middle for the non-classical negation; 
the second captures the controlled explosion mechanism: the conjunction of any 
element with both its negation and its consistency value is always zero. 
Together, they encode the minimal requirements of an LFI at the equational level.

Since every BALFI carries two distinct negation-like operations---the 
Boolean complement inherited from the reduct, denoted ${\sim}$, and the 
non-classical operator $\neg$---some care is needed to avoid conflating 
the two: they coincide only on a well-behaved part of the algebra, and 
distinguishing them precisely is a recurring concern throughout the 
paper. The following lemma records their most basic relationship.
\begin{lemma}\label{lem:neg-complement-bound}
In every BALFI $\mathbf{B}$, ${\sim}x\leq\neg_B x$ for every $x\in B$.
\end{lemma}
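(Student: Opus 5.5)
The claim follows from the first BALFI equation $x\vee\neg x=1$ together with the Boolean fact that in a Boolean algebra $a\vee b=1$ holds iff ${\sim}a\le b$. My plan is to prove this Boolean fact directly inside the Boolean reduct and then instantiate it with $a=x$, $b=\neg_B x$.

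First, fix $x\in B$ and set $y=\neg_B x$. By the BALFI axiom, $x\vee y=1$. Next, I use distributivity and the complement laws of the Boolean reduct to compute
\[
{\sim}x = {\sim}x\wedge 1 = {\sim}x\wedge(x\vee y) = ({\sim}x\wedge x)\vee({\sim}x\wedge y) = 0\vee({\sim}x\wedge y) = {\sim}x\wedge y .
\]
Finally, ${\sim}x={\sim}x\wedge y$ is exactly the lattice-theoretic statement ${\sim}x\le y=\neg_B x$, which is the claim.

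No step is a real obstacle. The only point needing care is that ${\sim}$ must be taken to be the Boolean complement of the reduct, ${\sim}x=x\to 0$, so that ${\sim}x\wedge x=0$ holds. The operator $\neg$ need not satisfy this identity, and the argument uses only the equation $x\vee\neg x=1$ for $\neg$. The second BALFI equation, $x\wedge\neg x\wedge\circ x=0$, is not needed.
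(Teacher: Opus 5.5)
Your proof is correct and matches the paper's: both derive the claim from the BALFI equation $x\vee\neg_B x=1$ via the Boolean fact that $x\vee y=1$ if and only if ${\sim}x\leq y$. The only difference is that you check the needed direction of that fact explicitly by distributivity, whereas the paper simply cites it.
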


\begin{proof}
By the BALFI axiom, $x\vee\neg_B x=1$. In any Boolean algebra, 
$x\vee y=1$ if and only if ${\sim}x\leq y$. Taking $y=\neg_B x$ gives 
${\sim}x\leq\neg_B x$.
\end{proof}

A \emph{valuation} over a BALFI $\mathbf{B}$ is a homomorphism
\[
v:\mathrm{For}(\Sigma)\to B;
\]
the values of compound formulas are thus computed compositionally. A formula 
$\varphi$ is \emph{valid in} $\mathsf{BI}$, written $\vDash_{\mathsf{BI}}\varphi$, 
if $v(\varphi)=1$ for every BALFI $\mathbf{B}$ and every valuation $v$ on 
$\mathbf{B}$.

The consequence relation associated with this semantics is 
\emph{degree-preserving} (or local): for a set of formulas 
$\Gamma\cup\{\varphi\}$, we write $\Gamma\vDash_{\mathsf{BI}}\varphi$ if 
either $\vDash_{\mathsf{BI}}\varphi$, or there exists a finite non-empty 
subset $\{\gamma_1,\ldots,\gamma_n\}\subseteq\Gamma$ such that, for every 
BALFI $\mathbf{B}$ and every valuation $v$ on $\mathbf{B}$,
\[
\bigwedge_{i=1}^{n}v(\gamma_i)\leq v(\varphi).
\]
Equivalently,
\[
\Gamma\vDash_{\mathsf{BI}}\varphi
\quad\text{if and only if}\quad
\vDash_{\mathsf{BI}}(\gamma_1\wedge\cdots\wedge\gamma_n)\to\varphi
\]
for some such finite subset. This local reading matches the proof-theoretic 
treatment of derivations from premises in $\mathrm{RmbC}$, where the global 
rules $(R_{\neg})$ and $(R_{\circ})$ restrict their premises to theorems.

The following completeness theorem, established 
in~\cite{CarnielliConiglioFuenmayor2022}, is the algebraic foundation of the 
present paper.

\begin{theorem}[\cite{CarnielliConiglioFuenmayor2022}]
\label{thm:completeness-RmbC}
For every $\Gamma\cup\{\varphi\}\subseteq\mathrm{For}(\Sigma)$,
\[
\Gamma\vdash_{\mathrm{RmbC}}\varphi
\quad\text{if and only if}\quad
\Gamma\vDash_{\mathsf{BI}}\varphi.
\]
\end{theorem}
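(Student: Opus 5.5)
The plan is to prove soundness and completeness separately. Both directions first reduce to the case of theorems, using the local form of derivability on the proof side and the local form of $\vDash_{\mathsf{BI}}$ on the semantic side. By the definitions, $\Gamma\vdash_{\mathrm{RmbC}}\varphi$ holds iff $\vdash_{\mathrm{RmbC}}\varphi$ or $\vdash_{\mathrm{RmbC}}(\gamma_1\wedge\cdots\wedge\gamma_n)\to\varphi$ for some finite $\{\gamma_i\}\subseteq\Gamma$. Likewise $\Gamma\vDash_{\mathsf{BI}}\varphi$ holds iff $\vDash_{\mathsf{BI}}\varphi$ or $\vDash_{\mathsf{BI}}(\gamma_1\wedge\cdots\wedge\gamma_n)\to\varphi$. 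So it suffices to show that $\vdash_{\mathrm{RmbC}}\psi$ iff $\vDash_{\mathsf{BI}}\psi$ for every single formula $\psi$.

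\emph{Soundness.} I will argue by induction on derivations of theorems. Axioms (Ax1)--(Ax9) take value $1$ in every Boolean algebra, because they are classical positive tautologies. For (Ax10), $v(\alpha\vee\neg\alpha)=v(\alpha)\vee\neg v(\alpha)=1$ by the first BALFI equation. For (bc1), the value is ${\sim}(\circ a\wedge a\wedge\neg a)\vee v(\beta)$, where $a=v(\alpha)$, and this equals $1$ by the second equation. (MP) preserves value $1$. The global rules also preserve validity. If $v(\alpha\leftrightarrow\psi)=1$ for all valuations, then $v(\alpha)=v(\psi)$ for all $v$, hence $\neg v(\alpha)=\neg v(\psi)$ and $\circ v(\alpha)=\circ v(\psi)$. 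This is exactly where globality is needed: a rule applied to mere assumptions would not be sound for degree-preservation.

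\emph{Completeness.} I will use a Lindenbaum--Tarski construction. Define $\alpha\equiv\beta$ iff $\vdash_{\mathrm{RmbC}}\alpha\leftrightarrow\beta$. This is an equivalence relation, and it is a congruence for $\wedge,\vee,\to$ by classical positive reasoning (derivable in $\mathrm{CPL}^+$). It is a congruence for $\neg$ and $\circ$ precisely by $(R_\neg)$ and $(R_\circ)$. The quotient $\mathrm{For}(\Sigma)/{\equiv}$ carries the induced operations. Set $0=[p_0\wedge\neg p_0\wedge\circ p_0]$ and $1=[p_0\to p_0]$.

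The reduct is a Boolean algebra. The positive part is a relatively pseudocomplemented lattice with the Peirce-type law (Ax9), hence a generalized Boolean algebra. The element $0$ is its bottom, since by (bc1) the formula $(p_0\wedge\neg p_0\wedge\circ p_0)\to\beta$ is a theorem for every $\beta$. With a bottom element, $x\to 0$ gives a complement, so the reduct is Boolean. The two BALFI equations hold because (Ax10) gives $[\alpha]\vee\neg[\alpha]=1$, and (bc1) gives $[\alpha\wedge\neg\alpha\wedge\circ\alpha]\le 0$.

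Now take the canonical valuation $v(\psi)=[\psi]$. If $\vDash_{\mathsf{BI}}\psi$, then $[\psi]=1$, that is, $\vdash(p_0\to p_0)\leftrightarrow\psi$, and hence $\vdash\psi$.

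\emph{Main obstacle.} I expect the main work to be verifying that the positive quotient is genuinely Boolean. This needs the congruence property of $\leftrightarrow$ under the positive connectives, which I would obtain via the deduction metatheorem for $\mathrm{CPL}^+$, and a check that the order $[\alpha]\le[\beta]$ iff $\vdash\alpha\to\beta$ matches the lattice order. Along the way I must make sure the deduction theorem is applied only in $\mathrm{CPL}^+$/$\mathrm{mbC}$ derivations, never through the global rules. This is harmless here because every step in the argument concerns theorems.
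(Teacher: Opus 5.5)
Your proposal is correct and follows the same route as the cited source. This paper itself only invokes that source, noting that the Lindenbaum--Tarski algebra of $\mathrm{RmbC}$ is a BALFI under the canonical valuation. You reduce both local consequence relations to theoremhood, prove soundness by induction on derivations (where the globality of $(R_\neg)$ and $(R_\circ)$ is essential), and obtain completeness from the quotient by provable equivalence, which the replacement rules make a congruence. Your use of $[p_0\wedge\neg p_0\wedge\circ p_0]$ as the bottom (via (bc1)) and of $x\to 0$ as the complement (via (Ax9)) correctly supplies the Boolean structure that the signature $\Sigma$ lacks.
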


As a consequence, $\mathrm{RmbC}$ admits a standard algebraic semantics in 
the sense of Lindenbaum--Tarski: the Lindenbaum--Tarski algebra of 
$\mathrm{RmbC}$ is a BALFI, and the canonical valuation sends each formula 
to its equivalence class under provable equivalence. The self-extensionality 
of $\mathrm{RmbC}$, guaranteed by the rules $(R_{\neg})$ and 
$(R_{\circ})$, is precisely what makes this construction possible, since it 
ensures that provable equivalence is a congruence with respect to all 
connectives.
\subsection{Principal axiomatic extensions}
\label{sec:extensions}

The framework introduced above admits a family of axiomatic enrichments.
These extensions are the natural objects to study once the replacement
property has been secured, since the added axioms interact smoothly with
the self-extensional behavior of the logic. We shall consider the following axiom schemata (in $(ca_{\#})$, $\#\in\{\wedge,\vee,\to\}$):\\[1mm]

$\begin{array}{llll}
        (ciw) &\quad  \circ\alpha \vee (\alpha \wedge \neg\alpha) &  \hspace{2cm} (cl) &\quad  \neg(\alpha \wedge \neg\alpha) \to \circ\alpha \\
        (d1) &\quad  \circ\alpha\vee\alpha  &  \hspace{2cm} (ce)&\quad  \alpha \to \neg\neg\alpha \\
        (d2) &\quad  \circ\alpha\vee\neg\alpha  &  \hspace{2cm} (cf) &\quad  \neg\neg\alpha \to \alpha \\
        (ci) &\quad \neg{\circ}\alpha \to (\alpha \wedge \neg\alpha)  &  \hspace{2cm}  (a_{\circ}) &\quad  \circ{\circ}\alpha \\
        (i1)&\quad  \neg{\circ}\alpha\to\alpha  &  \hspace{2cm} (a_{\neg}) &\quad  \circ\alpha \to \circ\neg\alpha \\
        (i2) &\quad  \neg{\circ}\alpha\to\neg\alpha  &  \hspace{2cm}  (ca_{\#}) &\quad (\circ\alpha \wedge \circ\beta)\to \circ(\alpha\#\beta)
\end{array}$

\begin{remark}
The axiom schemata considered in this section were presented in the study of principal extensions of $\mathrm{mbC}$ by Carnielli and Coniglio in \cite{CarnielliConiglio2016a}. The four schemata corresponding to the conditions $(d1)$, $(d2)$, $(i1)$, and $(i2)$ go back to Avron's modular treatment of LFIs in \cite{Avron}.
\end{remark}

Let $Ax$ be any nonempty set of axioms chosen from the collection above,
and let $\mathrm{RmbC}(Ax)$ denote the logic obtained by adding $Ax$ to
$\mathrm{RmbC}$. Likewise, let $\mathsf{BI}(Ax)$ be the class of BALFIs
validating every axiom in $Ax$. Since each axiom corresponds to an equation
on the algebraic side, $\mathsf{BI}(Ax)$ is a variety. The following
uniform soundness and completeness theorem, established
in~\cite{CarnielliConiglioFuenmayor2022}, provides the algebraic semantics
needed for the finite model arguments developed in
Sections~\ref{sec:fmp} and~\ref{sec:transfer}.

\begin{theorem}[\cite{CarnielliConiglioFuenmayor2022}]
\label{thm:completeness-uniform}
For every nonempty $Ax$ as above and every
$\Gamma\cup\{\varphi\}\subseteq\mathrm{For}(\Sigma)$,
\[
\Gamma\vdash_{\mathrm{RmbC}(Ax)}\varphi
\quad\text{if and only if}\quad
\Gamma\vDash_{\mathsf{BI}(Ax)}\varphi.
\]
\end{theorem}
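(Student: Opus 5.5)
We would prove Theorem~\ref{thm:completeness-uniform} by adapting the Lindenbaum--Tarski argument behind Theorem~\ref{thm:completeness-RmbC}, treating soundness and completeness separately and using throughout that $\Gamma\vdash\varphi$ and $\Gamma\vDash_{\mathsf{BI}(Ax)}\varphi$ are both defined locally. It therefore suffices to establish the equivalence for theorems: $\vdash_{\mathrm{RmbC}(Ax)}\psi$ iff $\vDash_{\mathsf{BI}(Ax)}\psi$. The case with premises then follows by taking $\psi=(\gamma_1\wedge\cdots\wedge\gamma_n)\to\varphi$.

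\emph{Soundness.} We proceed by induction on derivations of theorems. The axioms of $\mathrm{CPL}^+$ evaluate to $1$ in every Boolean reduct. $(Ax10)$ and $(\mathrm{bc1})$ evaluate to $1$ by the two BALFI equations, since $v(\circ\alpha)\wedge v(\alpha)\wedge v(\neg\alpha)=0$ gives $v(\circ\alpha)\leq {\sim}(v(\alpha)\wedge v(\neg\alpha))$. Each axiom in $Ax$ holds by the definition of $\mathsf{BI}(Ax)$, because the axiom $\alpha$ corresponds to the equation $\alpha\approx 1$, and instances of schemas are substitution instances. MP preserves value $1$. For $(R_\neg)$ and $(R_\circ)$ the key point is globality. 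If $\alpha\leftrightarrow\psi$ is valid, then $v(\alpha)=v(\psi)$ for every valuation, so $v(\neg\alpha)=\neg v(\alpha)=\neg v(\psi)=v(\neg\psi)$, and likewise for $\circ$. This step would fail for a local reading, which is why validity rather than truth under a single valuation is the inductive invariant.

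\emph{Completeness.} We build the Lindenbaum--Tarski algebra. Define $\alpha\equiv\beta$ iff $\vdash_{\mathrm{RmbC}(Ax)}\alpha\leftrightarrow\beta$. Using $\mathrm{CPL}^+$ theorems this is an equivalence relation compatible with $\wedge,\vee,\to$, and $(R_\neg)$ and $(R_\circ)$ give compatibility with $\neg$ and $\circ$. So $\mathbf{L}=\mathrm{For}(\Sigma)/{\equiv}$ is an algebra. We set $1=[p_0\to p_0]$ and $0=[p_0\wedge\neg p_0\wedge\circ p_0]$. The Boolean reduct must be checked: $\mathrm{CPL}^+$ plus a bottom gives classical logic. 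By $(\mathrm{bc1})$, $0$ implies every formula, and $(Ax9)$ (Peirce-type) yields that ${\sim}a:=a\to 0$ is a Boolean complement. The equations $a\vee\neg a=1$ and $a\wedge\neg a\wedge\circ a=0$ hold by $(Ax10)$ and $(\mathrm{bc1})$. Each axiom of $Ax$ is valid in $\mathbf{L}$ under every valuation, because a valuation into $\mathbf{L}$ is $[\sigma(\cdot)]$ for some substitution $\sigma$, and the logic is closed under substitution. Hence $\mathbf{L}\in\mathsf{BI}(Ax)$. Finally, the canonical valuation $v(\alpha)=[\alpha]$ satisfies $v(\psi)=1$ iff $\vdash\psi$, because $\vdash\psi\leftrightarrow(p_0\to p_0)$ iff $\vdash\psi$. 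So any non-theorem is refuted in $\mathbf{L}$.

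The main obstacle is the detailed verification that the reduct of $\mathbf{L}$ is a genuine Boolean algebra, in particular that $0$ is a bottom element and that $(Ax9)$ supplies the complement laws. We would handle this via the standard fact that $\mathrm{CPL}^+$ is the positive fragment of classical logic, and cite the corresponding argument in \cite{CarnielliConiglioFuenmayor2022}. Closure of the logic under substitution, which the $\mathsf{BI}(Ax)$ membership step depends on, holds because all rules are schematic.
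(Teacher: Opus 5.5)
Your proposal is correct and matches the approach the paper relies on: the paper does not reprove the result but imports it from \cite{CarnielliConiglioFuenmayor2022}, and its own remark on the Lindenbaum--Tarski algebra points to exactly your argument. That argument has soundness by induction on derivations, with validity as the invariant so that the global rules $(R_\neg)$, $(R_\circ)$ go through, and completeness via the quotient by provable equivalence, which these rules make a congruence and which lies in $\mathsf{BI}(Ax)$ because the axioms of $Ax$ are schemas. Your reduction of the premise case to theorems, using the local definitions of $\vdash_{\mathrm{RmbC}(Ax)}$ and $\vDash_{\mathsf{BI}(Ax)}$, is also right.
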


The algebraic counterparts of the axioms are the following equations.
For every BALFI
$\mathbf{B}=\langle B,\wedge,\vee,\to,\neg,\circ,0,1\rangle$
and every $a,b\in B$, where ${\sim}$ denotes the Boolean complement:\\[1mm]

$\begin{array}{ll}
        \mathbf{B}\in\mathsf{BI}(\mathrm{ciw}) \ \iff \ \circ a = {\sim}(a\wedge\neg a), &  \hspace{2cm} \mathbf{B}\in\mathsf{BI}(\mathrm{cl}) \ \iff \ \circ a = \neg(a\wedge\neg a),\\
        \mathbf{B}\in\mathsf{BI}(d1) \ \iff \ \circ a\vee a = 1,  &  \hspace{2cm} \mathbf{B}\in\mathsf{BI}(\mathrm{cf})  \iff \ a\wedge\neg\neg a = \neg\neg a,\\
        \mathbf{B}\in\mathsf{BI}(d2) \ \iff \ \circ a\vee\neg a = 1,  &  \hspace{2cm} \mathbf{B}\in\mathsf{BI}(\mathrm{ce}) \ \iff \ a\wedge\neg\neg a = a,\\
        \mathbf{B}\in\mathsf{BI}(\mathrm{ci}) \ \iff \ \neg(\circ a) = a\wedge\neg a,  &  \hspace{2cm}  \mathbf{B}\in\mathsf{BI}(\mathrm{a}_\circ) \ \iff \ \circ{\circ} a = 1,\\
        \mathbf{B}\in\mathsf{BI}(i1) \ \iff \ \neg{\circ} a\wedge a = \neg{\circ} a,  &  \hspace{2cm} \mathbf{B}\in\mathsf{BI}(\mathrm{a}_\neg) \ \iff \ \circ a\wedge\circ\neg a = \circ a,\\
        \mathbf{B}\in\mathsf{BI}(i2) \ \iff \ \neg{\circ} a\wedge\neg a = \neg{\circ} a,  &  \hspace{2cm}  \mathbf{B}\in\mathsf{BI}(\mathrm{ca}_\#)  \iff  (\circ a\wedge\circ b)\wedge\circ(a\# b)\\
        &  \hspace{6.5cm} = \circ a\wedge\circ b.
\end{array}$

\ 

The inclusion relations, incomparabilities, and explosive combinations
among these varieties are analyzed in
Section~\ref{sec:classification}.
\subsection{The finite model property and decidability}
\label{sec:fmp-prelim}

A logic $L$ with semantics given by a class of algebras $\mathsf{K}$ is said
to have the \emph{finite model property} (FMP) if, whenever
$\Gamma\nvdash_L\varphi$, there exist a finite algebra
$\mathbf{A}\in\mathsf{K}$ and a valuation $v$ on $\mathbf{A}$ such that
$v(\gamma)=1$ for every $\gamma\in\Gamma$, while $v(\varphi)\neq 1$.
In particular, every non-theorem of $L$ is refutable in a finite model.

The connection between the FMP and decidability is well-known. A logic $L$ is
said to be \emph{decidable} if the set of its theorems is recursive. The
following result, due to Harrop~\cite{Harrop1958}, makes the FMP
algorithmically relevant.

\begin{theorem}[Harrop~\cite{Harrop1958}]
\label{thm:harrop}
Let $L$ be a finitely axiomatizable logic having the finite model property
with respect to a recursively enumerable class of finite algebras with
decidable validity. Then $L$ is decidable.
\end{theorem}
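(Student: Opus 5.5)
The plan is the classical Harrop argument: run two semi-decision procedures in parallel, one enumerating the theorems of $L$ and one enumerating finite countermodels, and use the finite model property to guarantee that for every formula exactly one of them halts.

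\emph{Theorems are recursively enumerable.} Since $L$ is finitely axiomatizable, its axiom schemata and rules are finitely many and effectively recognizable. Here ``rules'' includes global rules such as $(R_\neg)$ and $(R_\circ)$, whose premises are themselves theorems; these are handled just like (MP). So the set of Hilbert derivations is decidable. Enumerating all finite sequences of formulas and checking each for being a derivation lists exactly the theorems of $L$.

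\emph{Non-theorems are recursively enumerable.} Let $\mathbf{A}_0,\mathbf{A}_1,\dots$ be the given recursive enumeration of the finite algebras in $\mathsf{K}$, each presented by its finite operation tables. Given $\varphi$ with $\mathrm{Var}(\varphi)=\{p_1,\dots,p_k\}$, the value $v(\varphi)$ depends only on $v\restriction \mathrm{Var}(\varphi)$. A finite algebra $\mathbf{A}_i$ has only $|A_i|^k$ such restrictions, and for each one $v(\varphi)$ is computed by recursion on $\mathrm{Sub}(\varphi)$ using the tables. Hence ``$\varphi$ is valid in $\mathbf{A}_i$'' is decidable; this is the ``decidable validity'' hypothesis. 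Dovetailing over $i$ then semi-decides ``$\varphi$ fails in some $\mathbf{A}_i$''.

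\emph{Correctness of the combined procedure.} We need two facts.
\begin{itemize}
\item \emph{Soundness.} Every $\mathbf{A}\in\mathsf{K}$ validates all theorems of $L$. For $\mathrm{RmbC}$ and its extensions this is the soundness half of Theorem~\ref{thm:completeness-RmbC} and Theorem~\ref{thm:completeness-uniform}. So if the second search finds a refuting $\mathbf{A}_i$, then $\varphi$ is indeed a non-theorem.
\item \emph{Finite model property.} Applied with $\Gamma=\emptyset$, it says every non-theorem is refuted in some finite $\mathbf{A}\in\mathsf{K}$. This $\mathbf{A}$ must be isomorphic to some $\mathbf{A}_i$ in the enumeration, and validity is invariant under isomorphism, so the second search does halt.
\end{itemize}
Consequently, for each $\varphi$ exactly one search terminates. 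This gives a total algorithm, so both the theorem set and its complement are recursively enumerable, and by Post's theorem the theorem set is recursive.

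The only delicate point is fixing what ``recursively enumerable class of finite algebras'' means. I would take it to mean that the class is closed under isomorphism, together with an effective listing of finite presentations covering every isomorphism type. For BALFIs this comes for free: enumerate all finite operation tables of each finite size and keep those satisfying the finitely many defining equations, which is a finite check. The same works for $\mathsf{BI}(Ax)$, since $Ax$ contributes only finitely many further equations.
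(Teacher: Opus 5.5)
Your proposal is correct and follows the same route as the paper. The paper does not prove Harrop's theorem; it cites it and justifies it informally by the same two parallel procedures: enumerating derivations from the finite axiomatization, and enumerating finite countermodels guaranteed by the FMP. You make that sketch precise, including the soundness requirement and the isomorphism-closure convention for ``recursively enumerable class'', both of which the paper leaves implicit.
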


In practice, the finite axiomatization provides an effective enumeration of
proofs, while the FMP guarantees that every non-theorem admits a finite
countermodel. The combination of these two procedures yields a decision
algorithm for theoremhood. This is the approach adopted in
Sections~\ref{sec:fmp} and~\ref{sec:transfer} to establish the decidability
of $\mathrm{RmbC}$ and its principal extensions.
\section{Subvarieties of BALFIs and the limits of paraconsistency}
\label{sec:classification}

\subsection{The variety landscape: subvarieties and inclusion relations}
\label{sec:varieties}

The axiom schemata introduced in Section~\ref{sec:extensions} generate a 
family of subvarieties of $\mathsf{BI}$. Since each axiom corresponds to an 
equation on the algebraic side, the class $\mathsf{BI}(Ax)$ is a variety for 
every nonempty $Ax$. The inclusion and incomparability relations among these 
varieties were already studied in the literature for the corresponding logics 
without replacement~\cite{CarnielliConiglio2016a, Avron}, and the same 
relations persist in the present BALFI setting because the algebraic 
counterparts are unchanged. In this subsection we record the relations that 
will be used in Section~\ref{sec:paraconsistency} to analyze the explosive 
combinations of axioms.

Throughout this subsection, all BALFIs are defined over the four-element 
Boolean algebra $\wp(\{w_1,w_2\})$, with elements $0=\emptyset$, 
$a=\{w_1\}$, $b=\{w_2\}$, and $1=\{w_1,w_2\}$.

\subsubsection*{Relations among \texorpdfstring{$(d1)$, $(d2)$, $(i1)$, 
$(i2)$, $(\mathrm{ciw})$, and $(\mathrm{ci})$}{d1, d2, i1, i2, ciw, ci}}

The following relations hold:
\begin{enumerate}
\item $\mathsf{BI}(i1)\subseteq\mathsf{BI}(d1)$ and 
      $\mathsf{BI}(i2)\subseteq\mathsf{BI}(d2)$
      \cite{CarnielliConiglio2016a, Avron};
\item $\mathsf{BI}(d1)\cap\mathsf{BI}(d2)=\mathsf{BI}(\mathrm{ciw})$
      \cite{CarnielliConiglio2016a, Avron};
\item $\mathsf{BI}(i1)\cap\mathsf{BI}(i2)=\mathsf{BI}(\mathrm{ci})$
      \cite{Avron};
\item $\mathsf{BI}(\mathrm{ci})\subseteq\mathsf{BI}(\mathrm{ciw})$
      \cite{CarnielliConiglio2016a};
\item $\mathsf{BI}(d1)$ and $\mathsf{BI}(d2)$ are incomparable, as 
      witnessed by Examples~\ref{ex:D1} and~\ref{ex:D2} below;
\item $\mathsf{BI}(i1)$ and $\mathsf{BI}(i2)$ are incomparable, as witnessed by Examples ~\ref{ex:E1} and ~\ref{ex:E2} below;
\item $\mathsf{BI}(i1)$ and $\mathsf{BI}(\mathrm{ciw})$ are 
      incomparable, as witnessed by Example~\ref{ex:E1} below and 
      by Example~\ref{ex:B1};
\item $\mathsf{BI}(i2)$ and $\mathsf{BI}(\mathrm{ciw})$ are 
      incomparable, as witnessed by Example~\ref{ex:E2} below and 
      by Example~\ref{ex:B1}.
\end{enumerate}

%Items~(1)--(4) are established 
%in~\cite{CarnielliConiglio2016a, Avron} at the level of the logics 
%$\mathrm{mbC}(Ax)$; since the varieties $\mathsf{BI}(Ax)$ are determined 
%solely by the corresponding equations, the same results hold here. 
%Item~(5), the incomparability of $\mathsf{BI}(d1)$ and 
%$\mathsf{BI}(d2)$, is witnessed by Examples~\ref{ex:D1} 
%and~\ref{ex:D2}. Item~(6), the incomparability of $\mathsf{BI}(i1)$ and 
%$\mathsf{BI}(i2)$, is witnessed by Examples~\ref{ex:E1} 
%and~\ref{ex:E2}. 
Items~(7) and~(8) are new: although 
$\mathsf{BI}(\mathrm{ci})=\mathsf{BI}(i1)\cap\mathsf{BI}(i2)$ is 
contained in $\mathsf{BI}(\mathrm{ciw})$ by item~(4), neither $i1$ 
nor $i2$ individually implies $(\mathrm{ciw})$, and $(\mathrm{ciw})$ 
implies neither $i1$ nor $i2$.

\begin{example}\label{ex:D1}
Let $\mathbf{D}_1$ be the BALFI defined by:
\[
\begin{array}{|c|c|c|}
\hline
x & \neg x & \circ x \\
\hline
0 & 1 & 1 \\
a & 1 & b \\
b & a & a \\
1 & 0 & 1 \\
\hline
\end{array}
\]
One verifies that $\mathbf{D}_1\in\mathsf{BI}(d1)$: indeed, 
$\circ x\vee x=1$ for all $x$. However, 
$\circ b\vee\neg b=a\vee a=a\neq 1$, so 
$\mathbf{D}_1\notin\mathsf{BI}(d2)$.
\end{example}

\begin{example}\label{ex:D2}
Let $\mathbf{D}_2$ be the BALFI defined by:
\[
\begin{array}{|c|c|c|}
\hline
x & \neg x & \circ x \\
\hline
0 & 1 & 1 \\
a & b & a \\
b & 1 & 0 \\
1 & 0 & 1 \\
\hline
\end{array}
\]
One verifies that $\mathbf{D}_2\in\mathsf{BI}(d2)$: indeed, 
$\circ x\vee\neg x=1$ for all $x$. However, 
$\circ a\vee a=a\vee a=a\neq 1$, so 
$\mathbf{D}_2\notin\mathsf{BI}(d1)$.
\end{example}

\begin{example}\label{ex:E1}
Let $\mathbf{E}_1$ be the BALFI defined by:
\[
\begin{array}{|c|c|c|}
\hline
x & \neg x & \circ x \\
\hline
0 & 1 & 1 \\
a & b & b \\
b & a & a \\
1 & 0 & 0 \\
\hline
\end{array}
\]
One verifies that $\mathbf{E}_1\in\mathsf{BI}(i1)$: indeed, 
$\neg(\circ x)\leq x$ holds for all $x$ (moreover,  
$\neg(\circ x)=x$ for every $x$). However, 
$a\wedge\neg a=a \wedge b=0$, while $\circ a=b\neq 1$, so 
$\mathbf{E}_1\notin\mathsf{BI}(\mathrm{ciw})$.
Moreover, $\mathbf{E}_1\notin\mathsf{BI}(i2)$, since
$
\neg(\circ a)\wedge\neg a
= a\wedge b
=0
\neq a
=\neg(\circ a).$
\end{example}

\begin{example}\label{ex:E2}
Let $\mathbf{E}_2$ be the BALFI defined by:
\[
\begin{array}{|c|c|c|}
\hline
x & \neg x & \circ x \\
\hline
0 & 1 & 1 \\
a & b & a \\
b & a & b \\
1 & 0 & 1 \\
\hline
\end{array}
\]
It is immediate to see that $\mathbf{E}_2\in\mathsf{BI}(i2)$: indeed, 
$\neg(\circ x)\leq\neg x$ holds for all $x$. However,  $a\wedge\neg a=0$ while 
$\circ a=a\neq 1$, so $\mathbf{E}_2\notin\mathsf{BI}(\mathrm{ciw})$.
Moreover, $\mathbf{E}_2\notin\mathsf{BI}(i1)$, since $\neg(\circ a)\wedge a
= b\wedge a
=0
\neq b
=\neg(\circ a)$.
\end{example}

The converse direction is witnessed by $\mathbf{B}_1$ 
(Example~\ref{ex:B1} below): one checks directly from its table that 
$\mathbf{B}_1\in\mathsf{BI}(\mathrm{ciw})$, while 
$\neg(\circ_1 0)=a\not\leq 0$ refutes $(i1)$ and 
$\neg(\circ_1 a)=a\not\leq\neg_1 a=b$ refutes $(i2)$, so 
$\mathbf{B}_1\notin\mathsf{BI}(i1)$ and 
$\mathbf{B}_1\notin\mathsf{BI}(i2)$.

These relations are summarized in Figure~\ref{fig:hasse1}.

\begin{figure}[ht]
\centering
\begin{tikzpicture}[
  every node/.style={font=\small, inner sep=4pt},
  edge/.style={thick, shorten >=2pt, shorten <=2pt}
]
\node (ci)  at ( 0.0, 0.0) {$\mathsf{BI}(\mathrm{ci})$};
\node (i1)  at (-3.0, 2.5) {$\mathsf{BI}(i1)$};
\node (ciw) at ( 0.0, 2.5) {$\mathsf{BI}(\mathrm{ciw})$};
\node (i2)  at ( 3.0, 2.5) {$\mathsf{BI}(i2)$};
\node (d1)  at (-3, 5.0) {$\mathsf{BI}(d1)$};
\node (d2)  at ( 3, 5.0) {$\mathsf{BI}(d2)$};
\node (BI)  at ( 0.0, 7.0) {$\mathsf{BI}$};

\draw[edge] (ci)  -- (i1);
\draw[edge] (ci)  -- (ciw);
\draw[edge] (ci)  -- (i2);
\draw[edge] (i1)  -- (d1);
\draw[edge] (ciw) -- (d1);
\draw[edge] (ciw) -- (d2);
\draw[edge] (i2)  -- (d2);
\draw[edge] (d1)  -- (BI);
\draw[edge] (d2)  -- (BI);
\end{tikzpicture}
\caption{Inclusion diagram for the subvarieties of $\mathsf{BI}$ 
generated by $(d1)$, $(d2)$, $(i1)$, $(i2)$, $(\mathrm{ciw})$, and 
$(\mathrm{ci})$. The three varieties $\mathsf{BI}(i1)$, 
$\mathsf{BI}(\mathrm{ciw})$, and $\mathsf{BI}(i2)$ each contain 
$\mathsf{BI}(\mathrm{ci})$ but are pairwise incomparable (Examples~\ref{ex:E1}, \ref{ex:E2}, and~\ref{ex:B1}). Edges 
indicate the inclusion relations established in the text.}
\label{fig:hasse1}
\end{figure}

\subsubsection*{Relations among \texorpdfstring{$(\mathrm{ciw})$, 
$(\mathrm{ci})$, $(\mathrm{cl})$, $(\mathrm{cf})$, and 
$(\mathrm{ce})$}{ciw, ci, cl, cf, ce}}

The following relations hold:
\begin{enumerate}
\item $\mathsf{BI}(\mathrm{ci})\subseteq\mathsf{BI}(\mathrm{ciw})$ and 
      $\mathsf{BI}(\mathrm{cl})\subseteq\mathsf{BI}(\mathrm{ciw})$
      \cite{CarnielliConiglio2016a};
\item $\mathsf{BI}(\mathrm{ci})$ and $\mathsf{BI}(\mathrm{cl})$ are 
      incomparable, as witnessed by Example~\ref{ex:B2B3};
\item $\mathsf{BI}(\mathrm{cf})$ and $\mathsf{BI}(\mathrm{ciw})$ are 
      incomparable, as witnessed by Examples~\ref{ex:B1} 
      and~\ref{ex:B4};
\item $\mathsf{BI}(\mathrm{ce})$ is incomparable with each of 
      $\mathsf{BI}(\mathrm{ciw})$, $\mathsf{BI}(\mathrm{ci})$, 
      $\mathsf{BI}(\mathrm{cl})$, and $\mathsf{BI}(\mathrm{cf})$, as 
      witnessed by Example~\ref{ex:B5};
\item $\mathsf{BI}(\{\mathrm{ci},\mathrm{cf}\})=
      \mathsf{BI}(\{\mathrm{cl},\mathrm{cf}\})=
      \mathsf{BI}(\{\mathrm{ci},\mathrm{cl},\mathrm{cf}\})$
      \cite{CarnielliConiglioFuenmayor2022}.
\end{enumerate}

\begin{example}\label{ex:B1}
Let $\mathbf{B}_1$ be the BALFI defined by:
\[
\begin{array}{|c|c|c|}
\hline
x & \neg_1 x & \circ_1 x \\
\hline
0 & 1 & 1 \\
a & b & 1 \\
b & 1 & a \\
1 & a & b \\
\hline
\end{array}
\]
One verifies $\mathbf{B}_1\in\mathsf{BI}(\mathrm{ciw})$. Moreover, 
$a\wedge\neg_1 a=0\neq a=\neg_1(\circ_1 a)$, so 
$\mathbf{B}_1\notin\mathsf{BI}(\mathrm{ci})$; 
$\neg_1(b\wedge\neg_1 b)=1\neq a=\circ_1 b$, so 
$\mathbf{B}_1\notin\mathsf{BI}(\mathrm{cl})$; and 
$0\wedge\neg_1\neg_1 0=0\neq a=\neg_1\neg_1 0$, so 
$\mathbf{B}_1\notin\mathsf{BI}(\mathrm{cf})$.
\end{example}

\begin{example}\label{ex:B2B3}
Let $\mathbf{B}_2$ and $\mathbf{B}_3$ be the BALFIs defined by:
\[
\begin{array}{|c|c|c|}
\hline
x & \neg_2 x & \circ_2 x \\
\hline
0 & 1 & 1 \\
a & 1 & b \\
b & a & 1 \\
1 & 0 & 1 \\
\hline
\end{array}
\qquad
\begin{array}{|c|c|c|}
\hline
x & \neg_3 x & \circ_3 x \\
\hline
0 & 1 & 1 \\
a & b & 1 \\
b & a & 1 \\
1 & a & b \\
\hline
\end{array}
\]
Then $\mathbf{B}_2\in\mathsf{BI}(\mathrm{ci})$ but 
$\circ_2 a=b\neq 1=\neg_2(a\wedge\neg_2 a)$, so 
$\mathbf{B}_2\notin\mathsf{BI}(\mathrm{cl})$; while 
$\mathbf{B}_3\in\mathsf{BI}(\mathrm{cl})$ but 
$a\wedge\neg_3 a=0\neq a=\neg_3(\circ_3 a)$, so 
$\mathbf{B}_3\notin\mathsf{BI}(\mathrm{ci})$.
\end{example}

\begin{example}\label{ex:B4}
Let $\mathbf{B}_4$ be the BALFI defined by:
\[
\begin{array}{|c|c|c|}
\hline
x & \neg_4 x & \circ_4 x \\
\hline
0 & 1 & 1 \\
a & 1 & 0 \\
b & 1 & 0 \\
1 & 0 & 1 \\
\hline
\end{array}
\]
Then $\mathbf{B}_4\in\mathsf{BI}(\mathrm{cf})$ but 
${\sim}(a\wedge\neg_4 a)=b\neq 0=\circ_4 a$, so 
$\mathbf{B}_4\notin\mathsf{BI}(\mathrm{ciw})$.
\end{example}

\begin{example}\label{ex:B5}
Let $\mathbf{B}_5$ be the BALFI defined by:
\[
\begin{array}{|c|c|c|}
\hline
x & \neg_5 x & \circ_5 x \\
\hline
0 & 1 & 1 \\
a & 1 & 0 \\
b & 1 & 0 \\
1 & 1 & 0 \\
\hline
\end{array}
\]
Then $\mathbf{B}_5\in\mathsf{BI}(\mathrm{ce})$, and:
\begin{itemize}
\item ${\sim}(a\wedge\neg_5 a)=b\neq 0=\circ_5 a$, so 
$\mathbf{B}_5\notin\mathsf{BI}(\mathrm{ciw})$;
\item $0\wedge\neg_5\neg_5 0=0\neq 1=\neg_5\neg_5 0$, so 
$\mathbf{B}_5\notin\mathsf{BI}(\mathrm{cf})$;
\item $a\wedge\neg_5 a=a\neq 1=\neg_5(\circ_5 a)$, so 
$\mathbf{B}_5\notin\mathsf{BI}(\mathrm{ci})$;
\item $\circ_5 a=0\neq 1=\neg_5(a\wedge\neg_5 a)$, so 
$\mathbf{B}_5\notin\mathsf{BI}(\mathrm{cl})$.
\end{itemize}
Conversely, $\mathbf{B}_1,\ldots,\mathbf{B}_4\notin\mathsf{BI}(\mathrm{ce})$, 
as verified in the preceding examples.
\end{example}

These relations are summarized in Figure~\ref{fig:hasse2}.

\begin{figure}[ht]
\centering
\begin{tikzpicture}[
  every node/.style={font=\small, inner sep=3pt},
  edge/.style={thick, shorten >=2pt, shorten <=2pt}
]
\node (BI)     at ( 0.0,  4.0) {$\mathsf{BI}$};
\node (BIciw)  at (-2.0,  2.5) {$\mathsf{BI}(\mathrm{ciw})$};
\node (BIcf)   at ( 0.5,  2.5) {$\mathsf{BI}(\mathrm{cf})$};
\node (BIce)   at ( 2.5,  2.5) {$\mathsf{BI}(\mathrm{ce})$};
\node (BIci)   at (-3.0,  1.0) {$\mathsf{BI}(\mathrm{ci})$};
\node (BIcl)   at (-1.0,  1.0) {$\mathsf{BI}(\mathrm{cl})$};
\node (BIcicf) at (-1.0, -0.5)
  {$\mathsf{BI}(\mathrm{ci},\mathrm{cf})=\mathsf{BI}(\mathrm{cl},\mathrm{cf})$};
\draw[edge] (BI)    -- (BIciw);
\draw[edge] (BI)    -- (BIcf);
\draw[edge] (BI)    -- (BIce);
\draw[edge] (BIciw) -- (BIci);
\draw[edge] (BIciw) -- (BIcl);
\draw[edge] (BIci)  -- (BIcicf);
\draw[edge] (BIcl)  -- (BIcicf);
\draw[edge] (BIcf)  -- (BIcicf);
\end{tikzpicture}
\caption{Hasse diagram of the principal subvarieties of $\mathsf{BI}$ 
generated by $(\mathrm{ciw})$, $(\mathrm{ci})$, $(\mathrm{cl})$, 
$(\mathrm{cf})$, and $(\mathrm{ce})$. The varieties 
$\mathsf{BI}(\mathrm{ca}_\#)$ are omitted for clarity.}
\label{fig:hasse2}
\end{figure}

The two diagrams are connected at $\mathsf{BI}(\mathrm{ciw})$, which 
appears as the infimum of the upper diamond in Figure~\ref{fig:hasse1} 
and as a proper subvariety of $\mathsf{BI}$ in Figure~\ref{fig:hasse2}. 
The interaction of these varieties with the propagation axioms 
$(\mathrm{ca}_\#)$, $(\mathrm{a}_\circ)$, and $(\mathrm{a}_\neg)$ will 
be analyzed in Section~\ref{sec:paraconsistency}, where the explosive 
combinations are classified.
\subsection{Paraconsistency under axiomatic extensions}\label{sec:paraconsistency}

We now analyze which combinations of the axioms introduced in 
Section~\ref{sec:extensions} preserve paraconsistency and which force 
the logic to collapse to classical behavior. The main result of this 
subsection is a complete classification of the explosive pairs among 
the fourteen axioms, together with the algebraic mechanism behind 
each case.

We begin with three technical lemmas that will be used throughout 
the proofs.

\begin{lemma}\label{lem:ca-inclusion}
$\mathsf{BI}(\mathrm{ca}_\to)\subseteq\mathsf{BI}(\mathrm{ca}_\vee)$.
\end{lemma}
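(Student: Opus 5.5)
The plan is to express $a\vee b$ as an implication built only from $a$ and $b$, so that $(\mathrm{ca}_\to)$, applied twice, yields $(\mathrm{ca}_\vee)$. The natural candidate is the Boolean identity
\[
a\vee b=(a\to b)\to b,
\]
valid in every Boolean algebra, and so in the Boolean reduct of every BALFI. To check it, compute
\[
(a\to b)\to b={\sim}({\sim}a\vee b)\vee b=(a\wedge{\sim}b)\vee b=a\vee b .
\]
This identity is the key step. The more obvious route, $a\vee b={\sim}a\to b$, would require information about $\circ{\sim}a$, which the BALFI axioms do not provide. The form $(a\to b)\to b$ avoids this, since it involves only $a$, $b$ and $a\to b$, whose $\circ$-values $(\mathrm{ca}_\to)$ does control.

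Next, let $\mathbf{B}\in\mathsf{BI}(\mathrm{ca}_\to)$ and $a,b\in B$. I will write the defining equation of $(\mathrm{ca}_\to)$ in its order form,
\[
\circ x\wedge\circ y\leq\circ(x\to y)\quad\text{for all } x,y\in B.
\]
First apply it with $x=a$, $y=b$, to get $\circ a\wedge\circ b\leq\circ(a\to b)$. Meeting both sides with $\circ b$ gives
\[
\circ a\wedge\circ b\leq\circ(a\to b)\wedge\circ b .
\]

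Then apply the order form a second time, with $x=a\to b$ and $y=b$:
\[
\circ(a\to b)\wedge\circ b\leq\circ\bigl((a\to b)\to b\bigr)=\circ(a\vee b),
\]
where the last equality uses the Boolean identity above. Chaining the two inequalities gives $\circ a\wedge\circ b\leq\circ(a\vee b)$, which is exactly the $(\mathrm{ca}_\vee)$ equation $(\circ a\wedge\circ b)\wedge\circ(a\vee b)=\circ a\wedge\circ b$. Hence $\mathbf{B}\in\mathsf{BI}(\mathrm{ca}_\vee)$, and the inclusion follows.
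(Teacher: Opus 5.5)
Your proposal is correct and follows essentially the same route as the paper: it uses the Boolean identity $a\vee b=(a\to b)\to b$ and applies $(\mathrm{ca}_\to)$ twice, first to $a,b$ and then to $a\to b,b$, chaining $\circ a\wedge\circ b\leq\circ(a\to b)\wedge\circ b\leq\circ(a\vee b)$. Your explicit verification of the identity and of the order form of the equation just spells out steps the paper leaves implicit.
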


\begin{proof}
In every Boolean algebra, $x\vee y=(x\to y)\to y$. Hence, for every 
$\mathbf{A}\in\mathsf{BI}(\mathrm{ca}_\to)$ and every $a,b\in A$:
\[
\circ a\wedge\circ b
\leq \circ(a\to b)\wedge\circ b
\leq \circ((a\to b)\to b)
= \circ(a\vee b),
\]
where the first inequality uses $(\mathrm{ca}_\to)$ applied to $a$ 
and $b$, and the second applies $(\mathrm{ca}_\to)$ to $a\to b$ and 
$b$. Thus $(\circ a\wedge\circ b)\wedge\circ(a\vee b)=\circ a\wedge
\circ b$, so $\mathbf{A}\in\mathsf{BI}(\mathrm{ca}_\vee)$.
\end{proof}

\begin{lemma}\label{lem:cl-basic}
In every $\mathbf{A}\in\mathsf{BI}(\mathrm{cl})$, the following hold 
for all $a\in A$:
\begin{enumerate}
\item[(i)] $(a\wedge\neg a)\wedge\neg(a\wedge\neg a)=0$;
\item[(ii)] $\neg(a\wedge\neg a)={\sim}(a\wedge\neg a)$.
\end{enumerate}
\end{lemma}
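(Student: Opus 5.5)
The plan is to work directly from the equational characterization of $\mathsf{BI}(\mathrm{cl})$, namely $\circ x=\neg(x\wedge\neg x)$ for every $x$, combined with the two defining BALFI equations and Lemma~\ref{lem:neg-complement-bound}. Throughout we abbreviate $c:=a\wedge\neg a$, so that the two claims read $c\wedge\neg c=0$ and $\neg c={\sim}c$.

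For (i), we instantiate the gentle-explosion equation $x\wedge\neg x\wedge\circ x=0$ at $x=a$, which gives $a\wedge\neg a\wedge\circ a=0$, that is, $c\wedge\circ a=0$. The $(\mathrm{cl})$ equation replaces $\circ a$ by $\neg(a\wedge\neg a)=\neg c$. This yields $c\wedge\neg c=0$, which is exactly (i). No further work is needed.

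For (ii), we establish the two inequalities separately. First, Lemma~\ref{lem:neg-complement-bound} applied to $x=c$ gives ${\sim}c\le\neg c$. For the reverse inequality, item (i) says $c\wedge\neg c=0$. In a Boolean algebra, $y\wedge c=0$ holds if and only if $y\le{\sim}c$, so we obtain $\neg c\le{\sim}c$. Combining the two inequalities gives $\neg c={\sim}c$.

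We do not expect a real obstacle here. The only point needing care is to apply the BALFI equations to the element $a$ itself in (i), while applying Lemma~\ref{lem:neg-complement-bound} to the compound element $c=a\wedge\neg a$ in (ii). Part (i) then supplies the complementary bound that Lemma~\ref{lem:neg-complement-bound} alone does not provide. Note also that only the (cl) equation at the single element $a$ is used, so the argument is purely pointwise.
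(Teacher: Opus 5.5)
Your proposal is correct and follows essentially the same route as the paper. Part (i) is identical: substitute $\circ a=\neg(a\wedge\neg a)$ into the gentle-explosion equation. For (ii), the paper invokes uniqueness of Boolean complements from $c\wedge\neg c=0$ and $c\vee\neg c=1$, and your two inequalities (via Lemma~\ref{lem:neg-complement-bound} and the characterization $y\wedge c=0\iff y\le{\sim}c$) are just that uniqueness argument written out.
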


\begin{proof}
For~(i): substituting the algebraic form of $(\mathrm{cl})$, namely 
$\circ a=\neg(a\wedge\neg a)$, into the BALFI equation 
$a\wedge\neg a\wedge\circ a=0$ gives 
$(a\wedge\neg a)\wedge\neg(a\wedge\neg a)=0$.

For~(ii): let $t=a\wedge\neg a$. By~(i), $t\wedge\neg t=0$. Since 
every BALFI satisfies $t\vee\neg t=1$, the element $\neg t$ is both 
a complement and a Boolean complement of $t$; by uniqueness of 
complements in Boolean algebras, $\neg t={\sim}t$, i.e., 
$\neg(a\wedge\neg a)={\sim}(a\wedge\neg a)$.
\end{proof}

\begin{lemma}\label{lem:cl-cav}
In every $\mathbf{A}\in\mathsf{BI}(\mathrm{cl},\mathrm{ca}_\vee)$, 
the following inequality holds for all $a,b\in A$:
\[
(a\vee b)\wedge\neg(a\vee b)\leq(a\wedge\neg a)\vee(b\wedge\neg b).
\]
\end{lemma}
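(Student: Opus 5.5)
The plan is to translate both hypotheses into statements about Boolean complements and then obtain the inequality by contraposition inside the Boolean reduct. The key observation is that in $\mathsf{BI}(\mathrm{cl})$ the consistency operator is the Boolean complement of the "contradiction element" $x\wedge\neg x$.

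First, fix $\mathbf{A}\in\mathsf{BI}(\mathrm{cl},\mathrm{ca}_\vee)$ and $x\in A$. The algebraic form of $(\mathrm{cl})$ gives $\circ x=\neg(x\wedge\neg x)$. Lemma~\ref{lem:cl-basic}(ii) gives $\neg(x\wedge\neg x)={\sim}(x\wedge\neg x)$. Combining the two,
\[
\circ x={\sim}(x\wedge\neg x)\qquad\text{for every }x\in A.
\]
We apply this with $x=a$, $x=b$, and $x=a\vee b$.

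Second, we use $(\mathrm{ca}_\vee)$. Its equational form $(\circ a\wedge\circ b)\wedge\circ(a\vee b)=\circ a\wedge\circ b$ is equivalent to the inequality $\circ a\wedge\circ b\leq\circ(a\vee b)$. Substituting the identity from the first step turns this into
\[
{\sim}(a\wedge\neg a)\wedge{\sim}(b\wedge\neg b)\leq{\sim}\bigl((a\vee b)\wedge\neg(a\vee b)\bigr).
\]
By De Morgan, the left-hand side equals ${\sim}\bigl((a\wedge\neg a)\vee(b\wedge\neg b)\bigr)$. Boolean complementation is order-reversing and involutive, so ${\sim}u\leq{\sim}v$ implies $v\leq u$. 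Applying this yields exactly $(a\vee b)\wedge\neg(a\vee b)\leq(a\wedge\neg a)\vee(b\wedge\neg b)$.

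I do not expect a real obstacle here. The only substantive point is identifying $\circ$ with ${\sim}(x\wedge\neg x)$, and Lemma~\ref{lem:cl-basic} already supplies it. After that the argument is purely Boolean. The one thing to check is that the identity is applied to the compound element $a\vee b$, which is legitimate because Lemma~\ref{lem:cl-basic} holds for all elements of $\mathbf{A}$.
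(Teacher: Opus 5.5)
Your proof is correct and follows essentially the same route as the paper: use $(\mathrm{cl})$ together with Lemma~\ref{lem:cl-basic}(ii) to rewrite $\circ x$ as ${\sim}(x\wedge\neg x)$ (applied at $a$, $b$, and $a\vee b$), substitute into $(\mathrm{ca}_\vee)$, and then take Boolean complements with De Morgan. You also make explicit two steps the paper leaves implicit: that the identity is applied to the compound element $a\vee b$, and that the equational form of $(\mathrm{ca}_\vee)$ is equivalent to the inequality $\circ a\wedge\circ b\leq\circ(a\vee b)$.
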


\begin{proof}
By Lemma~\ref{lem:cl-basic}(ii), 
$\circ a=\neg(a\wedge\neg a)={\sim}(a\wedge\neg a)$ and 
$\circ b={\sim}(b\wedge\neg b)$. Applying $(\mathrm{ca}_\vee)$:
\[
{\sim}(a\wedge\neg a)\wedge{\sim}(b\wedge\neg b)
\leq\circ(a\vee b)
=\sim((a\vee b)\wedge\neg(a\vee b)).
\]
Taking Boolean complements of both sides and applying De Morgan's law:
\[
(a\vee b)\wedge\neg(a\vee b)
\leq{\sim}{\sim}(a\wedge\neg a)\vee{\sim}{\sim}(b\wedge\neg b)
=(a\wedge\neg a)\vee(b\wedge\neg b).\qedhere
\]
\end{proof}
\subsubsection*{Explosive pairs: the two base cases}

The following two theorems establish the explosivity of the two base 
cases from which all remaining explosive pairs are derived. They are 
proved by distinct algebraic arguments.

\begin{theorem}\label{thm:cl-cav}
Every $\mathbf{A}\in\mathsf{BI}(\mathrm{cl},\mathrm{ca}_\vee)$ 
satisfies $a\wedge\neg a=0$ for all $a\in A$.
\end{theorem}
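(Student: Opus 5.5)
The plan is to apply Lemma~\ref{lem:cl-cav} to a suitable Boolean decomposition of $a$, splitting it into its ``contradictory part'' and the remainder. Fix $a\in A$ and set
\[
t = a\wedge\neg a, \qquad c = a\wedge{\sim}t .
\]
Since $t\leq a$, Boolean reasoning gives $a = t\vee c$, and by construction $c\wedge t = 0$.

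Next I apply Lemma~\ref{lem:cl-cav} to the pair $(t,c)$. Because $a=t\vee c$, this yields
\[
t = a\wedge\neg a = (t\vee c)\wedge\neg(t\vee c) \leq (t\wedge\neg t)\vee(c\wedge\neg c).
\]
By Lemma~\ref{lem:cl-basic}(i), $t\wedge\neg t = (a\wedge\neg a)\wedge\neg(a\wedge\neg a)=0$. The inequality therefore reduces to
\[
t\leq c\wedge\neg c\leq c .
\]

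Finally, $c\leq{\sim}t$ by definition, so $t\leq{\sim}t$, which in a Boolean algebra forces $t = t\wedge{\sim}t = 0$. Hence $a\wedge\neg a=0$, as required.

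The only real obstacle is choosing a decomposition $a=t\vee c$ in which one summand is already known to be ``consistent'' (here $t\wedge\neg t=0$) and the other is Boolean-disjoint from $t$. Once $c=a\wedge{\sim}(a\wedge\neg a)$ is chosen, every remaining step is routine Boolean algebra.
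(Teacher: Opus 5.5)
Your proposal is correct and takes essentially the same route as the paper: the same decomposition $a=t\vee(a\wedge{\sim}t)$ with $t=a\wedge\neg a$, the same application of Lemma~\ref{lem:cl-cav} to this pair, and Lemma~\ref{lem:cl-basic}(i) to eliminate $t\wedge\neg t$. The conclusion $t\leq{\sim}t$, hence $t=0$, is then reached exactly as in the paper.
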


\begin{proof}
Let $t:=a\wedge\neg a$. By definition, $t\leq a$, so $t\vee a=a$.

By Lemma~\ref{lem:cl-basic}(i), $t\wedge\neg t=0$, and since every 
BALFI satisfies $t\vee\neg t=1$, Lemma~\ref{lem:cl-basic}(ii) gives 
$\neg t={\sim}t$.

We decompose $a$ using the Boolean complement of $t$:
\[
a = a\wedge(t\vee{\sim}t) = t\vee(a\wedge{\sim}t).
\]

Applying Lemma~\ref{lem:cl-cav} with $x=t$ and $y=a\wedge{\sim}t$:
\[
(t\vee(a\wedge{\sim}t))\wedge\neg(t\vee(a\wedge{\sim}t))
\leq (t\wedge\neg t)\vee((a\wedge{\sim}t)\wedge\neg(a\wedge{\sim}t)).
\]

Since $t\wedge\neg t=0$ and the left-hand side equals $a\wedge\neg a$, 
this reduces to:
\[
a\wedge\neg a \leq (a\wedge{\sim}t)\wedge\neg(a\wedge{\sim}t). 
\tag{$*$}
\]

On the other hand, by definition of meet:
\[
(a\wedge{\sim}t)\wedge\neg(a\wedge{\sim}t)
\leq a\wedge{\sim}t \leq {\sim}t.
\]

Combining with~($*$):
\[
a\wedge\neg a \leq {\sim}t = {\sim}(a\wedge\neg a).
\]

Therefore $(a\wedge\neg a)\wedge{\sim}(a\wedge\neg a) = a\wedge\neg a$, 
which in a Boolean algebra forces $a\wedge\neg a=0$.
\end{proof}

\begin{theorem}\label{thm:d1-cato}
Every $\mathbf{A}\in\mathsf{BI}(d1,\mathrm{ca}_\to)$ satisfies 
$a\wedge\neg a=0$ for all $a\in A$.
\end{theorem}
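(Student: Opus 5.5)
The plan is to show that in every $\mathbf{A}\in\mathsf{BI}(d1,\mathrm{ca}_\to)$ the consistency operator is constantly $1$, i.e.\ $\circ a=1$ for all $a\in A$. The conclusion then follows at once from the BALFI equation: $a\wedge\neg a=a\wedge\neg a\wedge\circ a=0$.

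First, I will rewrite $(d1)$ in order-theoretic form. The equation $\circ x\vee x=1$ is equivalent, in any Boolean algebra, to ${\sim}x\leq\circ x$; this is the same argument as in Lemma~\ref{lem:neg-complement-bound}. Taking $x=0$ gives $\circ 0=1$.

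Next, I will use the Boolean identity ${\sim}y=y\to 0$ together with $(\mathrm{ca}_\to)$. Applied to the pair $(y,0)$, it gives
\[
\circ({\sim}y)=\circ(y\to 0)\geq\circ y\wedge\circ 0=\circ y .
\]
Applying this with $y={\sim}a$ and using ${\sim}{\sim}a=a$ yields $\circ a\geq\circ({\sim}a)$. Applying $(d1)$ to ${\sim}a$ gives $\circ({\sim}a)\geq{\sim}{\sim}a=a$. Hence $\circ a\geq a$.

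Finally, I combine this with ${\sim}a\leq\circ a$ from $(d1)$. Together they give $\circ a\geq a\vee{\sim}a=1$, so $\circ a=1$, and the theorem follows.

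The only step requiring any insight is the second one, namely realizing that $\circ 0=1$ lets $(\mathrm{ca}_\to)$ transport consistency through Boolean complementation. Everything else is routine Boolean manipulation, so I do not expect a genuine obstacle.
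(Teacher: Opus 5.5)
Your proposal is correct and follows essentially the same route as the paper: derive $\circ 0=1$ from $(d1)$, apply $(\mathrm{ca}_\to)$ with second argument $0$ to compare $\circ a$ with $\circ({\sim}a)$, then apply $(d1)$ to both $a$ and ${\sim}a$ to force $\circ a=1$. The only difference is that you use just the inequality $\circ({\sim}a)\leq\circ a$, while the paper proves the full equality $\circ a=\circ({\sim}a)$, which is slightly more than the argument needs.
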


\begin{proof}
Let $\mathbf{A}\in\mathsf{BI}(d1,\mathrm{ca}_\to)$. We first show 
that $\circ a=1$ for every $a\in A$.

Applying $(d1)$ at $a=0$: $\circ 0\vee 0=1$, hence $\circ 0=1$.

Fix an arbitrary $a\in A$. Applying $(\mathrm{ca}_\to)$ with $b=0$:
\[
\circ a\wedge\circ 0\leq\circ(a\to 0).
\]
Since $\circ 0=1$ and $a\to 0={\sim}a$ in any Boolean algebra, 
this gives $\circ a\leq\circ({\sim}a)$.

Applying $(\mathrm{ca}_\to)$ again with ${\sim}a$ and $b=0$:
\[
\circ({\sim}a)\wedge\circ 0\leq\circ({\sim}a\to 0).
\]
Since ${\sim}a\to 0={\sim}({\sim}a)=a$, this gives 
$\circ({\sim}a)\leq\circ a$. Together with the previous inequality:
\[
\circ a=\circ({\sim}a).
\]

Now apply $(d1)$ to $a$ and to ${\sim}a$:
\[
\circ a\vee a=1 \qquad\text{and}\qquad \circ({\sim}a)\vee{\sim}a=1.
\]
Using $\circ a=\circ({\sim}a)$:
\[
{\sim}\circ a\leq a \qquad\text{and}\qquad {\sim}\circ a\leq{\sim}a,
\]
hence ${\sim}\circ a\leq a\wedge{\sim}a=0$, so $\circ a=1$.

Since $\mathbf{A}$ is a BALFI, $a\wedge\neg a\wedge\circ a=0$. As 
$\circ a=1$ for every $a$, we conclude $a\wedge\neg a=0$.
\end{proof}

\subsubsection*{A structural coincidence}

The following result shows that the two representative explosive 
cases actually collapse to the same variety.

\begin{proposition}\label{prop:coincidence}
$\mathsf{BI}(\mathrm{cl},\mathrm{ca}_\vee)=
\mathsf{BI}(\mathrm{ciw},\mathrm{ca}_\to)$.
\end{proposition}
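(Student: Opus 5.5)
The plan is to show that both varieties coincide with the class $\mathsf{C}$ of BALFIs in which $\neg$ is the Boolean complement and $\circ$ is constantly $1$. That is, $\mathsf{C}$ consists of those $\mathbf{A}\in\mathsf{BI}$ such that $\neg a={\sim}a$ and $\circ a=1$ for all $a\in A$. I will prove four inclusions: each of the two varieties is contained in $\mathsf{C}$, and $\mathsf{C}$ is contained in each of them.

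\emph{First, $\mathsf{BI}(\mathrm{cl},\mathrm{ca}_\vee)\subseteq\mathsf{C}$.} Theorem~\ref{thm:cl-cav} gives $a\wedge\neg a=0$. Every BALFI also satisfies $a\vee\neg a=1$, so $\neg a$ is a Boolean complement of $a$. By uniqueness of complements (the same argument as in Lemma~\ref{lem:cl-basic}(ii)), $\neg a={\sim}a$. The algebraic form of $(\mathrm{cl})$ then gives $\circ a=\neg(a\wedge\neg a)=\neg 0={\sim}0=1$.

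\emph{Second, $\mathsf{BI}(\mathrm{ciw},\mathrm{ca}_\to)\subseteq\mathsf{C}$.} By relation~(2) of Section~\ref{sec:varieties}, $\mathsf{BI}(\mathrm{ciw})=\mathsf{BI}(d1)\cap\mathsf{BI}(d2)$, so such an algebra lies in $\mathsf{BI}(d1,\mathrm{ca}_\to)$. Theorem~\ref{thm:d1-cato} then yields $a\wedge\neg a=0$, and its proof already establishes $\circ a=1$. Alternatively, $\circ a=1$ follows directly from $(\mathrm{ciw})$: $\circ a={\sim}(a\wedge\neg a)={\sim}0=1$. As before, $a\wedge\neg a=0$ together with $a\vee\neg a=1$ forces $\neg a={\sim}a$.

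\emph{Third, $\mathsf{C}$ is contained in both varieties.} Let $\mathbf{A}\in\mathsf{C}$. It is a BALFI, since $a\vee{\sim}a=1$ and $a\wedge{\sim}a\wedge 1=0$. For $(\mathrm{cl})$: $\neg(a\wedge\neg a)={\sim}0=1=\circ a$. For $(\mathrm{ciw})$: ${\sim}(a\wedge\neg a)=1=\circ a$. The axioms $(\mathrm{ca}_\vee)$ and $(\mathrm{ca}_\to)$ hold trivially, because every $\circ$-value is $1$. Thus $\mathsf{C}$ lies in both varieties, which gives the equality.

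There is no real obstacle. The only point needing care is the second inclusion, which passes from $(\mathrm{ciw})$ to $(d1)$ via the recorded relation $\mathsf{BI}(\mathrm{ciw})\subseteq\mathsf{BI}(d1)$. Once that is in place, the equation $\circ a=1$ is either read off from the proof of Theorem~\ref{thm:d1-cato} or derived directly from $(\mathrm{ciw})$ as above.
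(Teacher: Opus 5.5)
Your proposal is correct and follows essentially the same route as the paper. Both arguments collapse each variety to the classical configuration ($\neg={\sim}$, $\circ\equiv 1$): for $\mathsf{BI}(\mathrm{cl},\mathrm{ca}_\vee)$ via Theorem~\ref{thm:cl-cav}, and for $\mathsf{BI}(\mathrm{ciw},\mathrm{ca}_\to)$ via Theorem~\ref{thm:d1-cato} using $\mathsf{BI}(\mathrm{ciw})\subseteq\mathsf{BI}(d1)$. They then check the other pair of axioms in that configuration; naming the intermediate class explicitly is only a repackaging of the paper's two direct inclusions.
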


\begin{proof}
We show that both varieties consist precisely of those BALFIs 
satisfying $\neg a={\sim}a$ and $\circ a=1$ for all $a$.

$(\subseteq)$ Let $\mathbf{A}\in\mathsf{BI}(\mathrm{cl},\mathrm{ca}_\vee)$. 
By Theorem~\ref{thm:cl-cav}, $a\wedge\neg a=0$ for all $a$. Together 
with $a\vee\neg a=1$, this gives $\neg a={\sim}a$. Then 
$\circ a=\neg(a\wedge\neg a)=\neg 0=1$. Hence $(\mathrm{ciw})$ holds 
since $\circ a=1={\sim}0={\sim}(a\wedge\neg a)$, and $(\mathrm{ca}_\to)$ 
holds since $(\circ a\wedge\circ b)\wedge\circ(a\to b)=1=
\circ a\wedge\circ b$. Thus 
$\mathbf{A}\in\mathsf{BI}(\mathrm{ciw},\mathrm{ca}_\to)$.

$(\supseteq)$ Let $\mathbf{A}\in\mathsf{BI}(\mathrm{ciw},\mathrm{ca}_\to)$. 
By Theorem~\ref{thm:d1-cato} (since 
$\mathsf{BI}(\mathrm{ciw},\mathrm{ca}_\to)\subseteq
\mathsf{BI}(d1,\mathrm{ca}_\to)$), $a\wedge\neg a=0$ for all $a$. 
Again $\neg a={\sim}a$ and $\circ a={\sim}(a\wedge\neg a)=1$ via 
$(\mathrm{ciw})$. Hence $(\mathrm{cl})$ holds since 
$\circ a=1=\neg 0=\neg(a\wedge\neg a)$, and $(\mathrm{ca}_\vee)$ 
holds since $(\circ a\wedge\circ b)\wedge\circ(a\vee b)=1=
\circ a\wedge\circ b$. Thus 
$\mathbf{A}\in\mathsf{BI}(\mathrm{cl},\mathrm{ca}_\vee)$.
\end{proof}

\begin{remark}
In both cases the algebraic collapse forces $\neg$ to coincide with 
the Boolean complement and $\circ$ to be identically $1$, which is 
precisely the classical configuration. 
\end{remark}

\subsubsection*{Explosive pairs: derived cases}

The remaining four explosive pairs now follow by inclusion of 
varieties, using the relations established in 
Section~\ref{sec:varieties} and Proposition~\ref{prop:coincidence}.

\begin{corollary}\label{cor:explosive-pairs}
The following varieties are explosive:\\[1mm]

$\begin{array}{ll}
1. \ \mathsf{BI}(\mathrm{cl},\mathrm{ca}_\to) & \hspace{2cm} 3. \ \mathsf{BI}(\mathrm{ci},\mathrm{ca}_\to)\\
2. \ \mathsf{BI}(\mathrm{ciw},\mathrm{ca}_\to) & \hspace{2cm} 4. \ \mathsf{BI}(i1,\mathrm{ca}_\to).
\end{array}$

\end{corollary}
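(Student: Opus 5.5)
The plan is to reduce every item to Theorem~\ref{thm:d1-cato}. That theorem forces $a\wedge\neg a=0$ throughout $\mathsf{BI}(d1,\mathrm{ca}_\to)$, and the property is inherited by subvarieties. So it suffices to show that each of the four varieties in the statement is contained in $\mathsf{BI}(d1,\mathrm{ca}_\to)$. Since $(\mathrm{ca}_\to)$ appears in every pair, this amounts to checking that each first component implies $(d1)$.

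Item~2 follows from the fact that $\mathsf{BI}(\mathrm{ciw})\subseteq\mathsf{BI}(d1)$, because $\mathsf{BI}(\mathrm{ciw})=\mathsf{BI}(d1)\cap\mathsf{BI}(d2)$. This case was already noted in the proof of Proposition~\ref{prop:coincidence}.

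Item~1 uses the inclusion $\mathsf{BI}(\mathrm{cl})\subseteq\mathsf{BI}(\mathrm{ciw})$ from Section~\ref{sec:varieties}. Directly, Lemma~\ref{lem:cl-basic}(ii) gives $\circ a={\sim}(a\wedge\neg a)$, so $\circ a\vee a\geq{\sim}a\vee a=1$. Hence $\mathsf{BI}(\mathrm{cl},\mathrm{ca}_\to)\subseteq\mathsf{BI}(\mathrm{ciw},\mathrm{ca}_\to)$, and item~2 applies.

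Item~3 uses $\mathsf{BI}(\mathrm{ci})\subseteq\mathsf{BI}(\mathrm{ciw})$ (equivalently, $\mathsf{BI}(\mathrm{ci})\subseteq\mathsf{BI}(i1)\subseteq\mathsf{BI}(d1)$). Item~4 uses $\mathsf{BI}(i1)\subseteq\mathsf{BI}(d1)$. If a direct check is wanted for the latter: from $\neg\circ a\leq a$ and $\circ a\vee\neg\circ a=1$ we get $\circ a\vee a=1$.

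In each case we conclude with Theorem~\ref{thm:d1-cato}. The algebra satisfies $a\wedge\neg a=0$ for all $a$, and therefore $\neg a={\sim}a$. This makes the corresponding logic explosive, i.e.\ not paraconsistent.

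There is no real obstacle in this argument. The only point needing care is that the inclusions used are genuinely available from the earlier listed relations or are verified directly, as sketched above for $(\mathrm{cl})$ and $(i1)$.
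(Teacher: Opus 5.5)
Your argument is correct. For items 2--4 it coincides with the paper's proof: the inclusions $\mathsf{BI}(\mathrm{ci})\subseteq\mathsf{BI}(\mathrm{ciw})\subseteq\mathsf{BI}(d1)$ and $\mathsf{BI}(i1)\subseteq\mathsf{BI}(d1)$, followed by Theorem~\ref{thm:d1-cato}.

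The difference is item~1. The paper keeps $(\mathrm{cl})$ and weakens the second component. It uses Lemma~\ref{lem:ca-inclusion} to get $\mathsf{BI}(\mathrm{cl},\mathrm{ca}_\to)\subseteq\mathsf{BI}(\mathrm{cl},\mathrm{ca}_\vee)$, and then invokes the other base case, Theorem~\ref{thm:cl-cav}. You instead strengthen the first component, via $\mathsf{BI}(\mathrm{cl})\subseteq\mathsf{BI}(\mathrm{ciw})\subseteq\mathsf{BI}(d1)$. Your direct check is valid: Lemma~\ref{lem:cl-basic}(ii) gives $\circ a={\sim}(a\wedge\neg a)\geq{\sim}a$, hence $\circ a\vee a=1$.

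What your route buys is uniformity. The whole corollary rests on the single Theorem~\ref{thm:d1-cato}, and neither Lemma~\ref{lem:ca-inclusion} nor Theorem~\ref{thm:cl-cav} is needed for it.

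What the paper's route buys is that item~1 appears as an instance of the stronger base case $(\mathrm{cl},\mathrm{ca}_\vee)$. That case is not directly within reach of Theorem~\ref{thm:d1-cato}, because that argument uses $\mathrm{ca}_\to$ essentially: it obtains $\circ a=\circ({\sim}a)$ from $a\to 0={\sim}a$. So Theorem~\ref{thm:cl-cav} is still required for the classification as a whole, even though your proof shows it is dispensable for this particular corollary.
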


\begin{proof}
For~(1): by Lemma~\ref{lem:ca-inclusion}, 
$\mathsf{BI}(\mathrm{cl},\mathrm{ca}_\to)\subseteq
\mathsf{BI}(\mathrm{cl},\mathrm{ca}_\vee)$,
so every BALFI in $\mathsf{BI}(\mathrm{cl},\mathrm{ca}_\to)$ 
satisfies $a\wedge\neg a=0$ by Theorem~\ref{thm:cl-cav}.

For~(2)--(4): by Section~\ref{sec:varieties},
\[
\mathsf{BI}(\mathrm{ci})\subseteq
\mathsf{BI}(\mathrm{ciw})\subseteq
\mathsf{BI}(d1),
\qquad
\mathsf{BI}(i1)\subseteq\mathsf{BI}(d1).
\]
Hence:
\[
\mathsf{BI}(\mathrm{ci},\mathrm{ca}_\to)\subseteq
\mathsf{BI}(\mathrm{ciw},\mathrm{ca}_\to)\subseteq
\mathsf{BI}(d1,\mathrm{ca}_\to),
\qquad
\mathsf{BI}(i1,\mathrm{ca}_\to)\subseteq
\mathsf{BI}(d1,\mathrm{ca}_\to).
\]
Every BALFI in each of these varieties satisfies $a\wedge\neg a=0$ 
by Theorem~\ref{thm:d1-cato}.
\end{proof}
Let $K$ denote the class of BALFIs 
$\mathbf{A}=\langle A,\wedge,\vee,\to,\neg,\circ,0,1\rangle$ such that
\[
\neg a={\sim}a
\qquad\text{and}\qquad
{\circ} a=1
\]
for every $a\in A$.

\begin{proposition}[Unification of the explosive pairs]\label{prop:explosive-unification}
The following varieties all coincide with $K$, and hence with one 
another:
\[
\mathsf{BI}(\mathrm{cl},\mathrm{ca}_\vee)=
\mathsf{BI}(d1,\mathrm{ca}_\to)=
\mathsf{BI}(\mathrm{cl},\mathrm{ca}_\to)=
\mathsf{BI}(\mathrm{ciw},\mathrm{ca}_\to)=
\mathsf{BI}(\mathrm{ci},\mathrm{ca}_\to)=
\mathsf{BI}(i1,\mathrm{ca}_\to)=K.
\]
\end{proposition}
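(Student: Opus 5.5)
We prove that each of the six varieties is contained in $K$, and that $K$ is contained in each of them. Together these give all the equalities.

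\emph{Each variety is contained in $K$.} Let $\mathbf{A}$ lie in one of the six varieties. By Theorem~\ref{thm:cl-cav} (for $\mathsf{BI}(\mathrm{cl},\mathrm{ca}_\vee)$ and, via Lemma~\ref{lem:ca-inclusion}, for $\mathsf{BI}(\mathrm{cl},\mathrm{ca}_\to)$), Theorem~\ref{thm:d1-cato} (for $\mathsf{BI}(d1,\mathrm{ca}_\to)$), and Corollary~\ref{cor:explosive-pairs} (for the rest), we have $a\wedge\neg a=0$ for all $a$. Combined with the BALFI equation $a\vee\neg a=1$, uniqueness of Boolean complements gives $\neg a={\sim}a$. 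It remains to show $\circ a=1$, and this is done case by case.
\begin{itemize}
\item In the four varieties contained in $\mathsf{BI}(d1,\mathrm{ca}_\to)$ (those with $d1$, $\mathrm{ciw}$, $\mathrm{ci}$, $i1$), the first part of the proof of Theorem~\ref{thm:d1-cato} already established $\circ a=1$ directly. Here we use the inclusions $\mathsf{BI}(\mathrm{ci})\subseteq\mathsf{BI}(\mathrm{ciw})\subseteq\mathsf{BI}(d1)$ and $\mathsf{BI}(i1)\subseteq\mathsf{BI}(d1)$.
\item In the two varieties with $(\mathrm{cl})$, we get $\circ a=\neg(a\wedge\neg a)=\neg 0={\sim}0=1$, exactly as in Proposition~\ref{prop:coincidence}.
\end{itemize}

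\emph{$K$ is contained in each variety.} Let $\mathbf{A}\in K$, so $\neg={\sim}$ and $\circ\equiv 1$. We check each axiom equation.
\begin{itemize}
\item $(\mathrm{ca}_\to)$ and $(\mathrm{ca}_\vee)$ hold, since both sides equal $1$.
\item $(d1)$ holds, since $\circ a\vee a=1$.
\item $(\mathrm{ciw})$ holds, since $\circ a=1={\sim}0={\sim}(a\wedge{\sim}a)$.
\item $(\mathrm{cl})$ holds, since $\neg(a\wedge\neg a)={\sim}0=1=\circ a$.
\item $(i1)$ and $(\mathrm{ci})$ hold, since $\neg\circ a={\sim}1=0$, which equals $a\wedge\neg a=0$ and is below $a$.
\end{itemize}
Every element of $K$ is a BALFI by definition, so $\mathbf{A}$ belongs to all six varieties.

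The only step needing care is $\circ a=1$ in the four $\mathrm{ca}_\to$ cases without $(\mathrm{cl})$. There we do not want to rely on $\neg={\sim}$ alone, since an explosive BALFI need not have $\circ\equiv1$ in general. Instead we reuse the argument from the proof of Theorem~\ref{thm:d1-cato}: first $\circ 0=1$, then $\circ a=\circ({\sim}a)$, then $(d1)$ applied to $a$ and ${\sim}a$ forces ${\sim}\circ a\le a\wedge{\sim}a=0$. We make explicit that this argument applies to every variety contained in $\mathsf{BI}(d1,\mathrm{ca}_\to)$.
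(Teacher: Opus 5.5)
Your proof is correct and follows essentially the same route as the paper. You reduce each inclusion into $K$ to Theorems~\ref{thm:cl-cav} and~\ref{thm:d1-cato} (including the fact, established inside the latter's proof, that $\circ a=1$) via Lemma~\ref{lem:ca-inclusion} and the variety inclusions, and you check $K$ against each axiom directly; the only cosmetic difference is that you route $\mathsf{BI}(\mathrm{ciw},\mathrm{ca}_\to)$ directly through $\mathsf{BI}(d1,\mathrm{ca}_\to)$ instead of via Proposition~\ref{prop:coincidence}, which itself rests on that same inclusion.
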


\begin{proof}
\emph{Each variety is contained in $K$.}
\begin{itemize}
\item $\mathsf{BI}(\mathrm{cl},\mathrm{ca}_\vee)\subseteq K$: by 
Theorem~\ref{thm:cl-cav}, $a\wedge\neg a=0$ for every $a$, hence, 
together with $a\vee\neg a=1$, $\neg a={\sim}a$. Then, by 
$(\mathrm{cl})$, $\circ a=\neg(a\wedge\neg a)=\neg 0=1$, using the 
general BALFI fact $\neg 0=1$.

\item $\mathsf{BI}(d1,\mathrm{ca}_\to)\subseteq K$: by 
Theorem~\ref{thm:d1-cato}, $\circ a=1$ for every $a$ is established 
first, and $a\wedge\neg a=0$ follows from it; hence $\neg a={\sim}a$ 
as above.

\item $\mathsf{BI}(\mathrm{cl},\mathrm{ca}_\to)\subseteq K$: by 
Lemma~\ref{lem:ca-inclusion}, $\mathsf{BI}(\mathrm{ca}_\to)\subseteq
\mathsf{BI}(\mathrm{ca}_\vee)$, so 
$\mathsf{BI}(\mathrm{cl},\mathrm{ca}_\to)\subseteq
\mathsf{BI}(\mathrm{cl},\mathrm{ca}_\vee)\subseteq K$ by the first 
case.

\item $\mathsf{BI}(\mathrm{ciw},\mathrm{ca}_\to)\subseteq K$: this is 
the right-hand side of Proposition~\ref{prop:coincidence}, hence 
equal to $\mathsf{BI}(\mathrm{cl},\mathrm{ca}_\vee)\subseteq K$.

\item $\mathsf{BI}(\mathrm{ci},\mathrm{ca}_\to)\subseteq K$: by 
Section~\ref{sec:varieties}, $\mathsf{BI}(\mathrm{ci})\subseteq
\mathsf{BI}(\mathrm{ciw})$, so 
$\mathsf{BI}(\mathrm{ci},\mathrm{ca}_\to)\subseteq
\mathsf{BI}(\mathrm{ciw},\mathrm{ca}_\to)\subseteq K$ by the previous 
case.

\item $\mathsf{BI}(i1,\mathrm{ca}_\to)\subseteq K$: by 
Section~\ref{sec:varieties}, $\mathsf{BI}(i1)\subseteq\mathsf{BI}(d1)$, 
so $\mathsf{BI}(i1,\mathrm{ca}_\to)\subseteq
\mathsf{BI}(d1,\mathrm{ca}_\to)\subseteq K$ by the second case.
\end{itemize}

\emph{$K$ is contained in each variety.} Let $\mathbf{A}\in K$, so 
$\neg a={\sim}a$ and $\circ a=1$ for all $a\in A$. We verify each 
axiom directly.
\begin{itemize}
\item $(\mathrm{ca}_\#)$, $\#\in\{\wedge,\vee,\to\}$: 
$(\circ a\wedge\circ b)\to\circ(a\#b)$ becomes 
$(1\wedge 1)\to 1$, which holds.
\item $(d1)$: $\circ a\vee a=1\vee a=1$.
\item $(\mathrm{ciw})$: $\circ a={\sim}(a\wedge\neg a)$ becomes 
$1={\sim}(a\wedge{\sim}a)={\sim}0=1$.
\item $(\mathrm{cl})$: $\circ a=\neg(a\wedge\neg a)$ becomes 
$1={\sim}(a\wedge{\sim}a)={\sim}0=1$.
\item $(\mathrm{ci})$: $\neg(\circ a)=a\wedge\neg a$ becomes 
${\sim}1=a\wedge{\sim}a$, i.e., $0=0$.
\item $(i1)$: $\neg(\circ a)\wedge a=\neg(\circ a)$ becomes 
${\sim}1\wedge a={\sim}1$, i.e., $0\wedge a=0$.
\end{itemize}
Hence $\mathbf{A}$ satisfies each of the six axiom combinations, so 
$K$ is contained in each of the six varieties.

Combining both inclusions, each of the six varieties equals $K$, and 
in particular they coincide with one another.
\end{proof}
\begin{remark}
This shows that a single explosive core underlies all six combinations 
identified in this subsection: the variety of BALFIs in which $\neg$ 
is the Boolean complement and $\circ$ is identically $1$. The apparent 
diversity of explosive combinations collapses to one algebraic 
phenomenon.
\end{remark}
\begin{remark} The six explosive pairs identified above are $(cl,ca_\vee)$, 
$(d1,ca_\to)$, 
$(cl,ca_\to)$,
$(ciw,ca_\to)$,
$(ci,ca_\to)$, and
$(i1,ca_\to)$.
For every other pair of principal axioms, except
$(\mathrm{ce},\mathrm{cf})$, we found a finite paraconsistent BALFI
satisfying both axioms by means of a computer-assisted finite-model
search by Mace4 and Python.
\end{remark}
\subsubsection*{A structural obstruction: the case of 
\texorpdfstring{$(\mathrm{ce})$ and $(\mathrm{cf})$}{ce and cf}}

The explosive pairs identified above all involve a propagation axiom 
$(\mathrm{ca}_\#)$. A qualitatively different phenomenon arises for 
the combination $(\mathrm{ce},\mathrm{cf})$: rather than 
forcing explosion through the consistency operator, this combination 
makes the paraconsistent negation involutive, which in the finite 
setting collapses it to the Boolean complement. The following theorem 
identifies this as a structural boundary independent of the 
propagation axioms.

\begin{theorem}\label{thm:involutive}
Let $\mathbf{A}$ be a nontrivial finite Boolean algebra and let 
$\neg:A\to A$ be an operator satisfying, for every $x\in A$,
\[
\neg\neg x=x \qquad\text{and}\qquad x\vee\neg x=1.
\]
Then $\neg={\sim}$, i.e., $\neg$ coincides with the Boolean complement.
\end{theorem}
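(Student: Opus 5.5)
The plan is a counting argument that combines the pointwise inequality of Lemma~\ref{lem:neg-complement-bound} with the fact that an involutive $\neg$ is a bijection. Finiteness has to enter somewhere essential. A purely equational argument cannot work: applying the inequality to $\neg x$ only gives ${\sim}\neg x\le\neg\neg x=x$, which is equivalent to ${\sim}x\le\neg x$ and so adds nothing. Moreover, the paper signals that an infinite paraconsistent model of $(\mathrm{ce},\mathrm{cf})$ exists.

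First I record the two basic facts. Since $x\vee\neg x=1$, the argument of Lemma~\ref{lem:neg-complement-bound} applies verbatim and gives ${\sim}x\le\neg x$ for every $x\in A$; the full BALFI structure is not needed. Since $\neg\neg x=x$, the map $\neg$ is its own inverse and hence a permutation of $A$. The Boolean complement ${\sim}$ is also a permutation of $A$.

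Next I use finiteness. A nontrivial finite Boolean algebra is isomorphic to $\wp(X)$ for a finite set $X$ with $|X|=n\ge 1$ (equivalently, every element is the join of the atoms below it). Define $h(x)=|x|$, the number of atoms below $x$. This map is strictly monotone: if $y\le z$ and $h(y)=h(z)$, then $y=z$. Summing over all elements and reindexing along the two permutations gives
\[
\sum_{x\in A}h(\neg x)=\sum_{x\in A}h(x)=\sum_{x\in A}h({\sim}x).
\]
Hence $\sum_{x\in A}\bigl(h(\neg x)-h({\sim}x)\bigr)=0$. Each summand is nonnegative because ${\sim}x\le\neg x$, so every summand is zero, that is, $h(\neg x)=h({\sim}x)$ for all $x$. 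Strict monotonicity of $h$, together with ${\sim}x\le\neg x$, then yields $\neg x={\sim}x$.

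The main obstacle is identifying the right invariant, since the algebraic identities alone are insufficient. Once atom-counting (or simply the cardinality of the principal down-sets ${\downarrow}x$) is chosen, the remaining steps are routine. If one prefers to avoid the representation theorem, $h(x)=|{\downarrow}x|$ works equally well in any finite lattice: it is strictly monotone, and the same reindexing argument applies unchanged.
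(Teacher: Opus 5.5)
Your argument is correct and is essentially the paper's proof. The paper likewise identifies $A$ with $\wp(\mathrm{At}(A))$, uses that the involution is a bijection to get $\sum_X|f(X)|=\sum_X|X|$, and concludes $\sum_X|X\cap f(X)|=0$; your summand $h(\neg x)-h({\sim}x)$ is exactly $|X\cap f(X)|$, and reindexing along ${\sim}$ lets you skip the explicit evaluation $\sum_X|X|=n2^{n-1}$ that the paper performs.
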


\begin{proof}

Let \(A\) be a nontrivial finite Boolean algebra. Then \(A\) is isomorphic to the powerset algebra of its set of atoms, that is, \(A\approx\wp (At(A))\). 

Let \(f:\wp(At(A))\to \wp(At(A))\) be such that
\(f(f(X))=X\) and \(X\cup f(X)=At(A)\) for every \(X\in \wp(At(A))\).

Given \(X\in \wp(At(A))\), we have \(|X\cup f(X)|=|At(A)|\), hence
\[
|X| + |f(X)| - |X\cap f(X)| = |At(A)|.
\]
Therefore, if we suppose that \(A\) has \(n\) atoms, we obtain
\[
|X| + |f(X)| = n + |X\cap f(X)|.
\]
Summing this equality over all subsets \(X\in \wp(At(A))\), we get
\[
\sum_{X\in \wp(At(A))} |X| \;+\; \sum_{X\in \wp(At(A))} |f(X)| \;=\; \sum_{X\in \wp(At(A))} \big(n + |X\cap f(X)|\big).
\]
Since \(f\) is a bijection, \(\sum_{X\in \wp(At(A))} |f(X)| = \sum_{X\in \wp(At(A))} |X|\).
Moreover, \(|\wp(At(A))|=2^n\), and therefore
\[
2\sum_{X\in\wp(At(A)) } |X| \;=\; 2^n n \;+\; \sum_{X\in \wp(At(A))} |X\cap f(X)|.
\]
But it is known that
\(\sum_{X\in \wp(At(A))} |X| = n 2^{\,n-1}\). Substituting, we get
\[
2\big(n 2^{\,n-1}\big) \;=\; 2^n n \;=\; 2^n n \;+\; \sum_{X\in \wp(At(A))} |X\cap f(X)|.
\]
Hence
\[
\sum_{X\in \wp(At(A))} |X\cap f(X)| = 0.
\]
This implies that for every \(X\in \wp(At(A))\) we have
\(|X\cap f(X)|=0\). That is,
\[
X\cap f(X)=\varnothing, \ \text{for every }X\in \wp(At(A)).
\]

Therefore \(f(X)=X^{c}\) for every \(X\in \wp(At(A))\). Hence the negation \(\neg\) coincides
with classical negation.
\end{proof}
\begin{corollary}\label{cor:ce-cf}
There is no finite paraconsistent BALFI satisfying both $(\mathrm{ce})$ 
and $(\mathrm{cf})$.
\end{corollary}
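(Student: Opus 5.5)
The plan is to reduce the corollary directly to Theorem~\ref{thm:involutive}. Let $\mathbf{A}$ be a finite BALFI in $\mathsf{BI}(\mathrm{ce},\mathrm{cf})$. First, combine the two algebraic counterparts listed in Section~\ref{sec:extensions}. From $(\mathrm{ce})$, namely $a\wedge\neg\neg a=a$, we get $a\leq\neg\neg a$. From $(\mathrm{cf})$, namely $a\wedge\neg\neg a=\neg\neg a$, we get $\neg\neg a\leq a$. Antisymmetry then gives $\neg\neg a=a$ for every $a\in A$. The BALFI axiom $a\vee\neg a=1$ holds by definition. So the negation of $\mathbf{A}$ satisfies both hypotheses of Theorem~\ref{thm:involutive}.

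Second, dispose of the degenerate case. If $\mathbf{A}$ is trivial ($0=1$), then it validates every formula, so it is not paraconsistent in any sense. Otherwise $\mathbf{A}$ is a nontrivial finite Boolean algebra, and Theorem~\ref{thm:involutive} yields $\neg={\sim}$. Hence $a\wedge\neg a=a\wedge{\sim}a=0$ for all $a$.

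Third, translate this back into the failure of paraconsistency. Under the degree-preserving semantics, every valuation $v$ on $\mathbf{A}$ satisfies $v(\varphi)\wedge v(\neg\varphi)=0\leq v(\psi)$. Thus $\varphi,\neg\varphi$ entail every $\psi$ in $\mathbf{A}$, so $\mathbf{A}$ is not a paraconsistent BALFI (equivalently, $\mathbf{A}$ has $a\wedge\neg a=0$ identically, which is the sense of ``explosive'' used in Corollary~\ref{cor:explosive-pairs}).

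No step is a real obstacle, since the combinatorial work is already done in Theorem~\ref{thm:involutive}. The only point needing care is making precise what ``paraconsistent BALFI'' means, namely the existence of some $a$ with $a\wedge\neg a\neq 0$, so that the conclusion $\neg={\sim}$ rules it out. The finiteness hypothesis is essential here, because the counting argument in Theorem~\ref{thm:involutive} fails for infinite algebras, and the paper notes that an infinite paraconsistent model exists.
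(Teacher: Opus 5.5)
Your proposal is correct and follows essentially the same route as the paper: you combine $(\mathrm{ce})$ and $(\mathrm{cf})$ to get $\neg\neg a=a$, invoke Theorem~\ref{thm:involutive} to obtain $\neg={\sim}$, and conclude $a\wedge\neg a=0$. Your explicit treatment of the trivial algebra covers the nontriviality hypothesis of that theorem, which the paper's proof leaves implicit; this is a small but welcome refinement.
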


\begin{proof}
In any BALFI satisfying both $(\mathrm{ce})$ and $(\mathrm{cf})$, the 
operator $\neg$ is involutive: $(\mathrm{cf})$ gives $\neg\neg x\leq x$ 
and $(\mathrm{ce})$ gives $x\leq\neg\neg x$, hence $\neg\neg x=x$. 
Moreover, every BALFI satisfies $x\vee\neg x=1$. By 
Theorem~\ref{thm:involutive}, any finite such BALFI satisfies 
$\neg={\sim}$. But then $x\wedge\neg x=x\wedge{\sim}x=0$ for all $x$, 
so the algebra cannot be paraconsistent.
\end{proof}

\begin{remark}
The obstruction identified in Corollary~\ref{cor:ce-cf} is 
independent of the consistency operator $\circ$: it is a purely 
combinatorial constraint on involutions of finite Boolean algebras, 
and no choice of $\circ$ can circumvent it. This distinguishes the 
$(\mathrm{ce},\mathrm{cf})$ case from the explosive pairs 
above, where the collapse was forced by equational interaction 
between $\circ$ and the propagation axioms. In the present case, 
the finite model property with respect to BALFI semantics cannot 
hold for any paraconsistent extension containing both $(\mathrm{ce})$ 
and $(\mathrm{cf})$, and, if such systems are decidable, their decidability must be proven by different means.
\end{remark}

\begin{example}\label{ex:infinite-ce-cf}
The obstruction of Corollary~\ref{cor:ce-cf} is purely finite: 
there exist infinite paraconsistent BALFIs satisfying both 
$(\mathrm{ce})$ and $(\mathrm{cf})$. The following model, taken 
from~\cite{ConiglioFigalloTesta}, witnesses this.

Let $\mathcal{A}=\wp(\mathbb{Z})$ be the Boolean algebra of 
all subsets of $\mathbb{Z}$. For each $n\in\mathbb{Z}$, let
\[
(-\infty,n]:=\{m\in\mathbb{Z}:m\leq n\}
\qquad\text{and}\qquad
[n,\infty):=\{m\in\mathbb{Z}:m\geq n\}.
\]
Let $I:=\{(-\infty,n]:n\in\mathbb{Z}\}\cup\{[n,\infty):n\in\mathbb{Z}\}$ 
and $C:=\mathcal{A}\setminus I$. Define operators 
$\neg,\circ:\mathcal{A}\to\mathcal{A}$ by:
\[
\neg(-\infty,n]=[n,\infty),
\qquad
\neg[n,\infty)=(-\infty,n],
\qquad
\neg X=\mathbb{Z}\setminus X \text{ for } X\in C,
\]
and $\circ X=\mathbb{Z}\setminus(X\cap\neg X)$ for all $X$.

The resulting structure $\mathbf{B}=\langle\wp(\mathbb{Z}),
\cap,\cup,\to,\neg,\circ,\emptyset,\mathbb{Z}\rangle$ is a BALFI. 
The relevant computations are summarized below:
\[
\begin{array}{c|c|c|c|c}
X & \neg X & X\cap\neg X & \circ X & \neg\neg X \\ 
\hline
(-\infty,n] & [n,\infty) & \{n\} & \mathbb{Z}\setminus\{n\} & 
(-\infty,n] \\[2mm]
[n,\infty) & (-\infty,n] & \{n\} & \mathbb{Z}\setminus\{n\} & 
[n,\infty) \\[2mm]
X\in C & \mathbb{Z}\setminus X & \varnothing & \mathbb{Z} & X
\end{array}
\]
From the table: $\neg\neg X=X$ for all $X$, so both $(\mathrm{ce})$ 
and $(\mathrm{cf})$ hold. Moreover, $X\cap\neg X=\{n\}\neq\emptyset$ 
for every $X\in I$, so $\mathbf{B}$ is paraconsistent (since $I \neq \emptyset$). It can be 
verified directly that $\mathbf{B}$ satisfies all the axioms of 
$\mathrm{RmbC}(\mathrm{ce},\mathrm{cf})$.
\end{example}
\subsection{RCila: explosion and a paraconsistent refinement}

\label{sec:rcila}

\subsubsection*{Da Costa's hierarchy and the logic Cila}

Newton da Costa introduced in~\cite{daCosta1963} a celebrated hierarchy 
of paraconsistent logics $C_n$ ($n\geq 1$) in which the notion of 
well-behavedness $\alpha^{(n)}$ of a formula $\alpha$ is defined inductively in $C_n$ as follows. Let 
$\alpha^{1}\overset{\mathrm{def}}{=}\neg(\alpha\wedge\neg\alpha)$, 
and $\alpha^{n+1}\overset{\mathrm{def}}{=}(\alpha^{n})^{1}$, for $n \geq 1$. 
Finally, let $\alpha^{(1)}\overset{\mathrm{def}}{=} \alpha^1$ and 
$\alpha^{(n+1)}\overset{\mathrm{def}}{=} \alpha^{(n)} \land \alpha^{n+1}$, for $n \geq 1$.
Note that $\alpha^{(n+1)} = \alpha^1 \land \ldots \land \alpha^{n+1}$, for $n \geq 1$.

The logic $C_1$ is the first and most representative system of the 
hierarchy. It extends positive classical logic with a paraconsistent 
negation and the axiom schemas
\[
(\circ\alpha\wedge\circ\beta)\to\circ(\alpha\wedge\beta),
\qquad 
(\circ\alpha\wedge\circ\beta)\to\circ(\alpha\vee\beta),
\qquad
(\circ\alpha\wedge\circ\beta)\to\circ(\alpha\to\beta),
\]
where $\circ\alpha$ abbreviates $\alpha^{(1)}$, together with the 
principle of gentle explosion~$(\mathrm{bc1})$ and double negation 
elimination~$(\mathrm{cf})$. These propagation axioms express that 
consistency is preserved under the Boolean connectives, and are 
precisely what enables the construction of the full da Costa hierarchy.

The presentation of $C_1$ in the language of LFIs, where $\circ$ is 
taken as a primitive consistency operator rather than as an 
abbreviation, is the logic 
Cila~\cite{CarnielliConiglio2016a, CarnielliConiglioMarcos2007}. 
In our notation:
\[
\mathrm{Cila} = \mathrm{mbC}(\mathrm{ci},\mathrm{cl},\mathrm{cf},
\mathrm{ca}_\wedge,\mathrm{ca}_\vee,\mathrm{ca}_\to).
\]

The logic Cila is of central importance in the LFI landscape for two 
reasons. First, it is the minimal LFI in which the consistency operator 
$\circ$ can be defined from the other connectives as 
$\circ\alpha\equiv\neg(\alpha\wedge\neg\alpha)$, making the LFI 
presentation equivalent to the original da Costa presentation. Second, 
the propagation axioms $(\mathrm{ca}_\wedge)$, $(\mathrm{ca}_\vee)$, 
$(\mathrm{ca}_\to)$ give Cila a very rich structure: they 
are precisely the conditions that force consistency to propagate through 
all Boolean operations, which is what makes the construction of the 
hierarchy $C_n$ possible.

\subsubsection*{Every BALFI semantics for RCila is explosive}

The self-extensional version of Cila is:
\[
\mathrm{RCila} = \mathrm{RmbC}(\mathrm{ci},\mathrm{cl},\mathrm{cf},
\mathrm{ca}_\wedge,\mathrm{ca}_\vee,\mathrm{ca}_\to).
\]

The question of whether RCila can be paraconsistent was already analyzed 
in~\cite{CarnielliConiglioFuenmayor2022}, showing that the answer is negative. This fact follows directly from the results of 
Section~\ref{sec:paraconsistency}, as we shall see below.

\begin{proposition}\label{prop:rcila-explosive}
Every $\mathbf{A}\in\mathsf{BI}(\mathrm{RCila})$ satisfies 
$a\wedge\neg a=0$ for all $a\in A$. In particular, $\mathrm{RCila}$ 
is not paraconsistent.
\end{proposition}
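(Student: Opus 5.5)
The plan is to obtain the statement as an immediate consequence of the classification in Section~\ref{sec:paraconsistency}, via a variety inclusion. By definition, $\mathsf{BI}(\mathrm{RCila})$ is the class of BALFIs validating all of $(\mathrm{ci})$, $(\mathrm{cl})$, $(\mathrm{cf})$, $(\mathrm{ca}_\wedge)$, $(\mathrm{ca}_\vee)$ and $(\mathrm{ca}_\to)$. It therefore sits inside every variety axiomatized by any subset of these equations, and in particular
\[
\mathsf{BI}(\mathrm{RCila})\subseteq\mathsf{BI}(\mathrm{cl},\mathrm{ca}_\vee).
\]

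Theorem~\ref{thm:cl-cav} then gives $a\wedge\neg a=0$ for every $a$ in any $\mathbf{A}\in\mathsf{BI}(\mathrm{RCila})$. Several other routes are equally available: Corollary~\ref{cor:explosive-pairs} via $(\mathrm{ci},\mathrm{ca}_\to)$ or $(\mathrm{cl},\mathrm{ca}_\to)$, or Proposition~\ref{prop:explosive-unification}. The last one gives more, namely that $\mathsf{BI}(\mathrm{RCila})\subseteq K$, so that $\neg={\sim}$ and $\circ=1$. I would mention this as a remark.

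For the \emph{in particular} clause, I would pass from the algebraic identity to the logic using the completeness result of Theorem~\ref{thm:completeness-uniform} with $Ax=\{\mathrm{ci},\mathrm{cl},\mathrm{cf},\mathrm{ca}_\wedge,\mathrm{ca}_\vee,\mathrm{ca}_\to\}$. Take any formulas $\varphi,\psi$, any $\mathbf{A}\in\mathsf{BI}(\mathrm{RCila})$ and any valuation $v$ on $\mathbf{A}$. Then
\[
v(\varphi)\wedge v(\neg\varphi)=v(\varphi)\wedge\neg v(\varphi)=0\leq v(\psi).
\]
Hence $\varphi,\neg\varphi\vDash_{\mathsf{BI}(\mathrm{RCila})}\psi$, and by completeness $\varphi,\neg\varphi\vdash_{\mathrm{RCila}}\psi$. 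So condition~(1) of the definition of LFI fails, and $\mathrm{RCila}$ is not paraconsistent.

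No step here is a real obstacle, since the substantive work was already done in Theorem~\ref{thm:cl-cav}. The only point needing care is the transfer to the logic. We must use the degree-preserving consequence relation of the semantics, and observe that a finite nonempty set of premises $\{\varphi,\neg\varphi\}$ witnesses the local entailment.
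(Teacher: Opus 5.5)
Your argument is correct and is essentially the paper's: both derive the identity from an inclusion of $\mathsf{BI}(\mathrm{RCila})$ into an already-classified explosive pair. The paper uses $(\mathrm{ci},\mathrm{ca}_\to)$ via Corollary~\ref{cor:explosive-pairs}(3), and hence Theorem~\ref{thm:d1-cato}, while you use $(\mathrm{cl},\mathrm{ca}_\vee)$ and Theorem~\ref{thm:cl-cav} directly; you also spell out, through Theorem~\ref{thm:completeness-uniform}, the passage to the ``in particular'' clause that the paper leaves implicit, and your choice of pair shows that $(\mathrm{cl})$ with $(\mathrm{ca}_\vee)$ already forces the collapse independently of $(\mathrm{ci})$, which explains why $\mathrm{RC_1^{i2}}$ cannot retain $(\mathrm{cl})$.
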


\begin{proof}
The logic $\mathrm{RCila}$ includes the axioms $(\mathrm{ci})$ and 
$(\mathrm{ca}_\to)$. By Corollary~\ref{cor:explosive-pairs}(3), every 
BALFI in $\mathsf{BI}(\mathrm{ci},\mathrm{ca}_\to)$ satisfies 
$a\wedge\neg a=0$ for all $a$. Since 
$\mathsf{BI}(\mathrm{RCila})\subseteq
\mathsf{BI}(\mathrm{ci},\mathrm{ca}_\to)$, the same holds in every 
BALFI model of $\mathrm{RCila}$.
\end{proof}

\begin{remark}
The failure of paraconsistency in RCila is a direct consequence of the 
algebraic classification of Section~\ref{sec:paraconsistency}: the 
pair $(\mathrm{ci},\mathrm{ca}_\to)$ is already explosive, and RCila 
contains it as a sublogic. This gives a self-contained algebraic 
explanation without invoking the argument via the De Morgan schema 
$(\mathrm{dm})$ used in~\cite[Example~4.31]{CarnielliConiglioFuenmayor2022}.
\end{remark}

\subsubsection*{A paraconsistent $C_1$-like LFI with replacement: 
\texorpdfstring{$\mathrm{RC_1^{i2}}$}{RCila minus}}

The failure of paraconsistency in RCila raises a natural question: 
how much of the axiomatic content of Cila (i.e., $C_1$) can be preserved under 
replacement while maintaining paraconsistency? We now identify such a natural weakening, obtained by removing precisely the axioms 
responsible for explosion while preserving the propagation conditions 
and the remaining structural properties of Cila.

The explosive behavior of RCila is caused by the interaction of 
$(\mathrm{ci})$ with $(\mathrm{ca}_\to)$: the axiom $(\mathrm{ci})$ 
forces $\neg{\circ}\alpha$ to imply both $\alpha$ and $\neg\alpha$ 
simultaneously, and in the presence of $(\mathrm{ca}_\to)$ this 
triggers explosion. The axiom $(\mathrm{ci})$ decomposes into two 
weaker conditions: $(i1)$, asserting $\neg{\circ}\alpha\to\alpha$, and 
$(i2)$, asserting $\neg{\circ}\alpha\to\neg\alpha$. Of these, $(i1)$ 
combined with $(\mathrm{ca}_\to)$ remains explosive 
(Corollary~\ref{cor:explosive-pairs}(4)), while $(i2)$ alone is not. 
We therefore drop $(\mathrm{ci})$ and $(i1)$, retaining $(i2)$.

At the same time, we replace $(\mathrm{cl})$ by the propagation axioms 
$(\mathrm{a}_\circ)$ and $(\mathrm{a}_\neg)$, which are also verified 
by $C_1$ and support the hierarchical construction of well-behavedness, 
without triggering explosion. The resulting system is:
\[
\mathrm{RC_1^{i2}} = \mathrm{RmbC}(\mathrm{cf},i2,\mathrm{ca}_\wedge,
\mathrm{ca}_\vee,\mathrm{ca}_\to,\mathrm{a}_\circ,\mathrm{a}_\neg).
\]

This system retains the full propagation of consistency under the 
Boolean connectives $(\mathrm{ca}_\#)$, the axiom $(i2)$ expressing 
that inconsistency implies the paraconsistent negation of the formula, 
the double negation axiom $(\mathrm{cf})$, and the unary propagation 
axioms $(\mathrm{a}_\circ)$ and $(\mathrm{a}_\neg)$. It removes 
exactly $(\mathrm{ci})$ and $(i1)$, which are the axioms responsible 
for the collapse in RCila.

\begin{theorem}\label{thm:rcila-minus}
$\mathrm{RC_1^{i2}}$ is paraconsistent.
\end{theorem}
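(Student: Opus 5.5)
By Theorem~\ref{thm:completeness-uniform}, paraconsistency of $\mathrm{RC_1^{i2}}$ reduces to an algebraic task: we must exhibit a single BALFI $\mathbf{A}\in\mathsf{BI}(\mathrm{cf},i2,\mathrm{ca}_\wedge,\mathrm{ca}_\vee,\mathrm{ca}_\to,\mathrm{a}_\circ,\mathrm{a}_\neg)$ and an element $x$ with $x\wedge\neg x\neq 0$. Indeed, given such an $x$, the valuation with $v(p)=x$ and $v(q)=0$ gives $v(p)\wedge v(\neg p)\not\leq v(q)$. Hence $p,\neg p\nvDash q$, and so $p,\neg p\nvdash_{\mathrm{RC_1^{i2}}} q$. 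The plan is therefore to find the smallest such model, on the four-element algebra $\wp(\{w_1,w_2\})$ already used in Section~\ref{sec:varieties}.

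We design the operators by letting the axioms force the tables.

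\emph{Choice of $\circ$.} Set $\circ 0=\circ 1=1$ and $\circ a=\circ b=c$. Since $\{0,1\}$ is closed under $\wedge,\vee,\to$, all three $(\mathrm{ca}_\#)$ hold: if both arguments lie in $\{0,1\}$, both sides equal $1$; otherwise the left side is at most $c$, while $\circ(a\#b)\geq c$.

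\emph{Forcing $c$.} Axiom $(\mathrm{a}_\circ)$ requires $\circ c=1$, so $c\in\{0,1\}$. Choosing $c=1$ would kill paraconsistency through the BALFI equation $x\wedge\neg x\wedge\circ x=0$, so we take $c=0$.

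\emph{Forcing $\neg$.} Axiom $(\mathrm{a}_\neg)$ at $0$ and $1$ forces $\neg 0,\neg 1\in\{0,1\}$. The BALFI equation gives $\neg 0=1$, and then $(\mathrm{cf})$ at $0$ gives $\neg 1=\neg\neg 0\leq 0$, so $\neg 1=0$. Axiom $(i2)$ at $a$ and at $b$ reads $1=\neg 0=\neg\circ a\leq\neg a$, which forces $\neg a=\neg b=1$.

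The resulting table is exactly that of $\mathbf{B}_4$ in Example~\ref{ex:B4}. The remaining step is to check every equation on this finite table, which is routine.
\begin{itemize}
\item BALFI equations: $x\vee\neg x=1$ holds for all $x$, and $x\wedge\neg x\wedge\circ x=0$ holds because $\circ$ vanishes on $a,b$, while on $0,1$ we have $x\wedge\neg x=0$.
\item $(\mathrm{cf})$: this holds since $\neg\neg a=\neg 1=0\leq a$ (and likewise for $b$), with $0,1$ handled as above.
\item $(i2)$ at $0$ and $1$: the left-hand sides $\neg\circ 0=\neg 1=0$ and $\neg\circ 1=0$ are both $0$, so the inequality holds trivially.
\item $(\mathrm{a}_\circ)$: $\circ\circ x\in\{\circ 0,\circ 1\}=\{1\}$.
\item $(\mathrm{a}_\neg)$: this is trivial where $\circ x=0$, and where $\circ x=1$ we have $x\in\{0,1\}$, so $\neg x\in\{0,1\}$ and $\circ\neg x=1$.
\end{itemize}
Finally, $a\wedge\neg a=a\wedge 1=a\neq 0$, which is the required witness of paraconsistency.

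The only point needing care is the design of $\circ$ so that $(\mathrm{a}_\circ)$ and $(\mathrm{ca}_\to)$ coexist without forcing $\circ\equiv 1$. This is the mechanism behind Theorem~\ref{thm:d1-cato}, which applies only because $(d1)$ is present. The choice $\circ x=1$ exactly on the Boolean sub-universe $\{0,1\}$, and $0$ elsewhere, avoids it. This works because $(i2)$, unlike $(i1)$, does not constrain $\neg\circ a$ to lie below $a$.
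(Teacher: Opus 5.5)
Your proof is correct and is essentially the paper's: you exhibit the same witness $\mathbf{B}_4$ of Example~\ref{ex:B4} and check the BALFI equations, $(\mathrm{cf})$, $(i2)$, $(\mathrm{a}_\circ)$, $(\mathrm{a}_\neg)$ and $(\mathrm{ca}_\#)$ on its table, together with $a\wedge\neg a=a\neq 0$. Your derivation of the table from the axioms and your explicit soundness step (the valuation $v(p)=a$, $v(q)=0$ refuting $p,\neg p\vdash q$) are additions the paper leaves implicit, but they do not change the argument.
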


\begin{proof}
We exhibit a finite paraconsistent BALFI satisfying all the axioms 
of $\mathrm{RC_1^{i2}}$. We use the BALFI $\mathbf{B}_4$ of 
Example~\ref{ex:B4}, whose operations over $\wp(\{w_1,w_2\})$ are:
\[
\begin{array}{|c|c|c|}
\hline
x & \neg x & \circ x \\
\hline
0 & 1 & 1 \\
a & 1 & 0 \\
b & 1 & 0 \\
1 & 0 & 1 \\
\hline
\end{array}
\]
This is the same algebra as in Example~\ref{ex:B4}; the subscript on 
$\neg$ is omitted here for readability. We verify each axiom of 
$\mathrm{RC_1^{i2}}$.

\textbf{(cf)}: We need $x\wedge\neg\neg x=\neg\neg x$ for all $x$. 
Since $\neg a=\neg b=1$ and $\neg 0=1$, $\neg 1=0$, we compute 
$\neg\neg 0=0$, $\neg\neg a=0$, $\neg\neg b=0$, $\neg\neg 1=1$. 
Then $x\wedge\neg\neg x=\neg\neg x$ holds in all cases. 

\textbf{$(i2)$}: We need $\neg{\circ} x\wedge\neg x=\neg{\circ} x$ for 
all $x$. Since $\circ 0=\circ 1=1$ and $\circ a=\circ b=0$:
\begin{itemize}
\item $x=0$: $\neg{\circ} 0\wedge\neg 0=0\wedge 1=0=\neg{\circ} 0$. 
\item $x=a$: $\neg{\circ} a\wedge\neg a=1\wedge 1=1=\neg{\circ} a$. 
\item $x=b$: analogous to $x=a$. 
\item $x=1$: $\neg{\circ} 1\wedge\neg 1=0\wedge 0=0=\neg{\circ} 1$. 
\end{itemize}

\textbf{$(\mathrm{a}_\circ)$}: We need $\circ{\circ} x=1$ for all $x$. 
Since $\circ a=\circ b=0$ and $\circ 0=\circ 1=1$: 
$\circ{\circ} 0=\circ 1=1$, $\circ{\circ} a=\circ 0=1$, 
$\circ{\circ} b=\circ 0=1$, $\circ{\circ} 1=\circ 1=1$. 

\textbf{$(\mathrm{a}_\neg)$}: We need $\circ x\wedge\circ\neg x=
\circ x$ for all $x$:
\begin{itemize}
\item $x=0$: $\circ 0\wedge\circ 1=1\wedge 1=1=\circ 0$. 
\item $x=a$: $\circ a\wedge\circ\neg a=0\wedge\circ 1=0\wedge 1=
0=\circ a$. 
\item $x=b$: analogous. 
\item $x=1$: $\circ 1\wedge\circ\neg 1=1\wedge\circ 0=1\wedge 1=
1=\circ 1$. 
\end{itemize}

\textbf{$(\mathrm{ca}_\#)$ for $\#\in\{\wedge,\vee,\to\}$}: In 
$\mathbf{B}_4$, $\circ x=0$ iff $x\in\{a,b\}$ and $\circ x=1$ iff 
$x\in\{0,1\}$. If $\circ x\wedge\circ y=0$ the equation holds 
trivially. If $\circ x\wedge\circ y=1$ then $x,y\in\{0,1\}$; since 
$\{0,1\}$ is closed under $\wedge$, $\vee$, and $\to$ in any Boolean 
algebra, $x\#y\in\{0,1\}$ and hence $\circ(x\#y)=1$, so the equation 
holds. 

\textbf{Paraconsistency}: $a\wedge\neg a=a\wedge 1=a\neq 0$. 
\end{proof}

\begin{remark}
The BALFI $\mathbf{B}_4$ was already introduced in 
Section~\ref{sec:varieties} as a witness for the incomparability of 
$\mathsf{BI}(\mathrm{cf})$ and $\mathsf{BI}(\mathrm{ciw})$. Its 
reappearance here as a paraconsistent model of $\mathrm{RC_1^{i2}}$ is 
not coincidental: the defining feature of $\mathbf{B}_4$ is that 
$\neg$ is not the Boolean complement ($\neg a=1\neq b={\sim}a$) and 
$\circ$ is not identically $1$ ($\circ a=0$), which is exactly the 
configuration that survives once the explosive axioms $(\mathrm{ci})$ 
and $(i1)$ are removed. Moreover, $(i1)$ fails in $\mathbf{B}_4$: 
$\neg{\circ} a\wedge a=1\wedge a=a\neq 1=\neg{\circ} a$, confirming that 
$(i1)$ cannot be added to $\mathrm{RC_1^{i2}}$ without losing 
paraconsistency.
\end{remark}
\begin{remark}
The relevance of $\mathrm{RC_1^{i2}}$ is evident. It is well known that $C_1$, and every logic in da 
Costa's hierarchy weaker than it, admits no non-trivial quotient 
algebra in the sense of Lindenbaum--Tarski, and is therefore not 
algebraizable in the general sense of 
Blok--Pigozzi~\cite{BlokPigozzi1989}. Since $\mathrm{RC_1^{i2}}$ 
retains the full propagation of consistency under the Boolean 
connectives, the very feature that makes $C_1$ the richest and most 
structured system in da Costa's hierarchy, while remaining 
self-extensional and genuinely paraconsistent, it suggests that an 
analogue of da Costa's hierarchy could be developed from the outset 
within the algebraizable, self-extensional setting studied here, 
rather than retrofitted onto systems that provably resist it. 
Whether such a hierarchy can be built, and how far it can be pushed 
while preserving paraconsistency, is a natural direction for future 
work.
\end{remark}

%\begin{figure}[ht]
%\centering
%\begin{tikzpicture}[
 % every node/.style={font=\small, inner sep=3pt},
 % edge/.style={thick, shorten >=2pt, shorten <=2pt}
%]

%\node (BI)     at ( 0.0,  4.0) {$BI$};
%\node (BIciw)  at (-2.0,  2.5) {$BI(\mathsf{ciw})$};
%\node (BIcf)   at ( 0.5,  2.5) {$BI(\mathsf{cf})$};
%\node (BIce)   at ( 2.5,  2.5) {$BI(\mathsf{ce})$};
%\node (BIci)   at (-3.0,  1.0) {$BI(\mathsf{ci})$};
%\node (BIcl)   at (-1.0,  1.0) {$BI(\mathsf{cl})$};
%\node (BIcicf) at (-1.0, -0.5)
  %{$BI(\mathsf{ci},\mathsf{cf})=BI(\mathsf{cl},\mathsf{cf})$};

%\draw[edge] (BI)    -- (BIciw);
%\draw[edge] (BI)    -- (BIcf);
%\draw[edge] (BI)    -- (BIce);
%\draw[edge] (BIciw) -- (BIci);
%\draw[edge] (BIciw) -- (BIcl);
%\draw[edge] (BIci)  -- (BIcicf);
%\draw[edge] (BIcl)  -- (BIcicf);
%\draw[edge] (BIcf)  -- (BIcicf);

%\end{tikzpicture}

%\label{hasse balfis}
%\end{figure}

\section{Finite Model Property and Decidability of RmbC}\label{sec:fmp}

In this section, we establish the finite model property for RmbC by means of an algebraic filtration-style construction. The key point is that the variety BI of BALFIs is not locally finite, so the result cannot be obtained by a general local-finiteness argument. Instead, given a non-derivable formula, we isolate the finite set of values taken by its subformulas in a countermodel, close that set under the immediate images of the non-classical operators $\neg$ and $\circ$, and generate the corresponding finite BALFI. 

\begin{proposition}[The variety $\mathsf{BI}$ is not locally finite]\label{prop:non-locally-finite}
The variety of Boolean algebras with LFI operators is not locally finite.
\end{proposition}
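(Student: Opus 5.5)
To show that $\mathsf{BI}$ is not locally finite, I will exhibit a single finitely generated BALFI that is infinite. The simplest choice is the algebra generated by the empty set, i.e.\ by the constants $0$ and $1$ alone. I will construct a BALFI $\mathbf{B}$ in which the sequence
\[
0,\ \circ 0,\ \circ\circ 0,\ \ldots
\]
(or a similar sequence built with $\neg$) consists of pairwise distinct elements. Then the subalgebra generated by $\emptyset$ is infinite, which already violates local finiteness.

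\textbf{Choice of the Boolean algebra.} I take $B=\wp(\mathbb{N})$, or more economically the finite--cofinite algebra on $\mathbb{N}$. For each $n\geq 1$ I set
\[
e_n := \{0,\dots,n-1\},
\]
so that $e_0=\emptyset=0$.

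\textbf{Definition of the operators.} I put $\neg x := 1$ for every $x\neq 1$ and $\neg 1 := 0$. Then $x\vee\neg x=1$ holds trivially. For $\circ$, I require the BALFI law
\[
x\wedge\neg x\wedge\circ x=0.
\]
For $x\neq 1$ this reads $x\wedge\circ x=0$, so $\circ x\leq{\sim}x$. For $x=1$ we have $x\wedge\neg x=0$, so $\circ 1$ is unconstrained.

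A first attempt would be to define $\circ e_n := e_{n+1}$, but this fails the constraint $\circ x\leq{\sim}x$, since $e_{n+1}$ is not disjoint from $e_n$. The repair is to use the complement direction. I let $\circ x := {\sim}x$ for all $x\neq 1$ with a single exception, and define instead the singleton-based sequence
\[
\circ\{n\} := \{n+1\},\qquad \circ\emptyset := \{0\}.
\]
Since $n\neq n+1$, we get $\{n\}\wedge\{n+1\}=\emptyset$, and $\{0\}\wedge\emptyset=\emptyset$. On every other element $x\neq 1$ I set $\circ x := {\sim}x$ (any value below ${\sim}x$ would do), and $\circ 1 := 1$.

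\textbf{Verification.} Both BALFI equations are checked pointwise. For $x\neq 1$, the second equation reduces to $x\wedge\circ x=0$, and this holds by construction in each case. For $x=1$ it holds because $1\wedge\neg 1=0$.

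The terms $\circ^k 0$ evaluate to $\{k-1\}$ for $k\geq 1$. These are pairwise distinct, so the $0$-generated subalgebra $\langle\emptyset\rangle$ is infinite. If one prefers a generated set of size one, the same works with the generator $\{p\}=\emptyset$.

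\textbf{Main obstacle.} The only delicate point is respecting the controlled-explosion law. For that reason the iterated sequence must move each element to a disjoint one, as the shift on singletons does. Using the finite--cofinite algebra keeps the carrier countable, though this is inessential.

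Alternatively, I could argue through failure of the finite model property for term equations, but the explicit construction above is more direct.
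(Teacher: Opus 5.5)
Your construction is correct and is essentially the paper's argument: you give an explicit BALFI on $\wp(\mathbb{N})$ in which iterating $\circ$ runs through the pairwise distinct singletons $\{n\}$, so a finitely generated (in your case even $0$-generated) subalgebra is infinite. The paper sidesteps your ``main obstacle'' by taking $\neg$ to be the Boolean complement, which makes $a\wedge\neg a\wedge\circ a=0$ hold for any $\circ$, so the plain shift $\circ X=\{n+1:n\in X\}$ works (your non-classical $\neg$ instead gives a paraconsistent witness); and your closing alternative via failure of the finite model property would not work, since by Theorem~\ref{thm:fmp-rmbc} every equation failing in $\mathsf{BI}$ already fails in some finite BALFI.
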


\begin{proof}
Consider the algebra
\[
\mathbf{A}=\langle \wp(\mathbb N),\cap,\cup,\rightarrow,\neg,\circ,\emptyset,\mathbb N\rangle,
\]
where the Boolean implication is the usual one, \(\neg X=\mathbb N\setminus X\), and
\[
\circ X=\{\,n+1 : n\in X\,\}
\]
for every \(X\subseteq \mathbb N\).

Since \(\neg\) is the ordinary Boolean complement, we have for every \(X\subseteq \mathbb N\):
\[
X\vee \neg X=\mathbb N
\qquad\text{and}\qquad
X\wedge \neg X\wedge \circ X=\emptyset.
\]
Hence \(\mathbf{A}\) is a BALFI, so \(\mathbf{A}\in\mathsf{BI}\).

Now let \(a=\{0\}\). We write \(\circ^n\) for the \(n\)-fold iteration of \(\circ\), defined by \(\circ^0(a)=a\) and \(\circ^{n+1}(a)=\circ(\circ^n(a))\). A straightforward induction on \(n\) shows that, for every \(n\in\mathbb N\),
\[
\circ^n(a)=\{n\}.
\]
These elements are pairwise distinct. Hence, the subalgebra of \(\mathbf{A}\) generated by the finite set \(\{a\}\) contains infinitely many distinct elements, and is therefore infinite.

Since \(\mathbf{A}\in \mathsf{BI}\) and has a finitely generated subalgebra that is infinite, it follows that \(\mathsf{BI}\) is not locally finite.
\end{proof}

\noindent
This shows that the decidability of $RmbC$ cannot be obtained from the local finiteness of the BALFI variety alone. 
\begin{theorem}[Finite Model Property for $RmbC$]\label{thm:fmp-rmbc}
If $\nvdash_{RmbC}\varphi$, then there exist a finite BALFI $\mathbf{A}$ and a valuation
$v:\mathrm{For}(\Sigma)\to A$ such that $v(\varphi)\neq 1$.
\end{theorem}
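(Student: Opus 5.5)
Assume $\nvdash_{RmbC}\varphi$. By Theorem~\ref{thm:completeness-RmbC} there are a BALFI $\mathbf{B}$ and a valuation $w$ on $\mathbf{B}$ with $w(\varphi)\neq 1$. Let $S=\{w(\psi):\psi\in\mathrm{Sub}(\varphi)\}$, a finite subset of $B$. Let $S^{+}=S\cup\neg_B[S]\cup\circ_B[S]$, which is still finite, and let $\mathbf{A}_0$ be the Boolean subalgebra of $\mathbf{B}$ generated by $S^{+}$. A finitely generated Boolean algebra is finite (at most $2^{2^{|S^+|}}$ elements), so $\mathbf{A}_0$ is finite.

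Next I put LFI operators on $A_0$ that agree with $\neg_B,\circ_B$ on $S$. For $x\in S$ set $\neg_A x=\neg_B x$ and $\circ_A x=\circ_B x$; both lie in $A_0$ by construction. For $x\in A_0\setminus S$ set $\neg_A x=1$ and $\circ_A x=0$, which is a valid BALFI choice in any Boolean algebra. Then $\mathbf{A}=\langle A_0,\wedge,\vee,\to,\neg_A,\circ_A,0,1\rangle$ is a BALFI:
\begin{itemize}
\item On $S$, the identities $x\vee\neg_A x=1$ and $x\wedge\neg_A x\wedge\circ_A x=0$ hold because they hold in $\mathbf{B}$ and the Boolean operations of $\mathbf{A}_0$ are restrictions of those of $\mathbf{B}$.
\item Off $S$, they hold trivially.
\end{itemize}

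Define $v$ on variables by $v(p)=w(p)$ for $p\in\mathrm{Var}(\varphi)$ (these values lie in $S$) and arbitrarily otherwise, and extend homomorphically. I prove by induction on $\psi\in\mathrm{Sub}(\varphi)$ that $v(\psi)=w(\psi)$.
\begin{itemize}
\item The Boolean cases hold because $\mathbf{A}_0$ is a Boolean subalgebra.
\item For $\psi=\neg\chi$: by the induction hypothesis $v(\chi)=w(\chi)\in S$, so $v(\neg\chi)=\neg_A w(\chi)=\neg_B w(\chi)=w(\neg\chi)$.
\item The case of $\circ$ is identical.
\end{itemize}
Hence $v(\varphi)=w(\varphi)\neq 1$, and $\mathbf{A}$ is the required finite BALFI. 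The counting also gives an explicit bound on $|A|$ in terms of the number of subformulas, which I will record for the later decidability and complexity sections.

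The only delicate step is well-definedness of $\neg_A,\circ_A$: an element of $S$ could also arise as a Boolean combination, but the operators are defined elementwise on $A_0$, so there is no conflict. What matters is that the operators are applied in $\mathbf{A}$ only to arguments in $S$ during the induction, which the subformula closure of $\mathrm{Sub}(\varphi)$ guarantees.

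No congruence or compatibility condition is needed, because BALFI operators are unconstrained beyond the two pointwise equations. This is exactly why the construction avoids the failure of local finiteness (Proposition~\ref{prop:non-locally-finite}). Decidability then follows from Theorem~\ref{thm:harrop}, since the size bound also makes the search for finite countermodels effective.
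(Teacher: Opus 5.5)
Your proof is correct and follows the paper's filtration argument essentially step for step: the same support $S$, the same finite generating set $S\cup\neg_B[S]\cup\circ_B[S]$, pointwise redefinition of the operators, and the same induction on $\mathrm{Sub}(\varphi)$. The only difference is your off-support default ($\neg_A x=1$, $\circ_A x=0$ instead of the paper's $\neg_A x={\sim}x$, $\circ_A x=1$), which works equally well for plain $\mathrm{RmbC}$; the paper's classical default, however, is the one that carries over to the later transfers for extensions such as $(\mathrm{ciw})$ or $(d1)$, whereas yours would violate those axioms off $S$.
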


\begin{proof}
Assume that $\nvdash_{RmbC}\varphi$. By the soundness and completeness of $RmbC$ with respect to BALFIs, there exist a BALFI $\mathbf{B}$ and a valuation
$v_B:\mathrm{For}(\Sigma)\to B$ such that $v_B(\varphi)\neq 1$.

Let $\Phi$ be the set of all subformulas of $\varphi$. By definition, $\Phi$ is finite and closed under subformulas. Define
\[
S=\{v_B(\psi):\psi\in\Phi\}.
\]
and the set
\[
G := S \cup \neg_B[S] \cup \circ_B[S],
\]
where \(\neg_B[S] = \{\neg_B\ a : a \in S\}\) and \(\circ_B[S] = \{\circ_B\ a : a \in S\}\).\\

Since $\Phi$ is finite, $S$ is finite, hence $G$ is finite as well. Let $A$ be the Boolean subalgebra of the Boolean reduct of $\mathbf{B}$ generated by $G$. Since Boolean algebras are locally finite, $A$ is finite.

We define the non-classical operators on $A$ as follows. If $a\in S$, then we set
\[
\neg_A a := \neg_B a
\qquad\text{and}\qquad
\circ_A a := \circ_B a.
\]
If $a\in A\setminus S$, we set
\[
\neg_A a := {\sim} a
\qquad\text{and}\qquad
\circ_A a := 1,
\]
where ${\sim} a$ is the Boolean complement of $a$ in $A$.

\begin{lemma}\label{LS31}
The algebra
\[
\mathbf{A} = \langle A,\wedge,\vee,\to,\neg_A,\circ_A,0,1\rangle
\]
is a finite BALFI.
\end{lemma}

\begin{proof}
Since $A$ is a Boolean subalgebra of the Boolean reduct of $B$, the reduct $\langle A,\wedge,\vee,\to,0,1\rangle$ is a Boolean algebra.

It remains to verify the BALFI equations. The definition of the operations ensures that the original interpretation is preserved on $S$, while outside $S$ the operations are defined so as to trivially satisfy the BALFI equations.

Let $a \in A$.

If $a \in S$, then by definition $\neg_A a = \neg_B a$ and $\circ_A a = \circ_B a$. Since $B$ is a BALFI, we have
\[
a \vee \neg_B a = 1 \quad \text{and} \quad a \wedge \neg_B a \wedge \circ_B a = 0.
\]
Hence the same equations hold in $A$.

If $a \notin S$, then by definition $\neg_A a = {\sim} a$ and $\circ_A a = 1$, where ${\sim} a$ denotes the Boolean complement of $a$ in $A$. Thus,
\[
a \vee \neg_A a = a \vee {\sim} a = 1
\]
and
\[
a \wedge \neg_A a \wedge \circ_A a = a \wedge {\sim} a \wedge 1 = 0.
\]

Therefore, $A$ satisfies the defining equations of a BALFI.
\end{proof}

Define $v:\mathrm{For}(\Sigma)\to A$ by
\[
v(p)=
\begin{cases}
v_B(p), & \text{if } p \text{ is atomic and } p\in\Phi,\\
0, & \text{if } p \text{ is atomic and } p\notin\Phi,
\end{cases}
\]
and, for compound formulas,
\[
v(\alpha\#\beta)=v(\alpha)\#v(\beta)
\quad\text{for }\#\in\{\wedge,\vee,\rightarrow\},
\qquad
v(\neg\alpha)=\neg_Av(\alpha),
\qquad
v(\circ\alpha)=\circ_Av(\alpha).
\]

\begin{lemma}\label{lem:preservation}
For every $\psi\in\Phi$, $v(\psi)=v_B(\psi)$.
\end{lemma}

\begin{proof}
We argue by structural induction on $\psi$.

If $\psi$ is atomic, the claim follows from the definition of $v$.

If $\psi=\alpha\#\beta$ with $\#\in\{\wedge,\vee,\rightarrow\}$, then since $\Phi$ is closed under subformulas, $\alpha,\beta\in\Phi$. By the induction hypothesis,
\[
v(\psi)=v(\alpha)\#v(\beta)=v_B(\alpha)\#v_B(\beta)=v_B(\psi).
\]

If $\psi=\neg\alpha$, then $\alpha\in\Phi$, so $v(\alpha)=v_B(\alpha)\in S$. By the definition of $\neg_A$ on $S$,
\[
v(\psi)=v(\neg\alpha)=\neg_A v(\alpha)=\neg_A v_B(\alpha)=\neg_B v_B(\alpha)=v_B(\neg\alpha)=v_B(\psi).
\]

If $\psi=\circ\alpha$, then again $\alpha\in\Phi$ and $v(\alpha)=v_B(\alpha)\in S$. Hence
\[
v(\psi)=v(\circ\alpha)=\circ_A v(\alpha)=\circ_A v_B(\alpha)=\circ_B v_B(\alpha)=v_B(\circ\alpha)=v_B(\psi).
\]
Thus, $v(\psi)=v_B(\psi)$ for every $\psi\in\Phi$.
\end{proof}

By Lemma \ref{LS31}, $\mathbf{A}$ is a finite BALFI. By Lemma\ref{lem:preservation}, for every $\psi \in \Phi$ we have $v(\psi) = v_B(\psi)$. In particular, since $\varphi \in \Phi$, it follows that
\[
v(\varphi) = v_B(\varphi) \neq 1.
\]
Therefore, $\mathbf{A}$ is a finite BALFI and $v$ is a valuation such that $v(\varphi) \neq 1$, so $\varphi$ is not valid in $\mathbf{A}$. This completes the proof.
\end{proof}

\begin{corollary}[Decidability of $RmbC$]\label{cor:decidability-rmbc}
The logic $RmbC$ is decidable.
\end{corollary}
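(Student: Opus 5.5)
The plan is to apply Harrop's criterion (Theorem~\ref{thm:harrop}) to $\mathrm{RmbC}$. Concretely, I will build two semi-decision procedures and run them in parallel.

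\emph{Theorems are recursively enumerable.} $\mathrm{RmbC}$ has finitely many axiom schemas, (Ax1)--(Ax10) and (bc1), and three rules: (MP), $(R_\neg)$ and $(R_\circ)$. The rules $(R_\neg)$ and $(R_\circ)$ are global, but their premises are theorems, and checking whether a finite sequence of formulas is a derivation using these rules is still decidable. So one can effectively list all derivations, and with them all theorems.

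\emph{Non-theorems are recursively enumerable.} By Theorem~\ref{thm:fmp-rmbc}, any $\varphi$ with $\nvdash_{\mathrm{RmbC}}\varphi$ is refuted in some finite BALFI. The proof of that theorem also gives a size bound. The set $S$ has at most $|\mathrm{Sub}(\varphi)|=:k$ elements, so $G$ has at most $3k$ generators. The Boolean algebra they generate therefore has at most $2^{2^{3k}}$ elements.

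The search for a countermodel then runs as follows:
\begin{itemize}
\item Finite BALFIs can be enumerated effectively. Up to isomorphism, list the finite powerset Boolean algebras $\wp(n)$ together with every pair of unary tables $\neg,\circ$ on them.
\item For each candidate, check the two defining equations $a\vee\neg a=1$ and $a\wedge\neg a\wedge\circ a=0$ by exhausting all elements.
\item For each candidate that passes, evaluate $\varphi$ under every assignment of elements to $\mathrm{Var}(\varphi)$.
\end{itemize}
Only variables in $\mathrm{Var}(\varphi)$ need values, since $v(\varphi)$ does not depend on the others. This is a decidable validity check.

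With the size bound, the countermodel search can even be turned directly into a decision procedure. Let $N=2^{2^{3k}}$. Check all BALFIs on $\wp(m)$ with $2^m\le N$. Then $\varphi$ is a theorem if and only if no such algebra refutes it. Soundness (Theorem~\ref{thm:completeness-RmbC}) gives the direction ``theorem $\Rightarrow$ valid in all these algebras''. Theorem~\ref{thm:fmp-rmbc} together with the bound gives the converse.

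The only point needing care is to state that bound explicitly, by noting that a Boolean algebra generated by $g$ elements has at most $2^{2^g}$ elements. I expect no real obstacle here. The substance lies entirely in Theorem~\ref{thm:fmp-rmbc}, and the corollary is the standard combination of finite axiomatizability, effective checkability of the finite models, and the FMP.
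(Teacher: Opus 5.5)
Your proposal is correct and follows the paper's proof, which combines the finite axiomatizability of $\mathrm{RmbC}$ with the finite model property of Theorem~\ref{thm:fmp-rmbc} through Harrop's criterion (Theorem~\ref{thm:harrop}); you also spell out the recursive enumerability of theorems and of finite BALFIs, which the paper leaves implicit. Your bounded search with $N=2^{2^{3k}}$ is valid (it needs only soundness and the size bound extracted from the filtration), and it is the same observation the paper records in the remark after the corollary and quantifies in Section~\ref{sec:complexity}.
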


\begin{proof}
By Theorem~\ref{thm:fmp-rmbc}, $\mathrm{RmbC}$ has the finite model 
property with respect to BALFIs. Since $\mathrm{RmbC}$ is finitely 
axiomatizable, decidability follows by Theorem~\ref{thm:harrop}.
\end{proof}

\begin{remark}
In particular, theoremhood in \(\mathrm{RmbC}\) can be decided by searching through finite BALFIs of bounded size and checking whether any of them refutes the given formula.
\end{remark}

\section{Transfer of the Finite Model Property to the Main Consistency Extensions}
\label{sec:transfer}

The algebraic filtration argument of Section~\ref{sec:fmp} transfers to 
each of the fourteen principal axiomatic extensions introduced in 
Section~\ref{sec:extensions}, with the single exception of 
$(\mathrm{ce})$. Every transfer proof follows the same underlying 
construction: isolate a finite support containing the subformula 
values of the target formula, close it appropriately, generate a 
finite Boolean subalgebra, and extend the non-classical operators 
canonically outside the support. We isolate this construction once, 
as a general lemma, and then verify for each axiom only the specific 
equation it requires, the genuinely new content in each case.
%Solo para recordar cuando vuelva a leer esta sección, el siguiente Lema es clave, despues de varias cuentas encontramos un "patrón" en la prueba de la FMP para cada extensión y eso queda reflejado en el Lema. Esos conjuntos R_\neg y R_\circ son "a veces" (según el axioma) iguales S, no lo dejé fijo porque por ejemplo para el axioma cf hay que agrandarlo más o para los ca_# son diferentes, así que mejor en el lema general darles "libertad". Otra cuestión es la función f_\circ, hay que darle libertad porque no siempre sirve que valga 1, como para RmbC.
%Otra cuestion, no es necesario pedir obligatoriamente que R_\neg este contenido en R_\circ, pero pedirlo así ayuda a reducir los casos a considerar, por ejemplo en la prueba de ese lema, en (a). En resumen, se podrian pedir dos soportes independientes, pero por conveniencia es lo mejor pedirlo así.
\subsection{The general filtration construction}
\label{sec:transfer-general}

\begin{lemma}\label{lem:general-transfer}
Let $\mathbf{B}\in\mathsf{BI}$, let $\varphi\in\mathrm{For}(\Sigma)$ 
with $\Phi=\mathrm{Sub}(\varphi)$, and let $v_B:\mathrm{For}(\Sigma)\to B$ 
be a valuation. Let $S:=\{v_B(\psi):\psi\in\Phi\}$. Let $A$ be a 
finite Boolean subalgebra of the Boolean reduct of $\mathbf{B}$, and 
let $R_\neg\subseteq R_\circ\subseteq A$ satisfy
\[
S\subseteq R_\neg,
\qquad
\neg_B[R_\neg]\subseteq A,
\qquad
\circ_B[R_\circ]\subseteq A.
\]
Fix any function $f_\circ:A\setminus R_\circ\to A$, and define 
$\neg_A,\circ_A:A\to A$ by
\[
\neg_A a=
\begin{cases}
\neg_B a, & a\in R_\neg,\\
{\sim}a, & a\notin R_\neg,
\end{cases}
\qquad
\circ_A a=
\begin{cases}
\circ_B a, & a\in R_\circ,\\
f_\circ(a), & a\notin R_\circ.
\end{cases}
\]
Then, regardless of the choice of $f_\circ$:
\begin{enumerate}
\item[(a)] $\mathbf{A}=\langle A,\wedge,\vee,\to,\neg_A,\circ_A,0,1\rangle$ 
is a finite BALFI;
\item[(b)] the valuation $v:\mathrm{For}(\Sigma)\to A$ defined by 
$v(p)=v_B(p)$ for atomic $p\in\Phi$, $v(p)=0$ otherwise, satisfies $v(\psi)=v_B(\psi)$ for every $\psi\in\Phi$. In particular, $v(\varphi)=v_B(\varphi)$.
\end{enumerate}
\end{lemma}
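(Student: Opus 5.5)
The proof will follow the pattern of Lemmas~\ref{LS31} and~\ref{lem:preservation} in the proof of Theorem~\ref{thm:fmp-rmbc}, generalized to the two supports $R_\neg\subseteq R_\circ$ and an arbitrary $f_\circ$. First I check that $\neg_A$ and $\circ_A$ really map $A$ into $A$. On $R_\neg$ and $R_\circ$ this is exactly the hypothesis $\neg_B[R_\neg]\subseteq A$ and $\circ_B[R_\circ]\subseteq A$. Outside the supports, ${\sim}a\in A$ because $A$ is a Boolean subalgebra, and $f_\circ(a)\in A$ by assumption. Since $A$ is a finite Boolean subalgebra of the reduct of $\mathbf{B}$, its reduct is a finite Boolean algebra, and meets, joins and complements computed in $A$ agree with those in $B$.

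For (a) I verify the two BALFI equations $a\vee\neg_A a=1$ and $a\wedge\neg_A a\wedge\circ_A a=0$, splitting on the position of $a$. The inclusion $R_\neg\subseteq R_\circ$ reduces this to three cases.
\begin{enumerate}
\item[(i)] If $a\in R_\neg$, then also $a\in R_\circ$. Both operators agree with those of $\mathbf{B}$, so both equations are inherited from $\mathbf{B}$.
\item[(ii)] If $a\notin R_\neg$, then $\neg_A a={\sim}a$. Hence $a\vee\neg_A a=1$, and $a\wedge\neg_A a=0$, which makes the second equation hold whatever $\circ_A a$ is. This covers both $a\in R_\circ\setminus R_\neg$ and $a\notin R_\circ$.
\end{enumerate}
This is precisely why the conclusion holds regardless of $f_\circ$. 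It also explains the point to be careful about: the case $a\in R_\circ\setminus R_\neg$, where $\circ_A a=\circ_B a$ but $\neg_A a={\sim}a$. The equations survive there only because $\neg_A a$ is the Boolean complement. Had we instead taken $\neg$ from $\mathbf{B}$ and $\circ$ from $f_\circ$, the second equation could fail. The nesting $R_\neg\subseteq R_\circ$ rules that configuration out.

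For (b) I would argue by structural induction on $\psi\in\Phi$, exactly as in Lemma~\ref{lem:preservation}. The atomic case holds by definition of $v$. The Boolean connectives follow because $A$ is closed under them in $B$ and $\Phi$ is closed under subformulas. If $\psi=\neg\alpha$, the induction hypothesis gives $v(\alpha)=v_B(\alpha)\in S\subseteq R_\neg$, so $\neg_A v(\alpha)=\neg_B v_B(\alpha)$. If $\psi=\circ\alpha$, the same value lies in $S\subseteq R_\neg\subseteq R_\circ$, so $\circ_A v(\alpha)=\circ_B v_B(\alpha)$.

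One technical point is that $v$ must actually land in $A$. Atomic values $v_B(p)$ with $p\in\Phi$ belong to $S\subseteq R_\neg\subseteq A$, and $0\in A$, so $v$ is a well-defined homomorphism into $\mathbf{A}$. Taking $\psi=\varphi$ then yields $v(\varphi)=v_B(\varphi)$.
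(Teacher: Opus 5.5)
Your proposal is correct and follows the paper's proof essentially step for step: the same case analysis for (a), with $R_\neg\subseteq R_\circ$ used exactly where the paper uses it, and the same structural induction for (b). The differences are cosmetic. You merge the paper's cases $a\in R_\circ\setminus R_\neg$ and $a\notin R_\circ$ into the single case $a\notin R_\neg$, so your ``three cases'' should read ``two''. You also make explicit the checks that $\neg_A$, $\circ_A$ and $v$ actually land in $A$, which the paper leaves implicit.
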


\begin{proof}
(a) The reduct $\langle A,\wedge,\vee,\to,0,1\rangle$ is a Boolean 
algebra since $A$ is a Boolean subalgebra of $\mathbf{B}$'s reduct. We 
verify the two BALFI equations for every $a\in A$, in three cases.

\emph{Case $a\in R_\neg$.} Since $R_\neg\subseteq R_\circ$, also
$a\in R_\circ$, so $\neg_A a=\neg_B a$ and $\circ_A a=\circ_B a$.
Since $\mathbf{B}$ is a BALFI,
\[
a\vee\neg_B a=1
\qquad\text{and}\qquad
a\wedge\neg_B a\wedge\circ_B a=0.
\]
These are the same equations for $\mathbf{A}$, since $a\in A$ and
$\neg_B a,\circ_B a\in A$ by the closure assumptions.

\emph{Case $a\in R_\circ\setminus R_\neg$.} Here $\neg_A a={\sim}a$ 
(since $a\notin R_\neg$) and $\circ_A a=\circ_B a$ (since $a\in R_\circ$). 
Then $a\vee\neg_A a=a\vee{\sim}a=1$, and 
$a\wedge\neg_A a\wedge\circ_A a=(a\wedge{\sim}a)\wedge\circ_B a
=0\wedge\circ_B a=0$.

\emph{Case $a\notin R_\circ$.} Since $R_\neg\subseteq R_\circ$, also 
$a\notin R_\neg$, so $\neg_A a={\sim}a$ and $\circ_A a=f_\circ(a)$. 
Then $a\vee\neg_A a=a\vee{\sim}a=1$, and 
$a\wedge\neg_A a\wedge\circ_A a=(a\wedge{\sim}a)\wedge f_\circ(a)
=0\wedge f_\circ(a)=0$, regardless of the value of $f_\circ(a)$.

(b) By induction on the structure of $\psi\in\Phi$. The atomic and 
compound cases are immediate. If $\psi=\neg\alpha$: 
$v(\alpha)=v_B(\alpha)\in S\subseteq R_\neg$, so 
$v(\psi)=\neg_A(v(\alpha))=\neg_B(v(\alpha))=v_B(\neg\alpha)=v_B(\psi)$. 
If $\psi=\circ\alpha$: $v(\alpha)\in S\subseteq R_\neg\subseteq R_\circ$, 
so $v(\psi)=\circ_A(v(\alpha))=\circ_B(v(\alpha))=v_B(\circ\alpha)
=v_B(\psi)$.
\end{proof}

In every application below, $f_\circ$ is a constant function except 
in the case of $(i2)$, which uses the identity $f_\circ(a)=a$; the 
constant used is $f_\circ\equiv 1$ for $(\mathrm{ciw})$, 
$(\mathrm{ci})$, $(\mathrm{cl})$, $(\mathrm{cf})$, $(d1)$, $(d2)$, 
$(\mathrm{a}_\circ)$, and $(\mathrm{a}_\neg)$, and $f_\circ\equiv 0$ 
for $(\mathrm{ca}_\#)$. Finiteness of $A$ in each application below 
follows because $A$ is generated, as a Boolean subalgebra, by a 
finite set, Boolean algebras being locally finite.

\subsection{Transfer for \texorpdfstring{$(\mathrm{ciw})$, $(\mathrm{ci})$, 
$(\mathrm{cl})$, $(\mathrm{d_1})$, $(\mathrm{d_2})$, $(\mathrm{a_\circ})$ and $(\mathrm{cf})$}{ciw, ci, cl, and cf}}
\label{sec:transfer-basic}

We first establish two auxiliary lemmas: the first is used in the proof of $(\mathrm{ci})$, while the second is needed for $(\mathrm{a_\circ})$.

\begin{lemma}\label{lem:ci-facts}
In every $\mathbf{B}\in\mathsf{BI}(\mathrm{ci})$, one has $\neg 1=0$ 
and $\circ 0=1$.
\end{lemma}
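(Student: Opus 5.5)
The plan is to read off both identities directly from the algebraic form of $(\mathrm{ci})$, namely $\neg(\circ a)=a\wedge\neg a$, instantiated at $a=0$, together with the BALFI equation $x\vee\neg x=1$ applied to a suitable element. No appeal to the filtration machinery or to any earlier lemma beyond the definitions is needed.

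\emph{Step 1: $\circ 0=1$.} Instantiate $(\mathrm{ci})$ at $a=0$. This gives $\neg(\circ 0)=0\wedge\neg 0=0$. Next apply the BALFI excluded-middle equation $x\vee\neg x=1$ to the element $x=\circ 0$. This yields $\circ 0\vee\neg(\circ 0)=1$, that is, $\circ 0\vee 0=1$, hence $\circ 0=1$.

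\emph{Step 2: $\neg 1=0$.} Substitute Step 1 into the same instance of $(\mathrm{ci})$. Since $\circ 0=1$, we get $\neg 1=\neg(\circ 0)=0\wedge\neg 0=0$.

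The only point requiring care is the order of the steps. Attacking $\neg 1$ directly through $(\mathrm{ci})$ at $a=1$ only gives $\neg(\circ 1)=\neg 1$. Combined with the BALFI equations, this merely shows that $\neg 1$ is the Boolean complement of $\circ 1$, without forcing it to be $0$. The trick is therefore to go through $a=0$ first, where $(\mathrm{ci})$ collapses to $\neg(\circ 0)=0$. Excluded middle then pins $\circ 0$ to $1$, and $\neg 1$ becomes an instance of $(\mathrm{ci})$ itself. I do not expect any genuine obstacle beyond noticing this order.
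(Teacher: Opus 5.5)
Your proposal is correct and follows the paper's proof essentially verbatim: instantiate $(\mathrm{ci})$ at $a=0$ to get $\neg(\circ 0)=0$, use $x\vee\neg x=1$ at $x=\circ 0$ to force $\circ 0=1$, and substitute back to obtain $\neg 1=0$. Your side remark that instantiating at $a=1$ only shows $\neg 1={\sim}(\circ 1)$ is also accurate, though the proof does not need it.
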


\begin{proof}
By $(\mathrm{ci})$ at $a=0$: $\neg(\circ 0)=0\wedge\neg 0=0$. Now 
apply the BALFI axiom $x\vee\neg x=1$ with $x=\circ 0$: since 
$\neg(\circ 0)=0$, this gives $\circ 0\vee 0=1$, hence $\circ 0=1$. 
Since $\circ 0=1$, replacing $\circ 0$ by $1$ in the equation 
$\neg(\circ 0)=0$ established above yields $\neg 1=0$.
\end{proof}
\begin{lemma}\label{lem:acirc-top}
If $\mathbf{B}\in\mathsf{BI}(\mathrm{a}_\circ)$, then $\circ 1=1$.
\end{lemma}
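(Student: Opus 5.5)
Write $c:=\circ 0$. The plan is to show that $c=1$ and then use this to obtain $\circ 1 = \circ c$, which $(\mathrm{a}_\circ)$ makes equal to $1$. The algebraic form of $(\mathrm{a}_\circ)$ says $\circ{\circ} x=1$ for every $x$. We combine it with the two BALFI equations $x\vee\neg x=1$ and $x\wedge\neg x\wedge\circ x=0$, instantiated at suitable elements.

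\textbf{Step 1: $\circ 0=1$.} Instantiate $(\mathrm{a}_\circ)$ at $x={\circ}0$, giving $\circ c=1$. The BALFI explosion equation at $x=c$ then reads $c\wedge\neg c\wedge 1=0$, that is, $c\wedge\neg c=0$. Together with $c\vee\neg c=1$, uniqueness of complements in Boolean algebras (the same argument as in Lemma~\ref{lem:cl-basic}(ii)) gives $\neg c={\sim}c$. This alone does not force $c=1$, so we need a further identity. Applying the BALFI equations at $x=0$ gives $\neg 0=1$, since $0\vee\neg 0=1$, but the explosion equation there is vacuous.

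\textbf{Step 2, the main obstacle.} It is not clear that $\circ 0=1$ follows at all. We would test the claim against a small countermodel on the two-element algebra $\{0,1\}$ with $\neg 0=1$, $\neg 1=1$, $\circ 0=1$, $\circ 1=0$. Then $\circ\circ 0=\circ 1=0\neq 1$, so $(\mathrm{a}_\circ)$ fails, and this is not a counterexample. In general, $(\mathrm{a}_\circ)$ gives $\circ y=1$ for every $y$ in the image of $\circ$. Hence, if $1$ lies in the image of $\circ$, we are done immediately. The approach is therefore to show that $1\in\circ[B]$. Set $d:=\circ 1$. By $(\mathrm{a}_\circ)$ we have $\circ d=1$, so $1=\circ d$ lies in the image of $\circ$. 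Applying $(\mathrm{a}_\circ)$ again at $x=d$ gives $\circ 1=\circ(\circ d)=1$.

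This is in fact the whole proof, and it is direct: $(\mathrm{a}_\circ)$ at $x=1$ gives $\circ\circ 1=1$; writing $d=\circ 1$, this says $\circ d=1$. Then $\circ 1=\circ(\circ d)$, which equals $1$ by $(\mathrm{a}_\circ)$ applied at $x=d$. Step 1 is unnecessary. The only real point is to notice that $(\mathrm{a}_\circ)$ should be applied twice, first at $1$ and then at $\circ 1$, so that $1$ is exhibited as a value of $\circ$.
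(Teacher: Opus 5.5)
Your final argument is correct and is essentially the paper's proof. Both apply $(\mathrm{a}_\circ)$ twice: first to get $\circ y=1$ for some $y$ in the image of $\circ$, then at $y$ itself to conclude $\circ 1=\circ(\circ y)=1$. The only difference is that the paper takes $y=\circ 0$ where you take $y=\circ 1$, which does not matter.

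You should delete the exploratory Step~1, because its headline claim $\circ 0=1$ is not derivable in $\mathsf{BI}(\mathrm{a}_\circ)$. For example, on $\wp(\{w_1,w_2\})$ take $\neg={\sim}$, $\circ 0=a$, and $\circ x=1$ for $x\neq 0$. All BALFI equations and $(\mathrm{a}_\circ)$ hold, yet $\circ 0\neq 1$.
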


\begin{proof}
Let $y:=\circ 0$. By $(\mathrm{a}_\circ)$ at $a=0$: $\circ y=1$. 
Applying $(\mathrm{a}_\circ)$ at $a=y$: $\circ(\circ y)=1$. Since 
$\circ y=1$, this gives $\circ 1=1$.
\end{proof}
\begin{theorem}\label{thm:transfer-basic}
Let $Ax\in\{(\mathrm{ciw}),(\mathrm{ci}),(\mathrm{cl}),(\mathrm{cf}), (\mathrm{d_1}), (\mathrm{d_2}),(\mathrm{a_\circ})\}$. 
If $\nvdash_{\mathrm{RmbC}(Ax)}\varphi$, then there exist a finite 
BALFI $\mathbf{A}\in\mathsf{BI}(Ax)$ and a valuation 
$v:\mathrm{For}(\Sigma)\to A$ such that $v(\varphi)\neq 1$.
\end{theorem}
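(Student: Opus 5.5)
The plan is to treat all seven axioms uniformly through Lemma~\ref{lem:general-transfer} with $f_\circ\equiv 1$, varying only the support. Assume $\nvdash_{\mathrm{RmbC}(Ax)}\varphi$. By Theorem~\ref{thm:completeness-uniform} there are $\mathbf{B}\in\mathsf{BI}(Ax)$ and $v_B$ with $v_B(\varphi)\neq 1$. Let $S$ be as in Lemma~\ref{lem:general-transfer}. In each case I choose a finite set $R:=R_\neg=R_\circ\supseteq S$ and take $A$ to be the Boolean subalgebra generated by $R\cup\neg_B[R]\cup\circ_B[R]$, which is finite. Lemma~\ref{lem:general-transfer} then gives a finite BALFI $\mathbf{A}$ and a valuation $v$ with $v(\varphi)=v_B(\varphi)\neq 1$. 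So the only remaining work is to check that $\mathbf{A}\in\mathsf{BI}(Ax)$. For $a\notin R$ we always have $\neg_A a={\sim}a$ and $\circ_A a=1$.

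\emph{Easy cases: $(d1)$, $(d2)$, $(\mathrm{ciw})$, $(\mathrm{a}_\circ)$, with $R=S$.}
\begin{itemize}
\item For $a\in S$, the equations for $(d1)$, $(d2)$ and $(\mathrm{ciw})$ are inherited from $\mathbf{B}$. For $a\notin S$, each holds because $\circ_A a=1={\sim}(a\wedge{\sim}a)$.
\item For $(\mathrm{a}_\circ)$ and $a\in S$: if $\circ_B a\in S$, then $\circ_A\circ_A a=\circ_B\circ_B a=1$. Otherwise $\circ_A(\circ_B a)=1$ by definition.
\item For $(\mathrm{a}_\circ)$ and $a\notin S$: we need $\circ_A 1=1$. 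This holds either by definition (if $1\notin S$) or by Lemma~\ref{lem:acirc-top} (if $1\in S$).
\end{itemize}

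\emph{Case $(\mathrm{ci})$, with $R=S\cup\circ_B[S]$.} In $\mathbf{B}$ we have $\neg\circ\circ a=\circ a\wedge\neg\circ a=\circ a\wedge a\wedge\neg a=0$, so $\circ_B\circ_B a=1$ by excluded middle. Thus $\circ_B[R]\subseteq\circ_B[S]\cup\{1\}$ and $A$ is still finite.
\begin{itemize}
\item For $a\in S$: $\circ_A a=\circ_B a\in R$, so $\neg_A\circ_A a=\neg_B\circ_B a=a\wedge\neg_B a=a\wedge\neg_A a$.
\item For $a=\circ_B c$ and for $a\notin R$: $\circ_A a=1$. Here $\neg_A 1=0$, either by definition or by Lemma~\ref{lem:ci-facts}.
\item It remains to see $a\wedge\neg_A a=0$ in those cases. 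For $a\notin R$ this is immediate. For $a=\circ_B c$ we have $\neg_B\circ_B c=c\wedge\neg_B c$, so $\circ_B c\wedge\neg_B\circ_B c=0$ by the BALFI equation.
\end{itemize}

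\emph{Case $(\mathrm{cl})$, with $R=S\cup\{t_a:a\in S\}$ where $t_a=a\wedge\neg_B a$.}
\begin{itemize}
\item For $a\in S$: $\neg_A t_a=\neg_B t_a$, so $\circ_A a=\circ_B a=\neg_B(t_a)=\neg_A(a\wedge\neg_A a)$.
\item For $t_a$ itself: Lemma~\ref{lem:cl-basic} gives $t_a\wedge\neg_B t_a=0$, so $\circ_B t_a=\neg_B 0$. Since ${\sim}0\le\neg_B 0$ by Lemma~\ref{lem:neg-complement-bound}, $\neg_B 0=1=\neg_A 0$.
\item For $a\notin R$: $\neg_A(a\wedge{\sim}a)=\neg_A 0=1=\circ_A a$.
\end{itemize}

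\emph{Case $(\mathrm{cf})$: the main obstacle.} We must show $\neg_A\neg_A x\le x$ for every $x\in A$. The difficulty is an element $x\notin R$ with ${\sim}x\in R$, since $\neg_B({\sim}x)$ is uncontrolled. To block this, I take $R$ closed under ${\sim}$, namely $R=S\cup\neg_B[S]\cup{\sim}S\cup{\sim}\neg_B[S]$.
\begin{itemize}
\item For $x\notin R$: ${\sim}x\notin R$, so $\neg_A\neg_A x=x$.
\item For $y\in R$ with $\neg_B y\in R$: $\neg_A\neg_A y=\neg_B\neg_B y\le y$ by $(\mathrm{cf})$ in $\mathbf{B}$.
\item For $y\in R$ with $\neg_B y\notin R$: $\neg_A\neg_A y={\sim}\neg_B y\le\neg_B\neg_B y\le y$, using Lemma~\ref{lem:neg-complement-bound} and then $(\mathrm{cf})$.
\end{itemize}
The only delicate point in the whole proof is this closure choice. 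Everything else is routine bookkeeping confirming that the added elements keep $A$ finite.
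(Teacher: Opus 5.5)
Your proof is correct and is essentially the paper's. Both use the general filtration lemma with $f_\circ\equiv 1$, the support $S$ for $(d1)$, $(d2)$, $(\mathrm{ciw})$, $(\mathrm{a}_\circ)$, and a ${\sim}$-closed support for $(\mathrm{cf})$. For $(\mathrm{cf})$ the paper's smaller $T=S\cup\{{\sim}x:x\in S\}$ already suffices, and your argument never uses the extra elements $\neg_B[S]$, ${\sim}\neg_B[S]$.

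For $(\mathrm{ci})$ and $(\mathrm{cl})$ the paper keeps the support $S$, whereas you enlarge it to $S\cup\circ_B[S]$ and $S\cup\{t_a\}$ with $t_a=a\wedge\neg_B a$. Your enlarged supports in fact yield literally the same finite BALFI. This is because in $\mathsf{BI}(\mathrm{ci})$ one has $\neg_B\circ_B c={\sim}\circ_B c$ and $\circ_B\circ_B c=1$, and in $\mathsf{BI}(\mathrm{cl})$ one has $\neg_B t_a={\sim}t_a$ and $\circ_B t_a=1$. What your version buys is explicitness. The paper dismisses the on-support check with ``$\mathbf{A}$ agrees with $\mathbf{B}$ there'', but that check tacitly relies on exactly these identities whenever $\circ_B a$ or $a\wedge\neg_B a$ falls outside $S$.
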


\begin{proof}
Let $S=\{v_B(\psi):\psi\in\Phi\}$ as usual.
\begin{itemize}
    \item[(i)] \emph{Cases $(\mathrm{ciw})$, $(\mathrm{ci})$, $(\mathrm{cl})$, $(\mathrm{d_1})$, $(\mathrm{d_2})$ and $(\mathrm{a_\circ})$.} Let 
$A$ be the Boolean subalgebra generated by $S\cup\neg_B[S]\cup\circ_B[S]$. 
We apply Lemma~\ref{lem:general-transfer} with $R_\neg=R_\circ=S$ and 
$f_\circ(a)=1$ for all $a\notin S$. It remains to verify the axiom 
outside $S$ (on $S$ it holds since $\mathbf{A}$ agrees with 
$\mathbf{B}$ there).

For $(\mathrm{ciw})$: we need $\circ_A a={\sim}(a\wedge\neg_A a)$ for 
$a\notin S$. Since $\neg_A a={\sim}a$, $a\wedge\neg_A a=0$, and the 
right-hand side is ${\sim}0=1=\circ_A a$. 

For $(\mathrm{cl})$: we need $\circ_A a=\neg_A(a\wedge\neg_A a)$ for 
$a\notin S$. As above $a\wedge\neg_A a=0$, so we need $\neg_A 0=1$: if 
$0\in S$, $\neg_A 0=\neg_B 0=1$ (every BALFI satisfies $0\vee\neg 0=1$); 
if $0\notin S$, $\neg_A 0={\sim}0=1$ by definition.

For $(\mathrm{ci})$: we need $\neg_A(\circ_A a)=a\wedge\neg_A a$ for 
$a\notin S$. Here $\circ_A a=1$, so the left-hand side is $\neg_A 1$: 
if $1\in S$, $\neg_A 1=\neg_B 1=0$ by Lemma~\ref{lem:ci-facts}; if 
$1\notin S$, $\neg_A 1={\sim}1=0$. Either way, 
$\neg_A(\circ_A a)=0=a\wedge{\sim}a=a\wedge\neg_A a$.

For $(\mathrm{d_1})$ and $(\mathrm{d_2})$: both 
axioms have the form $\circ a\vee x=1$ for $x\in\{a,\neg a\}$: on $S$ 
this transfers directly from the corresponding equation in 
$\mathbf{B}$; off $S$, $\circ_A a=1$ makes the disjunction hold 
regardless of $x$.
\begin{itemize}
\item $(d1)$ ($x=a$): $\circ_A a\vee a=\circ_B a\vee a=1$ if 
$a\in S$; $1\vee a=1$ if $a\notin S$.
\item $(d2)$ ($x=\neg a$): $\circ_A a\vee\neg_A a=\circ_B a\vee\neg_B a=1$ 
if $a\in S$; $1\vee{\sim}a=1$ if $a\notin S$.
\end{itemize}
For $(\mathrm{a_\circ})$: We check $\circ_A(\circ_A a)=1$ for all $a\in A$. If $a\in S$: 
$\circ_A a=\circ_B a$. If $\circ_B a\in S$: $\circ_A(\circ_B a)
=\circ_B(\circ_B a)=1$, since $\mathbf{B}\in\mathsf{BI}(\mathrm{a}_\circ)$. 
If $\circ_B a\notin S$: $\circ_A(\circ_B a)=1$ by the default rule. 
Either way, $\circ_A(\circ_A a)=1$.

If $a\notin S$: $\circ_A a=1$. If $1\in S$: $\circ_A 1=\circ_B 1=1$ by 
Lemma~\ref{lem:acirc-top}. If $1\notin S$: $\circ_A 1=1$ by default. 
Either way, $\circ_A(\circ_A a)=1$.
% Acá aparece la primera ampliacion del soporte, es necesaia porque lo que hay que probar tiene dos negaciones, esto me llevó un poco más de trabajo; podria haber considerado \neg[S], si un elemento no esta en el soporte entonces \neg_A a=\sim a, hasta ahí todo bien; el problema es que ahí surge la cuestion si \neg_A a=\sim a está en el soporte o no, si no lo está queda \neg_A \neg_A a=\neg_B \sim a, y ahí no hay salida, por eso fue necesario ampliar.
    \item[(ii)] \emph{Case $(\mathrm{cf})$.} This case requires an enlarged support, 
since verifying $\neg_A\neg_A a\leq a$ at $a\in S$ requires control 
over $\neg_A$ at $\neg_B a$ as well. Let $T=S\cup\{{\sim}x:x\in S\}$, 
which is finite and closed under Boolean complementation, and let $A$ 
be the Boolean subalgebra generated by $T\cup\neg_B[T]\cup\circ_B[T]$. 
We apply Lemma~\ref{lem:general-transfer} with $R_\neg=R_\circ=T$ and 
$f_\circ(a)=1$ for all $a\notin T$; the axiom $(\mathrm{cf})$ itself 
does not involve $\circ_A$, so this default plays no further role.

We must show $\neg_A\neg_A a\leq a$ for all $a\in A$.
\begin{itemize}
\item If $a\in T$: $\neg_A a=\neg_B a$. If $\neg_B a\in T$: 
$\neg_A(\neg_B a)=\neg_B(\neg_B a)\leq a$, since 
$\mathbf{B}\models(\mathrm{cf})$. If $\neg_B a\notin T$: 
$\neg_A(\neg_B a)={\sim}(\neg_B a)$; by 
Lemma~\ref{lem:neg-complement-bound} applied at $x=\neg_B a$, 
${\sim}(\neg_B a)\leq\neg_B(\neg_B a)$, and 
$\neg_B(\neg_B a)\leq a$ since $\mathbf{B}\models(\mathrm{cf})$. 
Either way, $\neg_A\neg_A a\leq a$.
\item If $a\notin T$: $\neg_A a={\sim}a$, and since $T$ is closed 
under Boolean complement, ${\sim}a\notin T$ too. Hence 
$\neg_A\neg_A a=\neg_A({\sim}a)={\sim}{\sim}a=a\leq a$.
\end{itemize}

\end{itemize}
In every case, Lemma~\ref{lem:general-transfer}(b) gives 
$v(\varphi)=v_B(\varphi)\neq 1$.
\end{proof}

\subsection{Transfer for \texorpdfstring{$(\mathrm{ca}_\#)$}
{ca-sharp}, \texorpdfstring{$\#\in\{\wedge,\vee,\to\}$}{\# in \{and,or,implies\}}}
\label{sec:transfer-ca}

The axioms $(\mathrm{ca}_\#)$ involve $\circ$ applied to a Boolean 
combination of two elements, so the region on which $\circ_A$ must be 
correct is strictly larger than the region needed for $\neg_A$.
%Acá hay algo clave, el conjunto A_0 es necesario porque las ecuaciones a verificar requieren que se controle el valor \circ(a # b), ese valor tiene que caer adentro del soporte
\begin{theorem}\label{thm:transfer-ca}
Let $\#\in\{\wedge,\vee,\to\}$. If $\nvdash_{\mathrm{RmbC}(\mathrm{ca}_\#)}
\varphi$, then there exist a finite BALFI 
$\mathbf{A}\in\mathsf{BI}(\mathrm{ca}_\#)$ and a valuation 
$v:\mathrm{For}(\Sigma)\to A$ such that $v(\varphi)\neq 1$.
\end{theorem}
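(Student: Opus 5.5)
The plan is to apply Lemma~\ref{lem:general-transfer} with $R_\neg=S$, $f_\circ\equiv 0$, and a support $R_\circ$ for $\circ$ that is closed under the Boolean operation $\#$. Closure matters because of how $(\mathrm{ca}_\#)$ can fail. With $f_\circ\equiv 0$, a failure of $\circ_A a\wedge\circ_A b\leq\circ_A(a\#b)$ can only occur when $a,b\in R_\circ$ but $a\#b\notin R_\circ$. In that situation $\circ_A(a\#b)$ would be the default $0$, while $\circ_B a\wedge\circ_B b$ may be nonzero. Closing $R_\circ$ under $\#$ removes this case.

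\emph{Choosing the supports.} Closure under $\#$ must not start an unbounded regress. Taking $R_\circ=A$ would require $\circ_B[A]\subseteq A$, and since $\mathsf{BI}$ is not locally finite (Proposition~\ref{prop:non-locally-finite}) this could force $A$ to be infinite. I therefore use two stages. Let $S=\{v_B(\psi):\psi\in\Phi\}$, where $\mathbf{B}$ and $v_B$ come from completeness (Theorem~\ref{thm:completeness-uniform}) with $\mathbf{B}\in\mathsf{BI}(\mathrm{ca}_\#)$ and $v_B(\varphi)\neq 1$.
\begin{enumerate}
\item Let $A_0$ be the Boolean subalgebra generated by $S\cup\neg_B[S]$.
\item Let $A$ be the Boolean subalgebra generated by $A_0\cup\circ_B[A_0]$.
\end{enumerate}
By local finiteness of Boolean algebras, $A_0$ is finite, so $\circ_B[A_0]$ is finite and $A$ is finite. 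Take $R_\neg:=S$ and $R_\circ:=A_0$. The hypotheses of Lemma~\ref{lem:general-transfer} then hold:
\begin{itemize}
\item $S\subseteq R_\neg\subseteq R_\circ\subseteq A$;
\item $\neg_B[S]\subseteq A_0\subseteq A$;
\item $\circ_B[A_0]\subseteq A$.
\end{itemize}
With $f_\circ\equiv 0$, the lemma gives a finite BALFI $\mathbf{A}$ and a valuation $v$ with $v(\varphi)=v_B(\varphi)\neq 1$.

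\emph{Verifying $(\mathrm{ca}_\#)$ in $\mathbf{A}$.} It remains to check $\circ_A a\wedge\circ_A b\leq\circ_A(a\#b)$ for all $a,b\in A$, in two cases.
\begin{itemize}
\item If $a\notin A_0$ or $b\notin A_0$, the corresponding value of $\circ_A$ is $f_\circ=0$. The left-hand side is then $0$ and the inequality is trivial.
\item If $a,b\in A_0$, then $a\#b\in A_0$, because $A_0$ is a Boolean subalgebra. All three $\circ_A$-values coincide with the $\circ_B$-values, so the inequality is exactly $(\mathrm{ca}_\#)$ in $\mathbf{B}$.
\end{itemize}
Hence $\mathbf{A}\in\mathsf{BI}(\mathrm{ca}_\#)$.

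The argument is uniform in $\#\in\{\wedge,\vee,\to\}$. The main obstacle is the choice of supports just described. The default value must be $0$ rather than $1$: with $f_\circ\equiv 1$, elements outside $A_0$ would have $\circ_A$-value $1$, and their $\#$-combinations with elements of $A_0$ could land back in $A_0$ with a small $\circ_B$-value, breaking the inequality. The support $R_\circ$ must be a Boolean subalgebra built before $\circ_B$ is applied, which is why $\circ_B$ is applied only once, to the finite algebra $A_0$.
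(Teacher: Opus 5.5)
Your proposal is correct and is essentially the paper's proof. It uses the same two-stage support $R_\neg=S\subseteq R_\circ=A_0\subseteq A$ (with $A$ generated by $A_0\cup\circ_B[A_0]$), the default $f_\circ\equiv 0$, and the same two-case check resting on closure of $A_0$ under $\#$. The only difference is that you generate $A_0$ from $S\cup\neg_B[S]$ rather than the paper's $S\cup\neg_B[S]\cup\circ_B[S]$; this is harmless, since $\circ_B[S]\subseteq\circ_B[A_0]\subseteq A$ already.
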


\begin{proof}
Let $A_0$ be the Boolean subalgebra generated by 
$S\cup\neg_B[S]\cup\circ_B[S]$, and let $A$ be the Boolean subalgebra 
generated by $A_0\cup\circ_B[A_0]$. We apply 
Lemma~\ref{lem:general-transfer} with $R_\neg=S$, $R_\circ=A_0$ (note 
$S\subseteq A_0$), and $f_\circ(a)=0$ for all $a\notin A_0$.

It remains to check $(\circ_A a\wedge\circ_A b)\wedge\circ_A(a\#b)
=\circ_A a\wedge\circ_A b$ for all $a,b\in A$. If $a,b\in A_0$: since 
$A_0$ is a Boolean subalgebra, $a\#b\in A_0$, so $\circ_A a=\circ_B a$, 
$\circ_A b=\circ_B b$, $\circ_A(a\#b)=\circ_B(a\#b)$, and the equation 
holds because $\mathbf{B}\in\mathsf{BI}(\mathrm{ca}_\#)$. If 
$a\notin A_0$ or $b\notin A_0$: $\circ_A a\wedge\circ_A b=0$ (since 
the default is $0$), and the equation holds trivially.

By Lemma~\ref{lem:general-transfer}(b), $v(\varphi)=v_B(\varphi)\neq 1$.
\end{proof}

\subsection{Transfer for \texorpdfstring{$(i1)$ and $(i2)$}{i1 and i2}}
\label{sec:transfer-i}
% En estos casos nuevamente hay que modificar el soporte, si bien los axiomas/ecuaciones son "simetricos", la forma de probarlos es diferente. Por otro lado, ahora no solo hay que controlar \neg a o \circ a, sino una combinacion de ambos. La soucion es similar a lo que pasa para ca_#, lo que si hubo que variar para i2 es el valor de la funcion f_\circ
Both axioms involve $\neg(\circ\alpha)$, so the support must be 
correct not only on $S$ but on the whole subalgebra it generates.

\begin{lemma}\label{lem:i1-top}
If $\mathbf{B}\in\mathsf{BI}(i1)$, then $\neg 1=0$.
\end{lemma}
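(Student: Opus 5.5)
The plan is to mirror the argument of Lemma~\ref{lem:ci-facts}: first show that $\circ 0=1$ using the equation for $(i1)$ at $a=0$ together with the BALFI excluded-middle equation, and then substitute this value back.

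\emph{Step 1.} Instantiate the algebraic form of $(i1)$, namely $\neg(\circ a)\wedge a=\neg(\circ a)$, at $a=0$. This gives $\neg(\circ 0)=\neg(\circ 0)\wedge 0=0$.

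\emph{Step 2.} Apply the BALFI equation $x\vee\neg x=1$ with $x=\circ 0$. Since $\neg(\circ 0)=0$, we get $\circ 0\vee 0=1$, that is, $\circ 0=1$.

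\emph{Step 3.} Rewrite the equation from Step 1, $\neg(\circ 0)=0$, using $\circ 0=1$. This yields $\neg 1=0$, as required.

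There is no real obstacle here. The only point needing care is that Step 1 uses $(i1)$ only at the bottom element, where $x\wedge 0=0$ forces $\neg\circ 0$ to vanish. Unlike the case of $(\mathrm{ci})$, no contribution from $\neg 0$ is needed. The rest is the same two-line substitution as in Lemma~\ref{lem:ci-facts}.
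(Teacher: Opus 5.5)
Your proof is correct and follows the paper's argument. You instantiate $(i1)$ at $0$ to get $\neg(\circ 0)=0$, obtain $\circ 0=1$, and substitute to conclude $\neg 1=0$. The only difference is how you get $\circ 0=1$: you use the BALFI equation $x\vee\neg x=1$ at $x=\circ 0$, whereas the paper cites $\mathsf{BI}(i1)\subseteq\mathsf{BI}(d1)$ and applies $(d1)$ at $0$. Your step is self-contained, and it is exactly that inclusion's proof specialized to $a=0$.
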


\begin{proof}
By $(i1)$ at $a=0$: $\neg(\circ 0)\wedge 0=\neg(\circ 0)$. The 
left-hand side is $0$, so $\neg(\circ 0)=0$. Since 
$\mathsf{BI}(i1)\subseteq\mathsf{BI}(d1)$ 
(Section~\ref{sec:varieties}), $(d1)$ at $a=0$ gives $\circ 0\vee 0
=1$, hence $\circ 0=1$. Substituting, $\neg 1=\neg(\circ 0)=0$.
\end{proof}

\begin{theorem}\label{thm:transfer-i1}
If $\nvdash_{\mathrm{RmbC}(i1)}\varphi$, then there exist a finite 
BALFI $\mathbf{A}\in\mathsf{BI}(i1)$ and a valuation 
$v:\mathrm{For}(\Sigma)\to A$ such that $v(\varphi)\neq 1$.
\end{theorem}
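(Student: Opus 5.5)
The plan is to apply Lemma~\ref{lem:general-transfer} in its simplest configuration, as in case~(i) of Theorem~\ref{thm:transfer-basic}, and then check the equation of $(i1)$ on all of $A$. Assume $\nvdash_{\mathrm{RmbC}(i1)}\varphi$. By Theorem~\ref{thm:completeness-uniform} there are $\mathbf{B}\in\mathsf{BI}(i1)$ and a valuation $v_B$ with $v_B(\varphi)\neq 1$. Put $S=\{v_B(\psi):\psi\in\Phi\}$, and let $A$ be the (finite) Boolean subalgebra generated by $S\cup\neg_B[S]\cup\circ_B[S]$. Take $R_\neg=R_\circ=S$ and $f_\circ\equiv 1$. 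Lemma~\ref{lem:general-transfer} then gives a finite BALFI $\mathbf{A}$ and a valuation $v$ with $v(\varphi)=v_B(\varphi)\neq 1$. It remains to show $\mathbf{A}\in\mathsf{BI}(i1)$, i.e.\ $\neg_A(\circ_A a)\leq a$ for every $a\in A$.

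\emph{Case $a\notin S$.} Here $\circ_A a=1$, so we need $\neg_A 1=0$. If $1\in S$, then $\neg_A 1=\neg_B 1=0$ by Lemma~\ref{lem:i1-top}. If $1\notin S$, then $\neg_A 1={\sim}1=0$ by definition. Either way $\neg_A(\circ_A a)=0\leq a$.

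\emph{Case $a\in S$.} Here $\circ_A a=\circ_B a$, and we split according to whether $\circ_B a$ lies in the support.
\begin{itemize}
\item If $\circ_B a\in S$, then $\neg_A(\circ_B a)=\neg_B(\circ_B a)\leq a$, because $\mathbf{B}\models(i1)$.
\item If $\circ_B a\notin S$, then $\neg_A(\circ_B a)={\sim}(\circ_B a)$. Lemma~\ref{lem:neg-complement-bound} at $x=\circ_B a$ gives ${\sim}(\circ_B a)\leq\neg_B(\circ_B a)$, and $(i1)$ in $\mathbf{B}$ bounds the latter by $a$.
\end{itemize}
In both subcases the required inequality holds, so the case analysis covers all of $A$.

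The step I expect to need the most care is this second subcase. Naively one would try to enlarge the support so that $\circ_B a$ always falls inside $R_\neg$; that forces closure of $R_\neg$ under $\circ_B$ and threatens finiteness, since $\circ_B[A]$ need not lie in $A$. The point is to avoid this. Defaulting $\neg_A$ to the Boolean complement is harmless here, because ${\sim}$ lies below $\neg_B$ and $(i1)$ is an upper-bound condition on $\neg\circ a$.

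If the case distinctions turn out to be incompatible with the requirement $\circ_B[R_\circ]\subseteq A$, the fallback is the one used for $(\mathrm{ca}_\#)$. Let $A_0$ be generated by $S\cup\neg_B[S]\cup\circ_B[S]$ and let $A$ be generated by $A_0\cup\neg_B[A_0]\cup\circ_B[A_0]$. Then take $R_\neg=R_\circ=A_0$ and repeat the same argument verbatim.
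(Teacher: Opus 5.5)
Your proof is correct. It follows the paper's overall strategy: Lemma~\ref{lem:general-transfer} with default $f_\circ\equiv 1$, the same case split, and Lemma~\ref{lem:i1-top} for $a$ off the support. The difference is that you use a smaller support.

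The paper first forms the Boolean subalgebra $A_0$ generated by $S\cup\neg_B[S]\cup\circ_B[S]$. It then takes $A$ generated by $A_0\cup\neg_B[A_0]\cup\circ_B[A_0]$ and sets $R_\neg=R_\circ=A_0$, on the grounds that $\neg_A$ must be correct on the whole subalgebra generated by $S$. Yet its proof still has to treat the subcase $a\in A_0$, $\circ_B a\notin A_0$, where $\neg_A(\circ_B a)={\sim}(\circ_B a)$. There it invokes $(d1)$ via $\mathsf{BI}(i1)\subseteq\mathsf{BI}(d1)$ to get ${\sim}(\circ_B a)\leq a$. That is exactly the inequality you obtain directly from ${\sim}(\circ_B a)\leq\neg_B(\circ_B a)\leq a$, which is in fact how $(i1)$ entails $(d1)$. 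Since this off-support subcase must be handled by the same argument anyway, the second layer buys nothing for $(i1)$, and your one-layer version shows it can be dropped.

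What you gain is a much smaller countermodel. Your $A$ is the same Boolean algebra as in Theorem~\ref{thm:fmp-rmbc}: it is generated by at most $3|\Phi|$ elements, so $|A|\leq 2^{2^{3|\Phi|}}$. The obvious bound for the paper's two-level construction is quadruply exponential. This matters if one wants to carry the analysis of Section~\ref{sec:complexity} over to $\mathrm{RmbC}(i1)$.

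Your closing fallback paragraph is unnecessary. $\neg_B[S]\cup\circ_B[S]\subseteq A$ holds by construction, so the hypotheses of Lemma~\ref{lem:general-transfer} are already met with $R_\neg=R_\circ=S$.
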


\begin{proof}
Let $A_0$ be the Boolean subalgebra generated by 
$S\cup\neg_B[S]\cup\circ_B[S]$, and let $A$ be the Boolean subalgebra 
generated by $A_0\cup\neg_B[A_0]\cup\circ_B[A_0]$. We apply 
Lemma~\ref{lem:general-transfer} with $R_\neg=R_\circ=A_0$ and 
$f_\circ(a)=1$ for all $a\notin A_0$.

We check $\neg_A(\circ_A a)\wedge a=\neg_A(\circ_A a)$ for all $a\in A$.

\emph{Case $a\in A_0$:} $\circ_A a=\circ_B a$. If $\circ_B a\in A_0$: 
$\neg_A(\circ_B a)=\neg_B(\circ_B a)$, and the equation 
$\neg_B(\circ_B a)\wedge a=\neg_B(\circ_B a)$ holds in $\mathbf{B}$ 
(since $\mathbf{B}\in\mathsf{BI}(i1)$). If $\circ_B a\notin A_0$: 
$\neg_A(\circ_B a)={\sim}(\circ_B a)$; we need 
$\circ_B a\vee a=1$, which holds by $(d1)$, since 
$\mathsf{BI}(i1)\subseteq\mathsf{BI}(d1)$ and $a\in A_0\subseteq B$.

\emph{Case $a\notin A_0$:} $\circ_A a=1$. If $1\in A_0$: 
$\neg_A(1)=\neg_B(1)=0$ by Lemma~\ref{lem:i1-top}. If $1\notin A_0$: 
$\neg_A(1)={\sim}1=0$. Either way, $\neg_A(1)\wedge a=0=\neg_A(1)$.

By Lemma~\ref{lem:general-transfer}(b), $v(\varphi)=v_B(\varphi)\neq 1$.
\end{proof}

\begin{theorem}\label{thm:transfer-i2}
If $\nvdash_{\mathrm{RmbC}(i2)}\varphi$, then there exist a finite 
BALFI $\mathbf{A}\in\mathsf{BI}(i2)$ and a valuation 
$v:\mathrm{For}(\Sigma)\to A$ such that $v(\varphi)\neq 1$.
\end{theorem}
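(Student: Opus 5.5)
We apply Lemma~\ref{lem:general-transfer}, using the same two-level support as in the proof of Theorem~\ref{thm:transfer-i1}. Suppose $\nvdash_{\mathrm{RmbC}(i2)}\varphi$. Theorem~\ref{thm:completeness-uniform} gives $\mathbf{B}\in\mathsf{BI}(i2)$ and a valuation $v_B$ with $v_B(\varphi)\neq 1$. Let $S=\{v_B(\psi):\psi\in\Phi\}$. Let $A_0$ be the Boolean subalgebra generated by $S\cup\neg_B[S]\cup\circ_B[S]$, and let $A$ be the Boolean subalgebra generated by $A_0\cup\neg_B[A_0]\cup\circ_B[A_0]$. Both are finite because Boolean algebras are locally finite. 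We take $R_\neg=R_\circ=A_0$, so that $S\subseteq A_0$ and the closure hypotheses $\neg_B[A_0],\circ_B[A_0]\subseteq A$ hold by construction. The key choice is the default $f_\circ(a)=a$ for $a\notin A_0$ rather than $f_\circ\equiv 1$. With $f_\circ\equiv 1$ we would have to compare $\neg_A 1$ with ${\sim}a$, and in $\mathsf{BI}(i2)$ nothing forces $\neg 1=0$ (unlike Lemma~\ref{lem:i1-top} for $(i1)$). Lemma~\ref{lem:general-transfer}(a) then makes $\mathbf{A}$ a finite BALFI, and part~(b) gives $v(\varphi)=v_B(\varphi)\neq 1$. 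It remains only to verify $(i2)$ in $\mathbf{A}$, that is, $\neg_A(\circ_A a)\le\neg_A a$ for all $a\in A$.

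\emph{Case $a\notin A_0$.} Here $\circ_A a=f_\circ(a)=a$, so $\neg_A(\circ_A a)=\neg_A a$, and the inequality holds with equality. This is precisely why the identity default is used: it makes this case immediate, with no need to know anything about $\neg_A$ at special elements.

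\emph{Case $a\in A_0$.} Then $\circ_A a=\circ_B a$ and $\neg_A a=\neg_B a$. We split according to whether $\circ_B a\in A_0$.
\begin{itemize}
\item If $\circ_B a\in A_0$, then $\neg_A(\circ_B a)=\neg_B(\circ_B a)\le\neg_B a$, because $\mathbf{B}\in\mathsf{BI}(i2)$.
\item If $\circ_B a\notin A_0$, then $\neg_A(\circ_B a)={\sim}\circ_B a$, and we need ${\sim}\circ_B a\le\neg_B a$. This is equivalent to $\circ_B a\vee\neg_B a=1$, i.e.\ the equation for $(d2)$. It holds in $\mathbf{B}$ because $\mathsf{BI}(i2)\subseteq\mathsf{BI}(d2)$ (Section~\ref{sec:varieties}).
\end{itemize}

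The step that needs the most care is this last subcase. It is where the inclusion $\mathsf{BI}(i2)\subseteq\mathsf{BI}(d2)$ must be invoked, playing the role that $\mathsf{BI}(i1)\subseteq\mathsf{BI}(d1)$ plays for $(i1)$. With both cases settled, $\mathbf{A}\in\mathsf{BI}(i2)$, which completes the argument.
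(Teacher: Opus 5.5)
Your proof is correct and follows the paper's argument essentially step for step. It uses the same two-level support $A_0\subseteq A$ with $R_\neg=R_\circ=A_0$ and the identity default $f_\circ(a)=a$, which makes the off-support case trivial. It also makes the same appeal to $\mathsf{BI}(i2)\subseteq\mathsf{BI}(d2)$ in the subcase $a\in A_0$, $\circ_B a\notin A_0$, and your remark on why the constant default $1$ fails matches Remark~\ref{rem:i1-i2-asymmetry}.
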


\begin{proof}
Construct $A_0,A$ exactly as in Theorem~\ref{thm:transfer-i1}. We 
apply Lemma~\ref{lem:general-transfer} with $R_\neg=R_\circ=A_0$ and, 
this time, the identity default $f_\circ(a)=a$ for all $a\notin A_0$.

We check $\neg_A(\circ_A a)\wedge\neg_A a=\neg_A(\circ_A a)$ for all 
$a\in A$.

\emph{Case $a\in A_0$:} $\circ_A a=\circ_B a$, $\neg_A a=\neg_B a$. If 
$\circ_B a\in A_0$: the equation $\neg_B(\circ_B a)\wedge\neg_B a
=\neg_B(\circ_B a)$ holds in $\mathbf{B}$. If $\circ_B a\notin A_0$: 
$\neg_A(\circ_B a)={\sim}(\circ_B a)$; we need 
$\circ_B a\vee\neg_B a=1$, which holds by $(d2)$, since 
$\mathsf{BI}(i2)\subseteq\mathsf{BI}(d2)$.

\emph{Case $a\notin A_0$:} $\circ_A a=a$, so 
$\neg_A(\circ_A a)\wedge\neg_A a=\neg_A(a)\wedge\neg_A(a)=\neg_A(a)$ 
by idempotence of $\wedge$, which equals $\neg_A(\circ_A a)$ since 
$\circ_A a=a$.

By Lemma~\ref{lem:general-transfer}(b), $v(\varphi)=v_B(\varphi)\neq 1$.
\end{proof}

\begin{remark}\label{rem:i1-i2-asymmetry}
Although $(i1)$ and $(i2)$ look syntactically parallel, their proofs 
are not: $(i1)$ relies on the genuine fact $\neg_B 1=0$ in 
$\mathsf{BI}(i1)$, which fails in $\mathsf{BI}(i2)$ (refuted by 
$\mathbf{B}_5$ (Example~\ref{ex:B5}), whose $\neg_5$ is constantly 
$1$) forcing $(i2)$ to rely instead on the identity default, which 
trivializes the axiom by idempotence rather than by any structural 
fact about the top element.
\end{remark}
% En este ultimo caso fue necesaio una combinacion de ideas, para el soporte usamos el mismo soporte que para (cf) y la funcion como en (i2).
\subsection{Transfer for \texorpdfstring{$(\mathrm{a}_\neg)$}{a-neg}}
\label{sec:transfer-aneg}

\begin{theorem}\label{thm:transfer-aneg}
If $\nvdash_{\mathrm{RmbC}(\mathrm{a}_\neg)}\varphi$, then there 
exist a finite BALFI $\mathbf{A}\in\mathsf{BI}(\mathrm{a}_\neg)$ and 
a valuation $v:\mathrm{For}(\Sigma)\to A$ such that $v(\varphi)\neq 1$.
\end{theorem}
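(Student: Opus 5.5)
The plan is to reuse the enlarged support from the $(\mathrm{cf})$ case of Theorem~\ref{thm:transfer-basic}, together with the default $f_\circ\equiv 1$. Assume $\nvdash_{\mathrm{RmbC}(\mathrm{a}_\neg)}\varphi$. By Theorem~\ref{thm:completeness-uniform}, fix $\mathbf{B}\in\mathsf{BI}(\mathrm{a}_\neg)$ and a valuation $v_B$ with $v_B(\varphi)\neq 1$.

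Let $S=\{v_B(\psi):\psi\in\Phi\}$ and $T=S\cup\{{\sim}x:x\in S\}$. The set $T$ is finite and closed under Boolean complement. Let $A$ be the Boolean subalgebra generated by $T\cup\neg_B[T]\cup\circ_B[T]$, which is finite by local finiteness of Boolean algebras. We apply Lemma~\ref{lem:general-transfer} with $R_\neg=R_\circ=T$ and $f_\circ(a)=1$ for $a\notin T$. The closure hypotheses $\neg_B[T]\subseteq A$ and $\circ_B[T]\subseteq A$ hold by construction. Hence $\mathbf{A}$ is a finite BALFI and $v(\varphi)=v_B(\varphi)\neq 1$. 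It remains to verify $\circ_A a\leq\circ_A(\neg_A a)$ for every $a\in A$.

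The verification splits into three cases.
\begin{enumerate}
\item[(1)] $a\in T$ and $\neg_B a\in T$. Then $\circ_A a=\circ_B a$ and $\circ_A(\neg_A a)=\circ_B(\neg_B a)$. The inequality is exactly $(\mathrm{a}_\neg)$ in $\mathbf{B}$.
\item[(2)] $a\in T$ but $\neg_B a\notin T$. Then $\circ_A(\neg_A a)=\circ_A(\neg_B a)=f_\circ(\neg_B a)=1$, so the inequality is trivial.
\item[(3)] $a\notin T$. Then $\neg_A a={\sim}a$. Since $T$ is closed under ${\sim}$, also ${\sim}a\notin T$, so $\circ_A({\sim}a)=1$, and again the inequality holds trivially.
\end{enumerate}

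The main obstacle is the interaction between the two defaults. For $a\notin R_\neg$ the operator $\neg_A a$ is the Boolean complement ${\sim}a$. If ${\sim}a$ could land inside the support, we would have to compare the arbitrary default $\circ_A a$ with the genuine value $\circ_B({\sim}a)$, over which we have no control.

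Closing the support under ${\sim}$ removes this problem, exactly as in the $(\mathrm{cf})$ case. The constant default $1$ then makes every comparison ending outside the support automatic. This is why I would not use the identity default of the $(i2)$ case: it would break case (2).
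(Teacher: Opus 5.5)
Your proof is correct and follows the paper's argument: the same application of Lemma~\ref{lem:general-transfer} with $R_\neg=R_\circ=T$, a complement-closed $T$, the constant default $f_\circ\equiv 1$, and the same three-way case split. The one difference is the support. The paper takes the larger $T=S\cup\neg_B[S]\cup\{{\sim}x:x\in S\}\cup\{{\sim}x:x\in\neg_B[S]\}$, whereas your $T=S\cup\{{\sim}x:x\in S\}$ already suffices, because the subcase $\neg_B a\notin T$ is absorbed by the default $1$ in either version.
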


\begin{proof}
Let $T:=S\cup\neg_B[S]\cup\{{\sim}x:x\in S\}\cup\{{\sim}x:x\in\neg_B[S]\}$, 
which is finite and closed under Boolean complementation. Let $A$ be 
the Boolean subalgebra generated by $T\cup\neg_B[T]\cup\circ_B[T]$. We 
apply Lemma~\ref{lem:general-transfer} with $R_\neg=R_\circ=T$ and 
$f_\circ(a)=1$ for all $a\notin T$.

We check $\circ_A a\wedge\circ_A(\neg_A a)=\circ_A a$ for all $a\in A$.

\emph{Case $a\in T$:} $\circ_A a=\circ_B a$, $\neg_A a=\neg_B a$. If 
$\neg_B a\in T$: $\circ_A(\neg_B a)=\circ_B(\neg_B a)$, and the 
equation $\circ_B a\wedge\circ_B(\neg_B a)=\circ_B a$ holds in 
$\mathbf{B}$. If $\neg_B a\notin T$: $\circ_A(\neg_B a)=1$ by 
default, so $\circ_A a\wedge 1=\circ_A a$ trivially.

\emph{Case $a\notin T$:} $\neg_A a={\sim}a$. Since $T$ is closed under 
Boolean complementation,
\[
a\notin T \implies {\sim}a\notin T.
\]
Hence $\circ_A a=1$ and $\circ_A({\sim}a)=1$, so $1\wedge 1=1$.

By Lemma~\ref{lem:general-transfer}(b), $v(\varphi)=v_B(\varphi)\neq 1$.
\end{proof}

\subsection{The case of \texorpdfstring{$(\mathrm{ce})$}{ce}}
\label{sec:transfer-ce}

The axiom $(\mathrm{ce})$ is the only one of the fourteen principal 
axioms for which the transfer of the finite model property is not 
established in this paper. Two distinct observations are relevant 
here, addressing different scopes and of different logical strength.

\subsubsection*{The combination \texorpdfstring{$(\mathrm{ce},
\mathrm{cf})$}{\{ce,cf\}}: a definitive impossibility}

When $(\mathrm{ce})$ is combined with $(\mathrm{cf})$, 
Corollary~\ref{cor:ce-cf} already settles the matter completely: no 
finite BALFI can satisfy both axioms while remaining paraconsistent. 
This is not a limitation of the filtration technique but a genuine 
mathematical impossibility, independent of any particular method.

\subsubsection*{The axiom \texorpdfstring{$(\mathrm{ce})$}{ce} alone: 
a methodological obstruction}

The situation for $(\mathrm{ce})$ \emph{by itself}, without 
$(\mathrm{cf})$, is different in kind. The axiom $(\mathrm{ce})$ alone 
only yields the inequality $a\leq\neg\neg a$, not the equality 
$\neg\neg a=a$; it does not force $\neg$ to be involutive, so 
Theorem~\ref{thm:involutive} does not apply. The status of 
$\mathrm{RmbC}(\mathrm{ce})$ is therefore genuinely open: what we show 
below is that the filtration technique of this paper does not extend 
to it, not that no finite paraconsistent model can exist.

The general construction of Lemma~\ref{lem:general-transfer} rests on 
choosing a finite support $R_\neg$ containing $S$ such that 
$\neg_B[R_\neg]\subseteq A$. For $(\mathrm{ce})$, this is not enough: 
to verify $a\leq\neg_A\neg_A a$ at a point $a\in R_\neg$, one needs 
$\neg_A$ to be correct not only at $a$ but at $\neg_B a$ as well---and, 
inductively, at $\neg_B(\neg_B a)$, and so on. Concretely, if 
$a\in R_\neg$ but $\neg_B a\notin R_\neg$, the construction forces 
$\neg_A(\neg_B a)={\sim}(\neg_B a)$, and the required inequality 
$a\leq{\sim}(\neg_B a)$ does not follow from the axioms of 
$\mathrm{RmbC}(\mathrm{ce})$ or from any universal property of BALFIs.

The natural remedy would be to enlarge $R_\neg$ to a finite set closed 
under $\neg_B$. But the variety $\mathsf{BI}$ is not locally finite 
(Proposition~\ref{prop:non-locally-finite}), and a BALFI satisfying 
$(\mathrm{ce})$ can contain elements whose orbit under $\neg_B$ is 
infinite. In such cases, the closure of a finite set under $\neg_B$ is 
necessarily infinite, and therefore cannot serve as the basis for a 
finite Boolean subalgebra: any finite set we choose will omit some 
value of the form $\neg_B^k(a)$, and at that boundary the definition 
of $\neg_A$ as the Boolean complement breaks $(\mathrm{ce})$. We have 
also explored modifying the off-support definition of $\neg_A$ (for 
instance, setting $\neg_A a=1$ or $\neg_A a=0$ off the support), but 
in every case this either loses the required inequality or violates 
one of the basic BALFI identities.

Consequently, the algebraic filtration developed in this paper does 
not establish the finite model property for $\mathrm{RmbC}(\mathrm{ce})$. 
The obstacle is not a technicality that a more careful choice of 
parameters could resolve, but a genuine limitation of the technique: 
the axiom $(\mathrm{ce})$ imposes a non-trivial relation between each 
element and its double negation, and preserving that relation in a 
finite algebra would require closing the support under an operation 
that can generate infinitely many values. Whether 
$\mathrm{RmbC}(\mathrm{ce})$ is decidable therefore remains open.
%As a future work, we plan to study this question using a  proof-theoretic approach.

\subsection{Decidability of the principal consistency extensions}

\begin{corollary}\label{cor:decidability-extensions}
For each of the thirteen axioms
\[
\alpha\in\{(\mathrm{ciw}),(\mathrm{ci}),
(\mathrm{cl}),(\mathrm{cf}),(d1),(d2),(i1),(i2),(\mathrm{a}_\circ),
(\mathrm{a}_\neg),(\mathrm{ca}_\wedge),(\mathrm{ca}_\vee),
(\mathrm{ca}_\to)\}
\]
of Section~\ref{sec:extensions} other than $(\mathrm{ce})$, the logic 
$\mathrm{RmbC}(\alpha)$ is decidable.
\end{corollary}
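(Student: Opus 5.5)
The plan is to mirror the proof of Corollary~\ref{cor:decidability-rmbc}: combine the finite model property for each $\mathrm{RmbC}(\alpha)$ with Harrop's criterion (Theorem~\ref{thm:harrop}). Fix $\alpha$ in the list. Suppose $\nvdash_{\mathrm{RmbC}(\alpha)}\varphi$. Theorem~\ref{thm:completeness-uniform} then gives some $\mathbf{B}\in\mathsf{BI}(\alpha)$ and a valuation $v_B$ with $v_B(\varphi)\neq 1$. The transfer theorems turn this into a finite countermodel $\mathbf{A}\in\mathsf{BI}(\alpha)$:
\begin{itemize}
\item Theorem~\ref{thm:transfer-basic} for $(\mathrm{ciw}),(\mathrm{ci}),(\mathrm{cl}),(\mathrm{cf}),(d1),(d2),(\mathrm{a}_\circ)$;
\item Theorem~\ref{thm:transfer-ca} for the three $(\mathrm{ca}_\#)$;
\item Theorems~\ref{thm:transfer-i1} and~\ref{thm:transfer-i2} for $(i1)$ and $(i2)$;
\item Theorem~\ref{thm:transfer-aneg} for $(\mathrm{a}_\neg)$.
\end{itemize}
This establishes the FMP of $\mathrm{RmbC}(\alpha)$ with respect to the finite members of $\mathsf{BI}(\alpha)$.

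Next I check Harrop's remaining hypotheses. $\mathrm{RmbC}(\alpha)$ is finitely axiomatizable: it has finitely many axiom schemas plus MP, $(R_\neg)$ and $(R_\circ)$, so its theorems are recursively enumerable. The class of finite members of $\mathsf{BI}(\alpha)$ is recursively enumerable. To see this, enumerate all finite algebras of signature $\Sigma_{\mathrm{BA}}\cup\{\neg,\circ\}$ given by operation tables. For each one, decide mechanically whether the Boolean reduct is a Boolean algebra, whether the two BALFI equations hold, and whether the equation for $\alpha$ from Section~\ref{sec:extensions} holds. Each check is a finite computation, since all equations are universally quantified over a finite carrier. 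Validity of a given $\varphi$ in a finite algebra is decidable because only finitely many assignments to $\mathrm{Var}(\varphi)$ need to be tried.

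The decision procedure then runs two searches in parallel: one enumerates proofs, the other enumerates finite algebras in $\mathsf{BI}(\alpha)$ paired with assignments. Exactly one search terminates, by soundness together with the FMP. The proof sketches also give an explicit size bound. The generated Boolean subalgebra is generated by at most $c\cdot|\mathrm{Sub}(\varphi)|$ elements, or a square of that in the two-stage cases for $(\mathrm{ca}_\#)$, $(i1)$ and $(i2)$. So one could even bound the search, but that is not needed.

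The only step needing care is confirming that each transfer theorem really yields an algebra in $\mathsf{BI}(\alpha)$, not merely a BALFI. In every case the axiom has to be verified both on the support, where it holds because $\mathbf{B}$ satisfies it, and off the support. The cited theorems already carry out that verification, so I expect no genuine obstacle. The excluded case $(\mathrm{ce})$ is deliberately omitted, as discussed in Section~\ref{sec:transfer-ce}.
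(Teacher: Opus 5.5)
Your argument is correct and is the paper's proof: the transfer theorems give each $\mathrm{RmbC}(\alpha)$ the finite model property with respect to finite members of $\mathsf{BI}(\alpha)$, and Harrop's theorem then yields decidability, with you spelling out its recursive-enumerability and decidable-validity hypotheses more explicitly than the paper does. One aside is inaccurate but harmless, since you do not use it: in the two-stage constructions for $(\mathrm{ca}_\#)$, $(i1)$ and $(i2)$, every element of $A_0$ (up to $2^{2^{O(n)}}$ of them) becomes a generator of $A$, together with its $\circ_B$-image and, for $(i1)$ and $(i2)$, its $\neg_B$-image, so the generating set is doubly exponential in $|\mathrm{Sub}(\varphi)|$, not quadratic.
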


\begin{proof}
By Theorems~\ref{thm:transfer-basic}, \ref{thm:transfer-ca}, \ref{thm:transfer-i1}, 
\ref{thm:transfer-i2}, and \ref{thm:transfer-aneg}, each 
$\mathrm{RmbC}(\alpha)$ has the finite model property with respect to 
$\mathsf{BI}(\alpha)$. Since each is finitely axiomatizable, 
decidability follows by Theorem~\ref{thm:harrop}.
\end{proof}

\begin{remark}
The excluded axiom, $(\mathrm{ce})$, is discussed in 
Section~\ref{sec:transfer-ce}.
\end{remark}

\section{Computational Complexity of RmbC}
\label{sec:complexity}

The decidability results of the preceding sections do not by 
themselves yield a satisfactory complexity bound. In this section we 
establish a \textsf{coNP}-hardness lower bound and an upper bound for the 
validity problem of $\mathrm{RmbC}$.

\subsection{Complexity-theoretic preliminaries}
\label{sec:complexity-prelim}

For a logic $L$ over $\mathrm{For}(\Sigma)$, we write
\[
\mathrm{Val}(L):=\{\varphi\in\mathrm{For}(\Sigma) \ : \ \vdash_L\varphi\},
\qquad
\mathrm{NonVal}(L):=\mathrm{For}(\Sigma)\setminus\mathrm{Val}(L),
\]
for the sets of theorems and non-theorems of $L$, respectively; 
$\mathrm{Val}(L)$ is the \emph{validity problem} for $L$. For a 
formula $\varphi$, $|\varphi|$ denotes its size, measured by the 
number of connective occurrences in $\varphi$; we write $n$ for 
$|\varphi|$ throughout this section. We assume familiarity with the standard complexity classes
$\textsf{NP}$, $\textsf{coNP}$, $\textsf{EXPTIME}$,
$\textsf{NEXPTIME}$, $\textsf{coNEXPTIME}$, and
$k\text{-}\textsf{EXPTIME}$ (see, e.g., \cite{Sipser2013}).

\subsection{Lower bound: \textsf{coNP}-hardness}
\label{sec:complexity-lower}
% Esta parte es sencilla, basicamente mostramos que RmbC (al basarse en la logica clasica) tiene al menos la complejidad de la logica clasica. No podemos asegurar que el problema de validez de RmbC sea coNP, es decir que pertenece a esa clase (como si ocurre con la logica clasica), pero si podemos asegurar, al ser RmbC conservatica con respect a la logica clasica, que es al menos tan dificil como ella, por eso coNP-hard.
\begin{proposition}[Conservativity over $\Sigma^+$]\label{prop:conservative}
For every formula $\varphi\in\mathrm{For}(\Sigma^+)$,
\[
\vdash_{\mathrm{CPL}}\varphi
\quad\text{if and only if}\quad
\vdash_{\mathrm{RmbC}}\varphi.
\]
\end{proposition}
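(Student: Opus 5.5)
The plan is to prove both directions semantically, through the BALFI completeness result (Theorem~\ref{thm:completeness-RmbC}) for $\mathrm{RmbC}$ and the two-valued semantics for classical logic. Here $\vdash_{\mathrm{CPL}}\varphi$, for $\varphi\in\mathrm{For}(\Sigma^+)$, is read as ``$\varphi$ is a classical tautology'': it takes value $1$ under every assignment of $\{0,1\}$ to its variables, with $\wedge,\vee,\to$ read as the Boolean operations. Fixing this reading is the first step. Alternatively, $\mathrm{CPL}^+$ of Definition~\ref{def:mbC} is known to be complete for exactly these positive tautologies, and $\mathrm{CPL}^+\subseteq\mathrm{RmbC}$ syntactically, which would give one direction directly. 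I will nevertheless argue semantically so that both directions are uniform.

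\emph{From $\mathrm{CPL}$ to $\mathrm{RmbC}$.} Let $\varphi\in\mathrm{For}(\Sigma^+)$ be a classical tautology, let $\mathbf{B}$ be any BALFI and $v$ any valuation on it. Since $\varphi$ contains no occurrence of $\neg$ or $\circ$, $v(\varphi)$ is computed entirely in the Boolean reduct of $\mathbf{B}$, as the term function of $\varphi$ evaluated at the values $v(p)$, $p\in\mathrm{Var}(\varphi)$.

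Suppose $v(\varphi)\neq 1$. By the Stone representation (equivalently, the prime ideal theorem), there is a Boolean homomorphism $h:\mathbf{B}\to\mathbf{2}$ with $h(v(\varphi))=0$. Then $h\circ v$ restricted to $\mathrm{Var}(\varphi)$ is a classical assignment. Because $h$ commutes with $\wedge,\vee,\to$, this assignment gives $\varphi$ the value $h(v(\varphi))=0$, contradicting tautologyhood. Hence $v(\varphi)=1$ for every $\mathbf{B}$ and $v$, so $\vDash_{\mathsf{BI}}\varphi$, and Theorem~\ref{thm:completeness-RmbC} (with $\Gamma=\emptyset$) yields $\vdash_{\mathrm{RmbC}}\varphi$.

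\emph{From $\mathrm{RmbC}$ to $\mathrm{CPL}$.} Assume $\vdash_{\mathrm{RmbC}}\varphi$. Consider the two-element Boolean algebra $\{0,1\}$ expanded with $\neg x={\sim}x$ and $\circ x=1$. It satisfies $x\vee\neg x=1$ and $x\wedge\neg x\wedge\circ x=0$, so it is a BALFI; in fact it lies in the class $K$ of Proposition~\ref{prop:explosive-unification}. Any classical assignment $e$ of the variables extends uniquely to a valuation $v_e$ on this BALFI. By soundness (Theorem~\ref{thm:completeness-RmbC}), $v_e(\varphi)=1$. On $\Sigma^+$-formulas, $v_e$ coincides with the classical truth-value computation under $e$, so $\varphi$ is true under every $e$, i.e.\ $\vdash_{\mathrm{CPL}}\varphi$.

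The only step with real content is the first direction, namely that every Boolean algebra validates all classical positive tautologies. I handle it via homomorphisms into $\mathbf{2}$, as above. A fallback is the fact that the variety of Boolean algebras is generated by $\mathbf{2}$, so equations $t\approx 1$ true in $\mathbf{2}$ hold in every Boolean algebra. Beyond that, the care needed is only in making precise which calculus ``$\mathrm{CPL}$ over $\Sigma^+$'' denotes.
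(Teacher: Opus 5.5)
Your proposal is correct and follows essentially the paper's route. The paper also gets the $\mathrm{CPL}$-to-$\mathrm{RmbC}$ direction from the fact that classical tautologies hold in the Boolean reduct of every BALFI together with completeness (Theorem~\ref{thm:completeness-RmbC}), and the converse from the two-element BALFI with $\neg={\sim}$ and $\circ\equiv 1$ (argued there by contrapositive); your homomorphism $h:\mathbf{B}\to\mathbf{2}$ via the prime ideal theorem just makes explicit the step the paper leaves implicit, namely that every Boolean algebra validates all $\Sigma^+$-tautologies.
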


\begin{proof}
The first direction follows immediately from semantics: every BALFI 
interprets the Boolean connectives classically on its Boolean reduct, 
so every classical tautology over $\Sigma^+$ is valid in every BALFI. 
By completeness of $\mathrm{RmbC}$, it follows that 
$\vdash_{\mathrm{RmbC}}\varphi$.

For the second direction, suppose $\nvdash_{\mathrm{CPL}}\varphi$. Let 
$v_0:\mathrm{Var}(\varphi)\to\{0,1\}$ be a classical valuation with 
$v_0(\varphi)=0$, and let 
$\mathbf{A}=\langle\{0,1\},\wedge,\vee,\to,\neg_A,\circ_A,0,1\rangle$ 
where $\wedge,\vee,\to$ are the standard two-valued operations and
\[
\neg_A 0=1,\quad \neg_A 1=0,\quad \circ_A 0=1,\quad \circ_A 1=1.
\]
It is straightforward that $\mathbf{A}\in\mathsf{BI}$: for $a=0$, 
$0\vee 1=1$ and $0\wedge 1\wedge 1=0$; for $a=1$, $1\vee 0=1$ and 
$1\wedge 0\wedge 1=0$. Define $v:\mathrm{For}(\Sigma)\to\{0,1\}$ by 
$v(p)=v_0(p)$ for $p\in\mathrm{Var}(\varphi)$ and $v(p)=0$ otherwise, 
extended homomorphically. Since $\varphi\in\mathrm{For}(\Sigma^+)$, 
$v(\varphi)=v_0(\varphi)=0$. By completeness of $\mathrm{RmbC}$, 
$\nvdash_{\mathrm{RmbC}}\varphi$.
\end{proof}

\begin{corollary}[\textsf{coNP}-hardness]\label{cor:conp-hard}
$\mathrm{Val}(\mathrm{RmbC})$ is \textsf{coNP}-hard.
\end{corollary}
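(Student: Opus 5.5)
The plan is to reduce the complement of classical satisfiability, or more directly classical tautology checking $\mathrm{Val}(\mathrm{CPL})$, which is \textsf{coNP}-complete, to $\mathrm{Val}(\mathrm{RmbC})$ via a polynomial-time many-one reduction. The natural reduction is almost the identity. The one subtlety is that classical propositional formulas are usually written with negation, whereas $\Sigma^+$ has no negation and no constant. To handle this, I would fix a fresh propositional variable $q$ and translate a classical formula $\varphi$ over $\{\wedge,\vee,\to,{\sim}\}$ into $\varphi^*\in\mathrm{For}(\Sigma^+)$. The translation replaces each subformula ${\sim}\psi$ by $\psi^*\to q$ and leaves the other connectives unchanged. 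The whole formula is then wrapped as $\varphi^\dagger := \varphi^*\vee q$, or alternatively as $(\bigwedge_{p}(q\to p))\to \varphi^*$, so that $q$ plays the role of falsum.

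The key step is to check that $\varphi$ is a classical tautology iff $\varphi^\dagger$ is a classical tautology over $\Sigma^+$. For valuations with $q=0$, $\psi^*\to q$ evaluates exactly as ${\sim}\psi$, so $\varphi^*$ agrees with $\varphi$. For valuations with $q=1$, the disjunct $q$ makes $\varphi^\dagger$ true. Hence the two formulas are tautologies together. Then Proposition~\ref{prop:conservative} gives $\vdash_{\mathrm{CPL}}\varphi^\dagger$ iff $\vdash_{\mathrm{RmbC}}\varphi^\dagger$.

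It remains to note that the translation is computable in linear time, since each connective occurrence is replaced by a constant amount of syntax. This gives $\mathrm{Val}(\mathrm{CPL})\le_p \mathrm{Val}(\mathrm{RmbC})$, and \textsf{coNP}-hardness follows from the \textsf{coNP}-completeness of classical tautologies (Cook--Levin, dualised).

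The main obstacle is just this negation/falsum issue. If one instead takes as the source problem tautologies in $\neg$-free positive form, the reduction is not justified: \textsf{coNP}-hardness of the implicational fragment is known, but it is not immediate. I would therefore make the $q$-as-falsum encoding explicit. Alternatively, one could map ${\sim}$ directly to $\neg$ and use the two-element BALFI from the proof of Proposition~\ref{prop:conservative}, but that yields only one direction of the reduction, so I would avoid it.
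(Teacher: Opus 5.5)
Your proof is correct. Its key ingredient is the same as the paper's: conservativity of $\mathrm{RmbC}$ over classical positive formulas (Proposition~\ref{prop:conservative}). The difference is in the reduction itself.

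The paper takes the identity map on $\mathrm{For}(\Sigma^+)$ and invokes the \textsf{coNP}-completeness of $\mathrm{Val}(\mathrm{CPL})$. Read literally, that is a reduction only from \emph{negation-free} tautologies. It therefore tacitly uses the fact that the positive fragment of classical logic is already \textsf{coNP}-hard, which the paper never justifies.

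You instead start from the standard \textsf{coNP}-complete problem (tautologies with ${\sim}$) and compile it into $\Sigma^+$ using a fresh variable $q$ as falsum, $\varphi\mapsto\varphi^*\vee q$. Because $q$ is fresh, every valuation of $\mathrm{Var}(\varphi)$ extends by $q=0$, so your case split on $q$ shows tautologyhood is preserved in both directions. The map is also linear-time. This translation is exactly the argument that fills the paper's implicit step. Your proof is thus self-contained at the price of a short translation lemma, while the paper's is a one-liner resting on an unstated (true) fact about the positive fragment.

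Your side remarks are also right. The variant $(\bigwedge_p(q\to p))\to\varphi^*$ works, because a positive formula evaluates to $1$ when all its variables are $1$. Sending ${\sim}$ directly to $\neg$ does fail in one direction: for example, ${\sim}{\sim}p\to p$ becomes an instance of $(\mathrm{cf})$, which fails in $\mathbf{B}_5$ of Example~\ref{ex:B5}.
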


\begin{proof}
The identity map on $\mathrm{For}(\Sigma^+)$ is a polynomial-time 
many-one reduction from $\mathrm{Val}(\mathrm{CPL})$ to 
$\mathrm{Val}(\mathrm{RmbC})$, by Proposition~\ref{prop:conservative}. 
Since $\mathrm{Val}(\mathrm{CPL})$ is \textsf{coNP}-complete, 
$\mathrm{Val}(\mathrm{RmbC})$ is \textsf{coNP}-hard.
\end{proof}

\subsection{An upper bound}
\label{sec:complexity-upper}
% Esa estimación informal que mencionamos fue la primera aproximacion a una cota de complejidad superior, es decir, el primer intento, tomando todos los elementos, en el siguiente parrafo la cuenta esta resumida, se puede revisar en detalle en el backup del trabajo. Posteriormente hicimos algo mejor, que es lo que está plasmado en este trabajo, que es en lugar de contar todos los elementos, solo contar los representativos, que en una boolean algebra son los atomos. Acá la clave para bajar la complejidad fue justamente dejar de observar toda la logica y solo ver las subformulas y en el algebra solo los atomos.
% Gracias al Sipser optamos por cambiar de estrategia, pasar del camino determinista al no determinista;es decir, no vamos a construir el testigo porque eso nos llevo a 3EXPTIME. 
As an informal estimate (not a result we develop in full detail, but 
useful as motivation for the refinement below) an enumeration 
of all algebraic structures on a universe of bounded size would place 
$\mathrm{Val}(\mathrm{RmbC})$ in 3-\textsf{EXPTIME}. As shown in the proof of 
Theorem~\ref{thm:fmp-rmbc}, the finite countermodel $A$ constructed 
there is generated, as a Boolean subalgebra, by the set 
$G=S\cup\neg_B[S]\cup\circ_B[S]$, with $|G|=O(n)$; since the Boolean 
subalgebra generated by $k$ elements has at most $2^{(2^k)}$ elements,
\[
|A|\leq N(n):=2^{(2^{cn})}
\]
for some constant $c$. Enumerating all interpretations of the full 
signature on a universe of size up to $N(n)$, checking the BALFI 
equations, and testing all valuations of the variables of $\varphi$ 
then costs a further exponential on top of $N(n)$, for an overall 
bound of $2^{\big(2^{\big(2^{O(n)}\big)}\big)}$, i.e., 3-\textsf{EXPTIME}.

The following construction improves this bound by separating two 
quantities conflated in the approach: the algebra size $2^m$, 
doubly exponential in $m$, and the atom count $m$, only singly 
exponential in $n$.
%El siguiente lema de absorcion es un resultado tecnico, digamos que es la forma de obtener siempre una Balfi a partir de una boolean algebra cualquiera, es similar a lo que hicimos en la FMP con la tecnica de filtracion algebraica.
\begin{lemma}[Absorption]\label{lem:absorption}
Let $\mathbf{A}=\langle A,\wedge,\vee,\to,0,1\rangle$ be a Boolean 
algebra, $S\subseteq A$ a subset, and $\neg_A,\circ_A:A\to A$ 
operators satisfying:
\begin{enumerate}
\item[(i)] $a\vee\neg_A a=1$ and $a\wedge\neg_A a\wedge\circ_A a=0$ 
for all $a\in S$;
\item[(ii)] $\neg_A a={\sim}a$ and $\circ_A a=1$ for all $a\in A
\setminus S$,
\end{enumerate}
where ${\sim}a$ denotes the Boolean complement of $a$ in $A$. Then 
$\langle A,\wedge,\vee,\to,\neg_A,\circ_A,0,1\rangle$ is a BALFI.
\end{lemma}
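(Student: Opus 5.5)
The argument is a direct check of the definition of a BALFI, essentially the same case split used in Lemma~\ref{LS31} and Lemma~\ref{lem:general-transfer}(a), but now without any ambient BALFI $\mathbf{B}$. First I would note that the reduct $\langle A,\wedge,\vee,\to,0,1\rangle$ is a Boolean algebra by hypothesis. Also, $\neg_A$ and $\circ_A$ are given as total maps $A\to A$, so the expanded structure is an algebra of the right signature. It then remains to verify the two defining equations
\[
a\vee\neg_A a=1
\qquad\text{and}\qquad
a\wedge\neg_A a\wedge\circ_A a=0
\]
for every $a\in A$.

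I would split on whether $a\in S$. If $a\in S$, both equations are exactly hypothesis~(i). If $a\in A\setminus S$, hypothesis~(ii) gives $\neg_A a={\sim}a$ and $\circ_A a=1$. Then $a\vee\neg_A a=a\vee{\sim}a=1$ by the definition of Boolean complement. Likewise $a\wedge\neg_A a\wedge\circ_A a=(a\wedge{\sim}a)\wedge 1=0$. Since $S$ and $A\setminus S$ partition $A$, both equations hold everywhere, and the structure is a BALFI.

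There is no real obstacle. The only point to watch is that nothing here relies on $S$ being closed under $\neg_A$ or $\circ_A$, or on $\neg_A[S]\subseteq S$. This is fine, because both BALFI equations are pointwise in $a$: each one involves only the values $\neg_A a$ and $\circ_A a$ at the single element $a$, and never iterates the operators. This pointwise character is exactly why the lemma can serve as a reusable ``absorption'' step. Any partial specification of $\neg,\circ$ on a subset $S$ that satisfies the axioms locally extends to a BALFI by the classical default off $S$.
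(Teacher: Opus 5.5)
Your proposal is correct and follows essentially the same argument as the paper: the reduct is Boolean by hypothesis, the equations hold on $S$ by (i), and off $S$ they follow from $a\vee{\sim}a=1$ and $(a\wedge{\sim}a)\wedge 1=0$. Your observation that the BALFI equations are pointwise, so no closure of $S$ under $\neg_A$ or $\circ_A$ is needed, is also correct.
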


\begin{proof}
The reduct is a Boolean algebra by hypothesis. For $a\in S$, both 
equations hold by (i). For $a\in A\setminus S$, by (ii): 
$a\vee\neg_A a=a\vee{\sim}a=1$, and $a\wedge\neg_A a\wedge\circ_A a=
(a\wedge{\sim}a)\wedge 1=0$.
\end{proof}
%Para esta proposicion nos basamos en Sipser, principalmente en los ejemplos. Basicamente, lo que probamos es que existe un procedimiento para convencerse de que una fórmula NO es válida, y ese procedimiento sólo necesita tiempo exponencial.
\begin{proposition}[\textsf{coNEXPTIME} upper bound]\label{prop:nexptime}
$\mathrm{NonVal}(\mathrm{RmbC})\in\textsf{NEXPTIME}$, and consequently 
$\mathrm{Val}(\mathrm{RmbC})\in\textsf{coNEXPTIME}\subseteq
\textsf{2\text{-}EXPTIME}$.
\end{proposition}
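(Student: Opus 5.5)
We give a nondeterministic exponential-time procedure that accepts exactly the non-theorems of $\mathrm{RmbC}$. The key idea, as the preceding discussion suggests, is to guess the finite countermodel of Theorem~\ref{thm:fmp-rmbc} \emph{by its atoms} rather than by its full universe. Let $n=|\varphi|$ and $\Phi=\mathrm{Sub}(\varphi)$, so $|\Phi|\leq n+1+|\mathrm{Var}(\varphi)|=O(n)$. In the countermodel $\mathbf{A}$ built in the proof of Theorem~\ref{thm:fmp-rmbc}, the Boolean reduct is generated by $G=S\cup\neg_B[S]\cup\circ_B[S]$ with $|G|\leq 3|\Phi|=O(n)$. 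Hence $\mathbf{A}$ has at most $m\leq 2^{|G|}=2^{O(n)}$ atoms, and is isomorphic to $\wp(\{1,\dots,m\})$. Every element of $A$ is then representable as a bit-string of length $m$, i.e., of size $2^{O(n)}$.

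\textbf{Certificate.} The machine guesses:
\begin{enumerate}
\item a number $m\leq 2^{c n}$;
\item for each $\psi\in\Phi$, a bit-string $s_\psi\in\{0,1\}^m$, intended as $v(\psi)$;
\item for each $\psi\in\Phi$, bit-strings $n_\psi$ and $o_\psi$, intended as $\neg_A v(\psi)$ and $\circ_A v(\psi)$.
\end{enumerate}
The total certificate size is $O(n)\cdot 2^{O(n)}=2^{O(n)}$.

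\textbf{Verification.} The machine checks, in time polynomial in $2^{O(n)}$:
\begin{enumerate}
\item[(a)] the Boolean clauses: $s_{\alpha\#\beta}=s_\alpha\# s_\beta$ bitwise for $\#\in\{\wedge,\vee,\to\}$;
\item[(b)] the non-classical clauses: $s_{\neg\alpha}=n_\alpha$ and $s_{\circ\alpha}=o_\alpha$;
\item[(c)] consistency of the guessed operators: whenever $s_\alpha=s_\beta$, also $n_\alpha=n_\beta$ and $o_\alpha=o_\beta$, so that $\neg_A$ and $\circ_A$ are well-defined functions on $S=\{s_\psi\}$;
\item[(d)] the BALFI equations on $S$: $s_\alpha\vee n_\alpha=\mathbf{1}$ and $s_\alpha\wedge n_\alpha\wedge o_\alpha=\mathbf{0}$;
\item[(e)] refutation: $s_\varphi\neq\mathbf{1}$.
\end{enumerate}
All of these are $O(n^2)$ comparisons of strings of length $m$.

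\textbf{Soundness.} Suppose the checks succeed. Extend $\neg_A,\circ_A$ to all of $\wp(\{1,\dots,m\})$ by ${\sim}$ and $1$ outside $S$. Lemma~\ref{lem:absorption} then yields a BALFI. The valuation $v(p)=s_p$ (with $v(p)=0$ for variables outside $\varphi$) agrees with the $s_\psi$ on $\Phi$, by induction using (a)--(c). Thus $v(\varphi)\neq 1$, and by Theorem~\ref{thm:completeness-RmbC} we get $\nvdash_{\mathrm{RmbC}}\varphi$.

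\textbf{Completeness.} If $\nvdash_{\mathrm{RmbC}}\varphi$, Theorem~\ref{thm:fmp-rmbc} gives the model $\mathbf{A}$ with at most $2^{|G|}$ atoms. Writing its elements as atom-sets provides an accepting certificate.

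\textbf{Conclusion and main obstacle.} This shows $\mathrm{NonVal}(\mathrm{RmbC})\in\textsf{NEXPTIME}$, hence $\mathrm{Val}(\mathrm{RmbC})\in\textsf{coNEXPTIME}$. The inclusion $\textsf{coNEXPTIME}\subseteq\textsf{2-EXPTIME}$ follows by deterministically enumerating all $2^{2^{O(n)}}$ certificates. The main point needing care is the atom bound: a Boolean algebra generated by $k$ elements has at most $2^k$ atoms, coming from the nonzero meets of generators or their complements. I also need to make sure that only the restriction of the operators to $S$ must be guessed, and not their values on all $2^m$ elements. Lemma~\ref{lem:absorption} is exactly what removes that exponential blow-up.
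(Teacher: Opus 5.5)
Your proposal is correct and matches the paper's proof essentially step for step. The certificate is the same: bitstrings over the $m\le 2^{O(n)}$ atoms for the subformula values and for $\neg,\circ$ applied to them. The checks are the same: compositional coherence, functionality, the BALFI equations on the support, and refutation. Soundness goes through the Absorption Lemma, and completeness through the countermodel built in the proof of Theorem~\ref{thm:fmp-rmbc}.
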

%La prueba es no deterministica, este tipo de pruebas son estándar en complejidad. Para la prueba nos basamos en un trabajo conocido Halpern y Moses (1992); y tambien en otro de RICHARD E. LADNER, tambien prueban la FMP y hacen el calculo deteminista, sin embargo luego recurren a "adivinar" y "verificar", para llegar a NEXPTIME.
\begin{proof}
Let $\varphi$ have $n$ connective occurrences and 
$\Phi=\mathrm{Sub}(\varphi)$, $|\Phi|\leq n$.
% Lo que sigue es justamente lo nuevo, que fue usar la represntación de toda boolean  mediante sus átomos. Además hay otra cuenta que sale de la desigualdas 2^m \leq 2^{2^{cn}} y aplicar logaritmo.
% EL concepto de tiras de bits es de complejidad, es una tira de ceros y unos, pero no es aleatorio,

\emph{Atom count.} By Theorem~\ref{thm:fmp-rmbc}, any countermodel can 
be taken to be a finite BALFI $A$ with $|A|\leq N(n)=2^{2^{cn}}$. Since 
$|A|=2^{|\mathrm{At}(A)|}$, the atom count satisfies $m:=
|\mathrm{At}(A)|\leq\log_2 N(n)=2^{cn}$, singly exponential in $n$. We 
identify $A\cong\wp([m])=\wp(\{1,2,...,m\})$, so each element is representable as 
a bitstring of length $m$.
% Dada una formula que no es teorema, encontramos un testigo, a ese testigo, en teoria de la complejiad le llaman "certificado"

\emph{Certificate.} A certificate for $\varphi\in\mathrm{NonVal}
(\mathrm{RmbC})$ consists of: an integer $m$ with $1\leq m\leq 2^{cn}$ 
in binary; for each $\psi\in\Phi$, a bitstring $s_\psi\in\{0,1\}^m$; 
for each $\alpha\in\Phi$, bitstrings 
$\eta_\alpha,\kappa_\alpha\in\{0,1\}^m$. Total length: $2^{O(n)}$.

\emph{Verification.} The verifier checks, using bitwise operations of 
length $m$:
\begin{itemize}
\item[\textbf{(C1)}] \emph{Compositional coherence}: for each 
$\psi\in\Phi$, if $\psi=\alpha\#\beta$ then $s_\psi=s_\alpha\#s_\beta$; 
if $\psi=\neg\alpha$ then $s_\psi=\eta_\alpha$; if $\psi=\circ\alpha$ 
then $s_\psi=\kappa_\alpha$.
\item[\textbf{(C2)}] \emph{BALFI equations on the support}: for each 
$\alpha\in\Phi$, $s_\alpha\cup\eta_\alpha=[m]$ and $s_\alpha\cap
\eta_\alpha\cap\kappa_\alpha=\emptyset$.
\item[\textbf{(C3)}] \emph{Functionality}: for all 
$\alpha,\beta\in\Phi$ with $s_\alpha=s_\beta$: $\eta_\alpha=\eta_\beta$ 
and $\kappa_\alpha=\kappa_\beta$.
\item[\textbf{(C4)}] \emph{Refutation}: $s_\varphi\neq[m]$.
\end{itemize}
Verification time: $2^{O(n)}$.

\emph{Soundness of the certificate procedure.} Let $A:=\wp([m])$ 
with the standard Boolean structure. Set $S:=\{s_\alpha:\alpha\in\Phi\}$ 
and define $\neg_A,\circ_A:\wp([m])\to\wp([m])$ by 
$\neg_A a:=\eta_\alpha$ if $a=s_\alpha$ for some $\alpha\in\Phi$, else 
$[m]\setminus a$; and $\circ_A a:=\kappa_\alpha$ if $a=s_\alpha$, else 
$[m]$. These are well-defined by (C3). By Lemma~\ref{lem:absorption}, 
$\mathbf{A}$ is a BALFI: condition (i) holds on $S$ by (C2), condition 
(ii) holds off $S$ by definition. Define 
$v:\mathrm{For}(\Sigma)\to\wp([m])$ by $v(p)=s_p$ for atomic 
$p\in\Phi$, $v(p)=\emptyset$ otherwise, extended homomorphically. By 
induction on $\psi\in\Phi$ using (C1): $v(\psi)=s_\psi$ for all 
$\psi\in\Phi$. By (C4), $v(\varphi)=s_\varphi\neq[m]=1_A$. Since 
$\mathbf{A}\in\mathsf{BI}$ and $v$ refutes $\varphi$, 
Theorem~\ref{thm:completeness-RmbC} (completeness of $\mathrm{RmbC}$ 
with respect to $\mathsf{BI}$) gives $\nvdash_{\mathrm{RmbC}}\varphi$, 
i.e., $\varphi\in\mathrm{NonVal}(\mathrm{RmbC})$.

\emph{Completeness of the certificate procedure.} By 
Theorem~\ref{thm:fmp-rmbc}, there exist $B\in\mathsf{BI}$ and $v_B$ 
with $v_B(\varphi)\neq 1$ and $|B|\leq 2^{2^{cn}}$. Let 
$m=|\mathrm{At}(B)|\leq 2^{cn}$, identify $B\cong\wp([m])$, 
and set $s_\psi:=v_B(\psi)$, $\eta_\alpha:=\neg_B(v_B(\alpha))$, 
$\kappa_\alpha:=\circ_B(v_B(\alpha))$ for all $\psi,\alpha\in\Phi$. 
Conditions (C1)--(C4) hold since $v_B$ is a homomorphism, $B$ is a 
BALFI, $\neg_B,\circ_B$ are functions, and $v_B(\varphi)\neq 1_B=[m]$. 
The certificate has size $2^{O(n)}$ and is accepted.

Finally, $\textsf{coNEXPTIME}\subseteq\textsf{2\text{-}EXPTIME}$ by 
the standard simulation 
$$\textsf{NTIME}(f(n))\subseteq
\textsf{DTIME}(2^{O(f(n))}) \ \mbox{ with $f(n)=2^{O(n)}$.}$$
\end{proof}

\begin{remark}
The certificate operates entirely at the level of atoms; 
Lemma~\ref{lem:absorption} ensures that no verification outside the 
subformula support is needed, which is what separates this bound from 
the naive 3-\textsf{EXPTIME} estimate above. The gap between the \textsf{coNP} lower 
bound and the \textsf{coNEXPTIME}/2-\textsf{EXPTIME} upper bound remains large.
\end{remark}

\begin{remark}[Speculative]
We conjecture, without proof, that $\mathrm{Val}(\mathrm{RmbC})$ may 
be in \textsf{PSPACE}. We leave this, and the question of whether the \textsf{coNP} 
lower bound can be strengthened, as open problems.
\end{remark}

\section{Conclusion}
\label{sec:conclusion}

This paper investigated how far the self-extensional, paraconsistent 
character of $\mathrm{RmbC}$ can be pushed once the replacement 
property is secured, addressing this question along three 
complementary axes: the algebraic classification of which axiomatic 
extensions preserve paraconsistency, the finite model property and 
decidability of $\mathrm{RmbC}$ and its principal extensions, and the 
computational complexity of the validity problem.

On the classification side (Section~\ref{sec:classification}), we 
mapped the subvariety structure generated by the fourteen principal 
consistency axioms and identified the minimal explosive combinations 
among them. Two representative cases, $\mathsf{BI}(\mathrm{cl},
\mathrm{ca}_\vee)$ and $\mathsf{BI}(d1,\mathrm{ca}_\to)$, were shown 
to force $a\wedge\neg a=0$ via distinct algebraic arguments 
(Theorems~\ref{thm:cl-cav} and~\ref{thm:d1-cato}), and were then 
shown to coincide as varieties (Proposition~\ref{prop:coincidence}). 
The remaining four explosive pairs follow from these two by inclusion 
of varieties (Corollary~\ref{cor:explosive-pairs}); in fact, all six 
explosive combinations identified in this subsection coincide with a 
single variety $K$, consisting of those BALFIs in which $\neg$ is the 
Boolean complement and $\circ$ is identically $1$ 
(Proposition~\ref{prop:explosive-unification}) (the apparent 
diversity of explosive combinations reduces to one algebraic 
phenomenon). This classification yields, as a direct corollary, that 
$\mathrm{RCila}$, the self-extensional version of da Costa's 
$C_1$, is not paraconsistent (Proposition~\ref{prop:rcila-explosive}), 
since it contains the explosive pair $(\mathrm{ci},\mathrm{ca}_\to)$ 
as a sublogic; this gives a purely algebraic explanation of a 
phenomenon previously established via a different argument 
in~\cite{CarnielliConiglioFuenmayor2022}. More interestingly, we 
identified a natural $C_1$-like LFI with replacement, $\mathrm{RC_1^{i2}}$, obtained by 
replacing $(\mathrm{ci})$ with the strictly weaker $(i2)$ and 
replacing $(\mathrm{cl})$ with the propagation axioms 
$(\mathrm{a}_\circ)$ and $(\mathrm{a}_\neg)$, which retains the full 
propagation of consistency under the Boolean connectives while 
remaining genuinely paraconsistent (Theorem~\ref{thm:rcila-minus}).

A separate structural boundary was identified for the combination 
$\{(\mathrm{ce}),(\mathrm{cf})\}$: no finite BALFI can satisfy both 
axioms while remaining paraconsistent (Corollary~\ref{cor:ce-cf}), 
since together they force $\neg$ to be involutive, and every finite 
Boolean algebra with an involutive, excluded-middle-respecting 
operator collapses to the classical complement 
(Theorem~\ref{thm:involutive}). This obstruction is purely 
combinatorial, independent of the consistency operator, and is 
strictly finitary: infinite paraconsistent models of 
$\{(\mathrm{ce}),(\mathrm{cf})\}$ do exist 
(Example~\ref{ex:infinite-ce-cf}).

On the decidability side (Sections~\ref{sec:fmp} 
and~\ref{sec:transfer}), we proved the finite model property for 
$\mathrm{RmbC}$ with respect to BALFI semantics via an algebraic 
filtration (Theorem~\ref{thm:fmp-rmbc}), settling a question left 
open in~\cite{CarnielliConiglioFuenmayor2022}. This result is 
non-trivial precisely because the variety $\mathsf{BI}$ is not 
locally finite (Proposition~\ref{prop:non-locally-finite}), so 
decidability could not be obtained from general local-finiteness 
arguments alone. We then isolated the common structure underlying 
every transfer argument as a single general construction 
(Lemma~\ref{lem:general-transfer}), parametrized by a support region 
and an off-support default for the consistency operator, and used it 
to transfer the finite model property to thirteen of the fourteen 
principal axiomatic extensions: the immediate support with constant 
default $1$ suffices for $(\mathrm{ciw})$, $(\mathrm{ci})$, 
$(\mathrm{cl})$, $(d1)$, $(d2)$, and $(\mathrm{a}_\circ)$, while an 
enlarged, complement-closed support handles $(\mathrm{cf})$ within 
the same theorem (Theorem~\ref{thm:transfer-basic}); a two-region 
support with constant default $0$ handles $(\mathrm{ca}_\#)$ 
(Theorem~\ref{thm:transfer-ca}); and a two-level support handles 
$(i1)$ and $(i2)$ (Theorems~\ref{thm:transfer-i1} 
and~\ref{thm:transfer-i2}) and $(\mathrm{a}_\neg)$ 
(Theorem~\ref{thm:transfer-aneg}). Notably, $(i1)$ and 
$(i2)$, syntactically parallel weakenings of $(\mathrm{ci})$, required 
genuinely different off-support defaults for the construction to 
succeed (Remark~\ref{rem:i1-i2-asymmetry}), one of several places in 
this paper where axioms that look symmetric on paper behave 
asymmetrically at the algebraic level.

The sole exception is $(\mathrm{ce})$, and it is worth emphasizing 
that its exclusion has two logically independent sources. When 
combined with $(\mathrm{cf})$, the obstruction is definitive: no 
finite paraconsistent model exists, regardless of technique 
(Corollary~\ref{cor:ce-cf}). Taken alone, however, $(\mathrm{ce})$ 
does not force $\neg$ to be involutive, and 
Theorem~\ref{thm:involutive} does not apply; what we showed instead 
(Section~\ref{sec:transfer-ce}) is that the filtration technique of 
this paper specifically fails to extend to $\mathrm{RmbC}(\mathrm{ce})$, 
because verifying the axiom on the filtered algebra would require 
closing the support under the full orbit of $\neg_B$, which can be 
infinite. Whether $\mathrm{RmbC}(\mathrm{ce})$ is decidable therefore 
remains genuinely open.

On the complexity side (Section~\ref{sec:complexity}), we located the 
validity problem for $\mathrm{RmbC}$ between coNP-hardness 
(Corollary~\ref{cor:conp-hard}) and 2-\textsf{EXPTIME} 
(Proposition~\ref{prop:nexptime}). The upper bound improves on the 
naive 3-\textsf{EXPTIME} bound obtained directly from the filtration by 
separating the algebra size from its atom count: the Absorption Lemma 
(Lemma~\ref{lem:absorption}) shows that a certificate operating 
entirely at the level of atoms suffices, placing the problem in 
\textsf{coNEXPTIME} and hence in 2-EXPTIME. The gap between \textsf{coNP} and 
2-\textsf{EXPTIME} remains considerable; we conjecture, without proof, that 
the problem is actually in \textsf{PSPACE}, which is not immediate and 
is left for future work.

Beyond the specific results obtained here, the algebraic filtration 
technique developed in this paper suggests a general method for 
establishing the finite model property in varieties of Boolean 
algebras expanded with non-classical operators, whenever the 
additional axioms can be absorbed by a canonical extension outside 
the subformula support, as made precise by 
Lemma~\ref{lem:general-transfer}. The classification of 
Section~\ref{sec:classification}, together with the boundary 
identified for $(\mathrm{ce})$, indicates that this absorption is 
possible for a broad family of consistency axioms but breaks down 
precisely when an axiom forces control over an unbounded orbit under 
the non-classical negation. Exploring how far this method extends to 
other self-extensional LFIs, and resolving the decidability of 
$\mathrm{RmbC}(\mathrm{ce})$ and the precise complexity of 
$\mathrm{RmbC}$, remain natural directions for future work.
Proof-theoretic methods offer a valuable alternative for studying the complexity and decidability of several LFIs with replacement, particularly for the cases left open in the present study. We hope to carry out this task in future work.

\section*{Acknowledgements}

This research was financed by Fundação de Amparo à Pesquisa do Estado de São Paulo (FAPESP, Brazil), grants 2020/16353-3 and 2024/13413-6. Coniglio also acknowledges support by an individual research grant from the National Council for Scientific and Technological Development (CNPq, Brazil), grant 309830/2023-0.

\end{document}